\DeclarePairedDelimiter\ket{\lvert}{\rangle}
\DeclarePairedDelimiterX\braket[2]{\langle}{\rangle}{#1\,\delimsize\vert\,\mathopen{}#2}
\pgfplotsset{compat=1.17,colormap/blackwhite}
\newcommand\inner[2]{\left\langle #1, #2 \right\rangle}
\newcommand{\CC}{\mathbb{C}}
\newcommand{\NN}{\mathbb{N}}
\newcommand{\ZZ}{\mathbb{Z}}
\newcommand{\pa}[1]{\left( #1 \right)}
\newtheorem{teo}{Theorem}
\newtheorem{prop}{Proposition}
\newtheorem{remark}{Remark}
\newtheorem{lemma}{Lemma}
\newtheorem{definition}{Definition}
\newtheorem{cor}{Corollary}
\def\norm#1{\left\|#1\right\|}
\newcommand{\f}[1]{\mathcal{#1}}
\newcommand{\n}[1]{\mathbb{#1}}
\newcommand{\ab}[1]{\left| #1 \right|}
\title{Higher Abelian Quantum Double Models}
\author{Jorge Acu\~{n}a Flores}%
\address{Facultad De Matem\'{a}ticas, Pontificia Universidad Cat\'{o}lica De Chile, Santiago, Chile}%
\email{coke.acuna@uc.cl}
\author{Giuseppe De Nittis}%
\address{Facultad de Matem\'{a}ticas \& Instituto de F\'{i}sica, Pontificia Universidad Cat\'olica de Chile, Santiago, Chile}
\email{gidenittis@uc.cl}%
\author{Javier Lorca Espiro}%
\address{Departamento de Ciencias F\'{i}sicas, Facultad de Ingenier\'{i}a y Ciencias, Universidad de La Frontera, Avda. Francisco Salazar 01145, Casilla 54-D Temuco, Chile.}%
\email{javier.lorca@ufrontera.cl}
\date{}
\begin{document}
\maketitle

\begin{abstract}
    This paper focuses on the generalized version of the quantum double model on arbitrary $N$-dimensional simplicial complexes with finite local regularity. The core of our analysis is a detailed characterization of the frustration-free ground state space $\mathrm{FG}_{\mathrm{QDM}}(\mathfrak{A})$. A central result is the construction of the algebra of logical operators $\mathfrak{A}_{\mathrm{log}} := \mathfrak{K}'/\mathfrak{J}$, where the redundancy ideal $\mathfrak{J}$ quotients out operators that act trivially on the ground state space. We prove a homeomorphism between the state space of $\mathfrak{A}_{\mathrm{log}}$ and $\mathrm{FG}_{\mathrm{QDM}}(\mathfrak{A})$, effectively classifying all frustration-free ground states. This logical algebra is shown to exhibit generalized Canonical Commutation Relations (CCR). When the relevant (co)homology groups are finite, $\mathfrak{A}_{\mathrm{log}}$ is isomorphic to $C(X_c) \otimes \mathcal{B}(\mathfrak{h}_q)$, revealing that the ground state space can encode $c$ classical bits and $q$ quantum bits (qubits), providing a precise measure of its information storage capacity.
\end{abstract}

\section{Introduction}

Topological phases of matter have emerged as a central paradigm in modern condensed matter physics, offering a rich landscape of phenomena that go beyond the traditional Landau-Ginzburg framework of spontaneous symmetry breaking. These phases are characterized by robust topological properties, such as ground-state degeneracy dependent on the system's geometry and the presence of anyonic excitations with non-trivial braiding statistics (see, for instance, \cite{2008PhRvL.101a0504L}, \cite{2010PhRvL.104m0502F}, and references therein). Among the most prominent models exhibiting topological order is Kitaev's toric code \cite{Kitaev_2003}, a two-dimensional lattice system with spins residing on the links and a Hamiltonian composed of mutually commuting \textit{star} and \textit{plaquette} operators. The toric code has a fourfold degenerate ground state manifold, protected against local perturbations by a spectral gap, and its excitations are abelian anyons. While these anyons are not sufficient for universal quantum computation, the model provides a paradigmatic example of topological protection and anyonic statistics, and its simplicity and exact solvability make it a valuable tool for exploring topological quantum memory.

\medskip

The toric code has a well-known family of generalizations called quantum double models (QDMs) \cites{Cui_2020,Bols_2025}. These extend the framework from spin systems to discrete gauge theories with finite gauge groups and support anyonic excitations that can be used for fault-tolerant quantum computation. However, standard QDMs usually require planar qubit layouts and exhibit a vanishing encoding rate as the code distance increases, limiting their scalability.

\medskip

Going beyond the restrictions of surface codes is a significant theoretical challenge. One promising approach involves models with generalized symmetries, particularly \textit{higher gauge models}. Among them, models constructed via homological and cohomological techniques \cite{dealmeida2017topological} preserve key topological features of QDMs, such as ground state degeneracy and entanglement properties, while allowing definitions on arbitrary compact $n$-dimensional manifolds. In the $N$-dimensional abelian case, \cite{Vrana} shows that the structure of frustration-free ground states is governed by the homology and locally finite cohomology of the underlying CW-complex, forming a logical algebra capable of encoding classical and quantum information, and that localized excitations are classified by homology and cohomology at infinity. A crucial difference with Vrana’s work is that we incorporate relations between different gauge levels (higher gauge structure), which prevents the decomposition of the Hamiltonian into independent terms per dimension. This necessitates the use of Brown homology and cohomology, as in \cite{dealmeida2017topological}, to rigorously describe the ground state space. In the case of a compact manifold, the classification of the ground state subspace has been carried out from a more geometrical approach in \cite{Lorca2025}, while a rigorous understanding of its relation to the excited states has yet to be established. This work represents a first step in that direction.

\medskip

In this paper, we provide a unified and rigorous $C^*$-algebraic framework for analyzing topological order and quantum memory in higher abelian quantum double models (HA-QDMs). We extend the constructions of \cite{dealmeida2017topological} to infinite $N$-dimensional simplicial complexes in a way that guarantees stability, in the spirit of the results of \cite{Vrana} in $N$ dimensions, while accommodating the inter-level gauge relations absent in previous works. The paper is organized as follows: Section \ref{sec:description} introduces the geometric and algebraic foundations of HA-QDMs, including the construction of the operator algebra of observables and the precise definition of the Hamiltonian. Section \ref{sec:analysis} characterizes the frustration-free ground states using $C^*$-algebra techniques and classifies them via homology and cohomology groups. We also discuss implications for quantum computation and error correction. Overall, the results establish a robust algebraic foundation for understanding topological order and quantum memory in higher dimensions.

\medskip

 \noindent
{\bf Acknowledgments.} J.A.F. was supported by the grand \emph{ANID Beca Magister Nacional folio 22221193}.
GD's research is supported by the grant \emph{Fondecyt Regular - 1230032}. J.L.E. thanks the \emph{FONDECYT grant N°11241170} for the support during this research.

\section{\label{sec:description}Description of the Model }

\subsection{Geometric and algebraic data}\label{sec:basic_F}
Let $\f{K}$ be a simplicial complex, where $\f{K}_n\subset \f{K}$ denotes the set of the $n$-simplexes
and $n\in \n{N}_0:=\n{N}\cup\{0\}$. 
We will say that $\f{K}$ is a simplicial $N$-complex if $\f{K}_n=\emptyset$ for all $n>N$ and $N$ is referred as the \emph{dimension} of the complex. We define the \emph{order} of a simplex $x \in \f{K}_n$ as the number of simplices $ y \in \f{K}_{n+1}$ that contain $x$ as a face. We say that the simplicial complex $\f{K}$ is \emph{$L$-regular} if, for all $x \in \f{K}$, the order of $x$ does not exceed  $L$. In this work, we will always work with $N$-dimensional $L$-regular
simplicial complexes with finite  $N,L\in\n{N}$. Appendix \ref{sec:simplicial_complex} contains a brief 
summary of the main properties of the theory of simplicial complexes.

\medskip

Let $\f{C}_n$ be the \emph{free abelian group} with basis $\f{K}_n$ and $(\f{C}_{\bullet},\partial_\bullet^\f{C})$
the associated chain complex. We will refer to $(\f{C}_{\bullet},\partial_\bullet^\f{C})$ as the \emph{geometric data}.
 We also define a second chain complex $(\n{G}_\bullet ,\partial_\bullet ^\n{G})$ where $\n{G}_\bullet:=\{\n{G}_j\}_{j\in\n{Z}}$
  is a sequence of \emph{finite abelian groups} linked by  groups morphisms $\partial_{j}^\n{G}: \n{G}_j\to \n{G}_{j-1}$ such that   $\partial_j^\n{G} \circ \partial_{j+1}^\n{G}=0$. 
 We also need the compatibility conditions $\n{G}_j=0$ if $j>N$ or if $j<0$.  
  We will call 
  $(\n{G}_\bullet ,\partial_\bullet ^\n{G})$ the \emph{algebraic data}.
\begin{definition}[$p$-maps]
	Let $p\in\n{Z}$. A $p$-map $t:(\f{C}_{\bullet},\partial_\bullet^\f{C})\to(\n{G}_\bullet,\partial_\bullet^\n{G})$ is a sequence of morphisms $t_n:\f{C}_n\to \n{G}_{n-p}$ commuting with the chain maps $\partial_\bullet^\f{C}$ and $\partial_\bullet ^\n{G}$. A $p$-map is also referred as a \textit{configuration}. The set of all $p$-maps will be denoted by ${\rm Hom}(\f{C},\n{G})^p$.
\end{definition}

\medskip

One has that  ${\rm Hom}(\f{C},\n{G})^p$  is actually an abelian group under the sum defined by $(f+t)_n=f_n+t_n$. The unit of this group is the trivial $p$-map, denoted by $0$, defined by the trivial morphisms $0_n:\f{C}_n\to \n{G}_{n-p}$. 
Diagrammatically, an element $t\in{\rm Hom}(\f{C},\n{G})^0$ and an element $g\in{\rm Hom}(\f{C},\n{G})^1$ can be represented as
\begin{figure}[H]
	\centering
	\begin{tikzcd}[row sep=1cm, column sep=2cm]
		\dots\arrow{r} & \f{C}_{n+1}\arrow{r}{\partial_{n+1}^\f{C}}\arrow[red]{d}{t_{n+1}}\arrow[blue]{dr}{g_{n+1}}& \f{C}_n\arrow{r}{\partial_n^\f{C}}\arrow[red]{d}{t_n}\arrow[blue]{dr}{g_n}&  \f{C}_{n-1}\arrow{r}{\partial_{n-1}^\f{C}}\arrow[red]{d}{t_{n-1}}\arrow[blue]{dr}{g_{n-1}}&\dots\\
		\dots\arrow{r}& \n{G}_{n+1}\arrow{r}[swap]{\partial_{n+1}^\n{G}}& \n{G}_n\arrow{r}[swap]{\partial_n^\n{G}}&  \n{G}_{n-1}\arrow{r}[swap]{\partial_{n-1}^\n{G}}&\dots
	\end{tikzcd}
	\caption{A $0-$map (in red) and a $1-$map (in blue)}
	\label{fig:configurations}
\end{figure}
\begin{definition}[Cohomology]\label{def:delta^p map}
	Let  $\delta^{p}:{\rm Hom}(\f{C},\n{G})^{p}\to {\rm Hom}(\f{C},\n{G})^{p+1}$ be the map defined by
	\begin{equation*}
		(\delta^{p}t)_n\;:=\;t_{n-1}\circ\partial_n^\f{C}-(-1)^{p}\partial_{n-p}^\n{G}\circ t_n\;.
	\end{equation*}
	It results that $( {\rm Hom}(\f{C},\n{G})^\bullet  , \delta^\bullet )$ is a cochain complex, namely
	\begin{align}\label{pseudochain}
		\cdots \xleftarrow{} \;{\rm Hom}(\f{C},\n{G})^{p+2} \;\xleftarrow{\delta^{p+1}}\; {\rm Hom}(\f{C},\n{G})^{p+1} \;\xleftarrow{\delta^{p}}\; {\rm Hom}(\f{C},\n{G})^{p}\; \xleftarrow{} \cdots \quad ,
	\end{align}
	such that $\delta^{p+1} \circ \delta^p=0$. We will denote with 
$H^p(\f{C},\n{G}):={\rm Ker}{(\delta^p)}/{\rm Im}{(\delta^{p-1})}$	
	the associated cohomology groups.
\end{definition}

\medskip

We will also need to consider the subsets  ${\rm Hom}_0(\f{C},\n{G})^p$ consisting of all \emph{compactly supported} $p$-maps $t$ such that for any $n\in\n{N}_0$ it holds that $t_n(x)=0$ except for a finite number of $x\in \f{K}_n$. The set of $x\in \f{K}$ such that $t_n(x)\neq0$ for some $n\in\n{N}_0$
will be called the \emph{support} of $t$.
Evidently ${\rm Hom}_0(\f{C},\n{G})^p\subset {\rm Hom}(\f{C},\n{G})^p$ and one can restrict the map $\delta^p$ to this subspace. It results that $({\rm Hom}_0(\f{C},\n{G})^p,\delta^\bullet)$ is a cochain complex 
\begin{align}\label{finite cochain}
	\cdots\; \xleftarrow{}\; {\rm Hom}_0(\f{C},\n{G})^{p+2} \;\xleftarrow{\delta^{p+1}}\; {\rm Hom}_0(\f{C},\n{G})^{p+1}\;\xleftarrow{\delta^{p}} \;{\rm Hom}_0(\f{C},\n{G})^p \;\xleftarrow{} \cdots \;\quad \;,
\end{align}
and the associated \emph{finite} cohomology groups are defined as usually as  $\mathring{H}^{p}(\f{C},\n{G}):={\rm Ker}( \delta^p)/{\rm Im}( \delta^{p-1})$.

\medskip

 In the relevant case where $\f{K}$ is finite (this is the case for topological quantum computation \cite{Pachos}) one has the coincidence $\mathring{H}{p}(\f{C},\n{G})=H^p(\f{C},\n{G})$. 
 In this situation, as a consequence of an important result by Brown in \cite{brown}, one gets the isomorphisms 
 \begin{equation}\label{mainthr}
 H^p(\f{C},\n{G})\;\simeq\; \prod_{n\in\n{N}_0}H^n(\f{C}, H_{n-p}(\n{G}))\;,\qquad p\in\n{Z}\;,
 \end{equation}
 where $H_{\bullet}(\n{G})$ denotes the homology of $\n{G}$ and $\prod$ is the direct product.
 A detailed proof of 
this result can be found in  \cite[Appendix B]{dealmeida2017topological}.

\medskip

The isomorphism \eqref{mainthr}  highlights a powerful way to compute the cohomology $H^\bullet(\f{C},\n{G})$ once one knows how to handle the groups $H^n(\f{C}, H_{j}(\n{G}))$. The latter can be computed by using the 
 Universal Coefficient Theorem \cite[Theorem 3.2]{Hatcher} which provides
 \begin{equation}\label{UCT}
H^n(\f{C}, H_{j}(\n{G}))\;\simeq\;{\rm Hom}(H_n(\f{C}),H_j(\n{G}))\;\oplus\; {\rm Ext}^1(H_{n-1}(\f{C}),H_j(\n{G}))
 \end{equation}
for every $n\in\n{N}_0$ and $j\in\n{Z}$. Here $H_\bullet(\f{C})$
denotes the homology of the chain complex $\f{C}$ (of free abelian groups).

\subsection{Characters and dualization}
An important role will be played by the dual of the construction above. For this, first of all, we take the (Pontryagin) dual groups  $\widehat{\n{G}}_j$ consisting of the set of irreducible characters of $\n{G}_j$. These are abelian group under point-wise multiplication \cite[Exercise 2.7]{issacs}, but for convenience we will use the additive notation, namely $\alpha+\beta:=\alpha\cdot \beta$ and $-\alpha:=\alpha^{-1}$ for $\alpha,\beta\in \n{G}_j$.
Let us recall that for any $g\in\n{G}_j$ and $\alpha\in \widehat{\n{G}}_j$ one gets that $\alpha(g)\in\n{U}(1)$ where we are identifying 
$\n{U}(1)$ with the set of complex numbers with modulus 1.

\medskip

The adjoint of the  boundary map $\partial_{j}^\n{G}$ is the map  $\widehat{\partial}_{j}^\n{G}:\widehat{\n{G}}_{j}\to\widehat{\n{G}}_{j+1}$ defined by $\widehat{\partial}_{j}^\n{G}(\alpha):=\alpha\circ \partial_{j}^\n{G}$. We will denote with $(\widehat{\n{G}}_\bullet ,\widehat{\partial}_\bullet ^\n{G})$ the cochain complex associated by dualizing   the algebraic data
	$(\n{G}_\bullet ,\partial_\bullet ^\n{G})$. We need to introduce the dual version of the 
$p$-maps defined above. For that we need the  coboundary operator $\widehat{\partial}^\f{C}_{n+1}:\f{C}_n\to \f{C}_{n+1}$ defined in Appendix \ref{sec:simplicial_complex}.

\begin{definition}[Dual $p$-maps]
Let $p\in\n{Z}$. A dual $p$-map  $\gamma:(\f{C}_{\bullet},\widehat{\partial}_\bullet^\f{C})\to(\widehat{\n{G}}_\bullet,\widehat{\partial}_\bullet^\n{G})$ is a sequence of morphisms $\gamma_n:\f{C}_n\to \widehat{\n{G}}_{n-p}$ commuting with the cochain maps $\widehat{\partial}_\bullet^\f{C}$ and $\widehat{\partial}_\bullet ^\n{G}$. A dual $p$-map is also referred as a \textit{representation}. The set of all dual $p$-maps will be denoted by ${\rm Hom}(\f{C},\n{G})_p$.
\end{definition}

\begin{remark}[Notation]
Let us point out that we will use Latin letters for elements of ${\rm Hom}(\f{C},\n{G})^p$ (configurations) and Greek letters for the elements of ${\rm Hom}(\f{C},\n{G})_p$ (representations). \hfill $\blacktriangleleft$
\end{remark}

Analogously to the $p$-maps, also the dual $p$-maps can be represented in a diagrammatic way. For example, a dual $1-$map can be represented as the following diagram:
\begin{figure}[!h]
	\centering
	\begin{tikzcd}[scale=1,row sep=1cm, column sep=2cm]
		\dots\arrow{dr}{\gamma_{n+2}}&\arrow{l}  \f{C}_{n+1}\arrow{dr}{\gamma_{n+1}}&\arrow{l}[swap]{\widehat{\partial}_{n+1}^\f{C}} \f{C}_n\arrow{dr}{\gamma_n} &\arrow{l}[swap]{\widehat{\partial}_n^\f{C}}\f{C}_{n-1} \arrow{dr}{\gamma_{n-1}}&\arrow{l}[swap]{\widehat{\partial}_{n-1}^\f{C}}\dots\\
		\dots&\arrow{l} \widehat{\n{G}}_{n+1}&\arrow{l}{\widehat{\partial}_{n+1}^{\n{G}}} \widehat{\n{G}}_n&\arrow{l}{\widehat{\partial}_n^{\n{G}}} \widehat{\n{G}}_{n-1}&\arrow{l}{\widehat{\partial}_{n-1}^{\n{G}}} \dots
	\end{tikzcd}
	\caption{\label{fig:dual map} dual 1-map}
\end{figure}

\begin{definition}[Homology]\label{def:delta_p map}
	Let  $\delta_{p}:{\rm Hom}(\f{C},\n{G})_{p}\to {\rm Hom}(\f{C},\n{G})_{p-1}$ be the map defined by
	\begin{equation*}
		(\delta_p\gamma)_n\;:=\;(\gamma_{n+1}\circ \widehat{\partial}_{n+1}^\f{C}) -(-1)^{p} (\widehat{\partial}^\n{G}_{n+1-p}\circ \gamma_n)\;.
	\end{equation*}
	It results that $( {\rm Hom}(\f{C},\n{G})_\bullet  , \delta_\bullet )$ is a chain complex, namely
	\begin{align}\label{chain complex}
	\cdots \xrightarrow{} {\rm Hom}(\f{C},\n{G})_{p+1} \xrightarrow{\delta_{p+1}} {\rm Hom}(\f{C},\n{G})_{p} \xrightarrow{\delta_{p}} {\rm Hom}(\f{C},\n{G})_{p-1} \xrightarrow{} \cdots \quad ,
\end{align}
such that $\delta_{p} \circ \delta_{p+1}=0$. We will denote with 
$H_p(\f{C},\n{G}):={\rm Ker}({\delta_p})/{\rm Im}({\delta_{p+1}})$	
	the associated homology groups.
\end{definition}

\medskip

Let ${\rm Hom}_0(\f{C},\n{G})_{p}$ be the set of all \emph{compactly supported} dual $p$-maps $\gamma$ such that for any $n\in\n{N}_0$ it holds that $\gamma_n(x)=0$ (the trivial character) except for a finite numbers of $x\in \f{K}_n$. The set of $x$ 
such that $\gamma_n(x)\neq 0$ for some $n\in\n{N}_0$ will be called the \emph{support} of $\gamma$.
By definition ${\rm Hom}_0(\f{C},\n{G})_{p}$ is a subset of ${\rm Hom}(\f{C},\n{G})_{p}$ and one can restrict the dual map $\delta_p$ to this subspace. It results that 
$( {\rm Hom}_0(\f{C},\n{G})_{\bullet}  , \delta_\bullet )$ is a chain complex:

\begin{align}
	\cdots \xrightarrow{} {\rm Hom}_0(\f{C},\n{G})_{p+1} \xrightarrow{\delta_{p+1}} {\rm Hom}_0(\f{C},\n{G})_{p} \xrightarrow{\delta_{p}} {\rm Hom}_0(\f{C},\n{G})_{p-1} \xrightarrow{} \cdots \quad ,
	\end{align}
and the associated \emph{finite} homology groups are $\mathring{H}_{p}(\f{C},\n{G}):={\rm Ker}({\delta_p})/{\rm Im}({\delta_{p+1}})$ as usual.

\medskip

Let $t\in {\rm Hom}(\f{C},\n{G})^{p}$ and $\gamma\in {\rm Hom}(\f{C},\n{G})_{p}$ (with at least one of them of finite support). The representation of the configuration $t$ by $\gamma$
is defined by
\begin{equation}\label{eq:g_f}
	\gamma(t)\;:=\; \prod_{x\in \f{K}}[\gamma(t)]_x\;\in\;\n{U}(1)
\end{equation}
 where for every $x\in \f{K}_n$
 \[
 [\gamma(t)]_x\;:=\;\gamma_n(x)[t_n(x)]\;\in\;\n{U}(1)
\;,
 \]
 meaning that the character $\gamma_n(x)\in\widehat{\n{G}}_{n-p}$ is evaluated on the group element $t_n(x)\in {\n{G}}_{n-p}$.

 \medskip
 
The proof of the following important result is postponed to Appendix \ref{sec:simplicial_complex} 
\begin{lemma}\label{lem:duality}
    Let $t\in {\rm Hom}(\f{C},\n{G})^{p}$ and $\gamma\in {\rm Hom}(\f{C},\n{G})_{p}$, with at least one of them of finite support. Then, the duality
    \begin{equation*}
        \gamma(\delta^p t)\;=\;(\delta_{p+1}\gamma)(t)
    \end{equation*}
    holds true.
\end{lemma}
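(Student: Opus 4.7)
The plan is to unfold both sides from their definitions and match them term-by-term, using the two dualization identities $\widehat{\partial}^\n{G}(\alpha) = \alpha \circ \partial^\n{G}$ and $\widehat{\partial}_n^\f{C} y = \sum_x [x:y]\, x$ (dual to $\partial_n^\f{C} x = \sum_y [x:y]\, y$), together with a Fubini-style interchange of the product over $\f{K}_n$ with that over $\f{K}_{n-1}$. The finite-support hypothesis on at least one of $t$ or $\gamma$ ensures that all products in sight reduce to finite ones, which makes the rearrangement purely formal.

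I would first substitute $(\delta^p t)_n(x) = t_{n-1}(\partial_n^\f{C} x) - (-1)^p \partial_{n-p}^\n{G}(t_n(x))$ into the pairing $\gamma(\delta^p t) = \prod_{n,x} \gamma_n(x)\bigl[(\delta^p t)_n(x)\bigr]$ and use that each $\gamma_n(x)$ is a group homomorphism into $\n{U}(1)$ to split the resulting expression into two factors, one involving $\partial^\f{C}$ and one involving $\partial^\n{G}$. These should ultimately match, after evaluation at $t_n(x)$, the two summands of $(\delta_{p+1}\gamma)_n(x) = \gamma_{n+1}(\widehat{\partial}_{n+1}^\f{C} x) - (-1)^{p+1}\widehat{\partial}_{n-p}^\n{G}(\gamma_n(x))$.

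The $\n{G}$-factor is handled immediately by the relation $\gamma_n(x) \circ \partial_{n-p}^\n{G} = \widehat{\partial}_{n-p}^\n{G}(\gamma_n(x))$, which converts $\gamma_n(x)[\partial_{n-p}^\n{G}(t_n(x))]$ into $[\widehat{\partial}_{n-p}^\n{G}(\gamma_n(x))](t_n(x))$; the sign matches using the character rule $\alpha(-g) = \alpha(g)^{-1}$ together with $-(-1)^p = (-1)^{p+1}$. The $\f{C}$-factor is the substantive step: expanding $\partial_n^\f{C} x$ in terms of incidence numbers and applying the homomorphism property yields a double product of factors of the form $\gamma_n(x)[t_{n-1}(y)]^{[x:y]}$; interchanging the order of the product over $x \in \f{K}_n$ and $y \in \f{K}_{n-1}$, the inner product over $x$ recombines into $\gamma_n\bigl(\sum_x [x:y]\, x\bigr)[t_{n-1}(y)] = \gamma_n(\widehat{\partial}_n^\f{C} y)[t_{n-1}(y)]$. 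A reindex $n \mapsto n+1$ then aligns this with the first summand of $(\delta_{p+1}\gamma)_n$ evaluated against $t_n$.

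The main obstacle I anticipate is careful sign and index bookkeeping: the parities of $p$ and $p+1$ must reconcile with the character-inversion rule, and the degree shifts across both chain complexes must stay consistent through the swap and the reindexing. The Fubini interchange itself is not a real analytic issue, since the finite-support hypothesis collapses every product to a finite one, where associativity and commutativity of $\n{U}(1)$ handle the rearrangement automatically.
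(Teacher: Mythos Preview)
Your proposal is correct and follows essentially the same route as the paper's proof: both expand the two sides via the definitions of $\delta^p$ and $\delta_{p+1}$, separate the contributions coming from $\partial^\n{G}$ (handled by the identity $\widehat{\partial}^\n{G}(\alpha)=\alpha\circ\partial^\n{G}$) from those coming from $\partial^\f{C}$, and for the latter expand in incidence numbers, interchange the finite products over $\f{K}_n$ and $\f{K}_{n-1}$, and reindex $n\mapsto n+1$. The only differences are notational (you write $[x:y]$ where the paper writes $\epsilon_{x,y}$, and the paper records the split as a numerator/denominator ratio rather than a product of two factors).
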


\subsection{Operator Algebra}
In this section, we will define an algebra that contains the relevant observables for our study.
For each $n\in\n{N}_0$ and  $x\in \f{K}_n$ consider the (finite dimensional) Hilbert space $\mathfrak{h}_x:=L^2(\n{G}_n)\simeq \n{C}^{|\n{G}_n|}$
endowed with the canonical
orthonormal basis  $\{\ket{g}\}_{g\in \n{G}_n}$. Let $\f{P}(\f{K})$ be the collection of finite subsets of $\f{K}$. Let us point out that every element of $\f{P}(\f{K})$ is a finite collection of vertices, edges, faces, and so on.
For every $\Lambda\in \f{P}(\f{K})$ 
define 
\begin{equation*}
	\mathfrak{H}_\Lambda\;:=\;\bigotimes_{x\in \Lambda}\mathfrak{h}_x
\end{equation*}
and $\mathfrak{A}_\Lambda:=\mathcal{B}(\mathfrak{H}_\Lambda)$ the $C^*$-algebra of bounded operators on $\mathfrak{H}_\Lambda$. If $\Lambda_1\subset \Lambda_2$ there is a natural inclusion $i_{\Lambda_2,\Lambda_1}:\mathfrak{A}_{\Lambda_1}\hookrightarrow \mathfrak{A}_{\Lambda_2}$ mapping $A\mapsto A\otimes \n{1}_{\Lambda_2\backslash\Lambda_1} $. The maps $i_{\Lambda_2,\Lambda_1}$ are isometric morphisms so we will abuse notation and identify $i_{\Lambda_2,\Lambda_1}(A)$ with $A$. For this net of algebras, one defines the \emph{local} and  \emph{quasi-local} algebras of observables as
\begin{equation*}
	\mathfrak{A}_{\rm loc}\;:=\;\bigcup_{\Lambda\in \f{P}(\f{K})}\mathfrak{A}_\Lambda\;,\qquad\quad\mathfrak{A}\;:=\;\overline{\mathfrak{A}_{\rm loc}}^{\;\norm{\cdot}}\;,
\end{equation*}
respectively. 
An observable $A\in \mathfrak{A}$ is said to be supported on $\Lambda$ if $A\in \mathfrak{A}_\Lambda$. The smallest set $\Lambda$ such that $A\in \mathfrak{A}_\Lambda$ is called the support of $A$. For $A\in \mathcal{B}(\mathfrak{h}_x)$ we use the notation $A_x$ to refer to the corresponding operator acting at the site $x$.
It can be shown that $\mathfrak{A}$ is in fact an UHF algebra \cite[Example 2.6.12]{Bratelli}.

\subsection{Local operators}
The models of interest in this work will be defined through an important pair of building block operators.
For every  $x\in \f{K}_n\subset \f{K}$ consider $\mathfrak{h}_x$ endowed with its canonical
orthonormal basis  $\{\ket{h}\}_{h\in \n{G}_n}$.
 For $g\in \n{G}_n$ and $\alpha\in \widehat{\n{G}}_n$ consider the operators $P_g$ and $Q_\alpha$
 defined on the orthonormal basis of $\mathfrak{h}_x$
 by
 \begin{equation*}
	P_g\ket{h}\;:=\:\ket{g+h}\;,\qquad\quad Q_{\alpha}\ket{h}\;:=\;\alpha(h)\ket{h}
\end{equation*}

\begin{remark}[Weyl matrices]\label{rk_lin_bas_mat}
These unitary operators can be seen as a generalization of the usual Pauli matrices $\sigma^1$ and $\sigma^3$ acting on $\n{C}^2$.
More precisely one can check that elements of the form $P_gQ_{\alpha}$ with $g\in \n{G}_n$ and $\alpha\in \widehat{\n{G}}_n$, usually called  \emph{Weyl matrices}, provide a linear basis for the full matrix algebra
$\f{B}(\mathfrak{h}_x)\simeq{\rm Mat}_{|{\n{G}}_n|}(\n{C})$. See \cite[Lemma 3.2]{weil_operators}.
\hfill $\blacktriangleleft$
\end{remark}

\medskip

In order to generalize and extend the definition of the building block operators above, we need to introduce
certain specific maps that only act non-trivially on one element of the simplicial complex.
\begin{definition}[Localized Maps]\label{def:localized maps}
	For every $x\in \f{K}_n$, $g\in \n{G}_{n-p}$ define $x^*_g\in {\rm Hom}_0(\f{C},\n{G})^p$ as 
	\begin{equation*}
		x^*_g(y)\;:=\;\left\{ \begin{array}{lcc} g & {\rm if} & x=y\;, \\ 
		 0 & {\rm if} & x\not=y\;. \\  \end{array} \right.
	\end{equation*}
Analogously, for every $\alpha\in \widehat{\n{G}}_{n-p}$ define $x_*^\alpha \in {\rm Hom}_0(\f{C},\n{G})_{-p}$ as
	\begin{equation*}
		( x_*^\alpha)(y)\;:=\;\left\{ \begin{array}{lcc} \alpha & {\rm if} & x=y\;, \\  0 & {\rm if} & x\not=y\;. \\   \end{array} \right.
	\end{equation*}
\end{definition}

\medskip

As the name indicates, the maps defined above are localized around $x\in \f{K}_n$, and can be used to define local operators.
First of all, notice that any element $t\in {\rm Hom}_0(\f{C},\n{G})^0$ can be written as a finite sum of $0$-maps of the form $x^*_{g_x}$ where for every $x\in \f{K}_n$ the 
element  $g_x\in \n{G}_n$ is defined by $g_x:=t(x)$. Then, it turns out that
\begin{equation*}
	t\;=\;\sum_{x\in \f{K}} x^*_{g_x}
\end{equation*}
and the associated \textit{shift  operator} is given by
$$P_t\;:=\;\prod_{x\in \f{K}}P_{ x^*_{g_x}}\;.
$$
Analogously, every $\gamma\in {\rm Hom}_0(\f{C},\n{G})_0$ can be  written as 
\begin{equation*}
	\gamma\;=\;\sum_{x\in \f{K}} x_*^{\alpha_x}
\end{equation*}
with $\alpha_x:=\gamma(x)$. The associated \textit{clock  operator} is then given by
 $$
 Q_\gamma\;:=\;\prod_{x\in \f{K}} x_*^{\alpha_x}\;.
 $$
In particular, one has that $P_0=Q_0=\n{1}$ where   $0$ denotes the trivial homomorphisms. 
 By a direct computation (see \cite[Appendix D]{dealmeida2017topological}) one can check the group relations
\begin{equation}\label{relatP-Q}
	P_{t_1}P_{t_2}\;=\;P_{t_1+t_2}\;,\qquad\quad Q_{\gamma_1}Q_{\gamma_2}\;=\;Q_{\gamma_1+\gamma_2}
\end{equation}
for every $t_1,t_2\in {\rm Hom}_0(\f{C},\n{G})^0$ and every $\gamma_1,\gamma_2\in {\rm Hom}_0(\f{C},\n{G})_0$. Moreover, the product between a $P_t$ and a $Q_\gamma$ is fixed by the  Weyl-type 
relation
	\begin{equation}\label{eq:commutation_P_and_Q}
		Q_\gamma P_t\;=\;\gamma(t)P_tQ_\gamma
	\end{equation}
valid for every $t\in {\rm Hom}_0(\f{C},\n{G})^0$ and $\gamma\in {\rm Hom}_0(\f{C},\n{G})_0$
with $\gamma(t)$ given by \eqref{eq:g_f}.
In particular, the $P_t$ and $Q_\gamma$ are unitary operators.
The following result will be used several times in this work.
\begin{lemma}[Linear independent generators]\label{lemm:lin_indip_g}
Every $R\in \mathfrak{A}_{\rm loc}$ can be expressed in a unique way as finite sum
\[
R\;=\;\sum_{i=1}^n\lambda_{i}P_{t_i}Q_{\gamma_i}
\]
for some $n\in\n{N}$ with $t_i\in {\rm Hom}_0(\f{C},\n{G})^0$, $\gamma_i\in {\rm Hom}_0(\f{C},\n{G})_0$ and $\lambda_i\in\n{C}\setminus\{0\}$ for every $i=1,\ldots,n$.
\end{lemma}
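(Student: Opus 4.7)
The plan is to reduce the claim to the finite-dimensional statement about Weyl matrices (Remark on Weyl matrices) via the tensor product structure of $\mathfrak{A}_\Lambda$. Given $R\in\mathfrak{A}_{\rm loc}$, by definition of the local algebra there exists a finite $\Lambda\in\f{P}(\f{K})$ with $R\in\mathfrak{A}_\Lambda=\bigotimes_{x\in\Lambda}\f{B}(\mathfrak{h}_x)$. At each site $x\in\Lambda\cap\f{K}_n$, the Weyl matrices $\{P_g Q_\alpha\}_{g\in\n{G}_n,\,\alpha\in\widehat{\n{G}}_n}$ form a linear basis of $\f{B}(\mathfrak{h}_x)$, so by taking tensor products one obtains that
\[
\f{W}_\Lambda\;:=\;\Big\{\bigotimes_{x\in\Lambda}P_{g_x}Q_{\alpha_x}\;:\;(g_x,\alpha_x)\in\n{G}_{n(x)}\times\widehat{\n{G}}_{n(x)}\Big\}
\]
is a linear basis of $\mathfrak{A}_\Lambda$ (where $n(x)$ is the dimension of $x$).

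The second step is the identification of $\f{W}_\Lambda$ with products of the form $P_tQ_\gamma$. Given a choice $(g_x,\alpha_x)_{x\in\Lambda}$, set $t:=\sum_{x\in\Lambda}x^*_{g_x}\in{\rm Hom}_0(\f{C},\n{G})^0$ and $\gamma:=\sum_{x\in\Lambda}x_*^{\alpha_x}\in{\rm Hom}_0(\f{C},\n{G})_0$. Because operators supported at distinct sites commute, one has
\[
P_tQ_\gamma\;=\;\prod_{x\in\Lambda}P_{x^*_{g_x}}\prod_{y\in\Lambda}Q_{y_*^{\alpha_y}}\;=\;\bigotimes_{x\in\Lambda}P_{g_x}Q_{\alpha_x}\;\in\;\f{W}_\Lambda\;,
\]
and this assignment $(t,\gamma)\mapsto P_tQ_\gamma$ is a bijection between pairs supported in $\Lambda$ and elements of $\f{W}_\Lambda$. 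Expanding $R$ in this basis and discarding vanishing coefficients produces the desired representation $R=\sum_{i=1}^n\lambda_i P_{t_i}Q_{\gamma_i}$ with $\lambda_i\neq0$ and pairwise distinct $(t_i,\gamma_i)$.

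For uniqueness, suppose $\sum_{i=1}^n\lambda_iP_{t_i}Q_{\gamma_i}=\sum_{j=1}^m\mu_jP_{s_j}Q_{\eta_j}$ with all coefficients nonzero and the pairs $(t_i,\gamma_i)$ (respectively $(s_j,\eta_j)$) pairwise distinct. Let $\Lambda$ be any finite subset of $\f{K}$ containing the supports of all the $t_i,\gamma_i,s_j,\eta_j$. After extending each $t_i$, $\gamma_i$, $s_j$, $\eta_j$ trivially to $\Lambda$ (which does not change the operators, as $P_0=Q_0=\n{1}$), both sides are expressed as linear combinations of elements of the basis $\f{W}_\Lambda$ of $\mathfrak{A}_\Lambda$. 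Distinct pairs $(t,\gamma)$ correspond to distinct basis elements, so linear independence in $\f{W}_\Lambda$ forces $n=m$ and, after reordering, $(t_i,\gamma_i)=(s_i,\eta_i)$ and $\lambda_i=\mu_i$.

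The only subtle point, and essentially the main obstacle, is the bookkeeping in the uniqueness argument: different representations may a priori use $p$-maps with different supports, so one must enlarge to a common $\Lambda$ and check that no collisions among basis elements arise when pairs are padded by the trivial configuration/representation. This is handled by the identity $P_{x^*_0}Q_{x_*^0}=\n{1}_x$ and the bijectivity of $(t,\gamma)\mapsto\bigotimes_x P_{t(x)}Q_{\gamma(x)}$ restricted to $\Lambda$.
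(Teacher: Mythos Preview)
Your proof is correct and follows essentially the same approach as the paper's: both reduce to the fact that the single-site Weyl matrices form a basis of $\f{B}(\mathfrak{h}_x)$, pass to tensor products to obtain a basis of $\mathfrak{A}_\Lambda$, and handle uniqueness by enlarging $\Lambda$ to contain the supports of all terms in two putative expansions. Your version is in fact slightly more explicit about the bijection between pairs $(t,\gamma)$ supported in $\Lambda$ and the tensor-product basis elements, and about the padding-by-identities step, but the underlying argument is the same.
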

\begin{proof}
Let $\Lambda$ be the support of $R$. Recall that if $x \in \f{K}_n$, the operators of the form $P_g Q_\gamma$ with $g \in \mathbb{\n{G}}_n$ and $\alpha \in \widehat{\mathbb{\n{G}}}_n$ form a basis for the full matrix algebra
$\f{B}(\mathfrak{h}_x)\simeq{\rm Mat}_{|{\n{G}}_n|}(\n{C})$ (see Remark \ref{rk_lin_bas_mat}). 
Since the tensor product of these basis elements forms a basis for the tensor product space, the operators $P_t Q_\gamma$ with support contained in \( \Lambda \) form a basis for \( \mathfrak{A}_\Lambda \). In particular, we can express $R$ in a unique way as a (finite) linear combination of operators in $\mathfrak{A}_\Lambda$.
Now, suppose we can express $R$ as a finite sum of elements in $\mathfrak{A}_{\rm loc}$ in two different ways:
\[
R\;=\;\sum_{i=1}^n\lambda_{i}P_{t_i}Q_{\gamma_i}\;=\; \sum_{i=1}^m\lambda_{i}'P_{t_i}'Q_{\gamma_i}'
\]
Taking $\Lambda'$ big enough to contain the support of all the $P_{t_i}Q_{\gamma_i}$, all the $P_{t_i}'Q_{\gamma_i}'$, and considering the embedding of all this operators inside $\Lambda'$, one has that $R$ can be written in $\mathfrak{A}_{\Lambda'}$  as a linear combination of elements of the basis of $\mathfrak{A}_{\Lambda'}$ in two different ways. However, this is a contradiction. Therefore, for every $R\in \mathfrak{A}_{\rm loc}$, there is a unique expansion in the base elements  $P_tQ_\gamma$.
\end{proof}

\medskip

In view of the result above, one can say that $P_{t}Q_{\gamma}$ provides a set of linearly independent generators for $\mathfrak{A}$.

\medskip

Using the operators $\delta^p$ and $\delta_p$ described in Definitions \ref{def:delta^p map} and \ref{def:delta_p map}, respectively, one can introduce two new families of unitary operators which will be the relevant elements for the construction of the Hamiltonian of the main model.
For every $t\in {\rm Hom}_0(\f{C},\n{G})^{-1}$ define
\begin{equation*}
	A_t\;:=\;P_{\delta^{-1}t}\;,
\end{equation*}
and for every $\gamma\in {\rm Hom}_0(\f{C},\n{G})_1$ define 
\begin{equation*}
	B_\gamma:=Q_{\delta_1 \gamma}\;.
\end{equation*}
The new pairs of operators  $A_t,B_\gamma$ have the advantage of ``correcting'' the commutation rule \eqref{eq:commutation_P_and_Q} between $P_t$ and $Q_\gamma$.
\begin{lemma}\label{lem:commut_AB}
    It holds true that
    \begin{equation*}
        [A_t,B_\gamma]\;=\;0
    \end{equation*}
for every $t\in {\rm Hom}_0(\f{C},\n{G})^{-1}$ and every $\gamma\in {\rm Hom}_0(\f{C},\n{G})_1$.
\end{lemma}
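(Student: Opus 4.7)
The plan is to reduce the commutator to a scalar Weyl phase and then show that phase is trivial, using the duality lemma and the chain-complex identity $\delta_p \circ \delta_{p+1} = 0$.

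First I would unfold the definitions. Since $t \in {\rm Hom}_0(\f{C},\n{G})^{-1}$, we have $\delta^{-1} t \in {\rm Hom}_0(\f{C},\n{G})^0$; similarly $\delta_1 \gamma \in {\rm Hom}_0(\f{C},\n{G})_0$. These are exactly the degrees for which the Weyl relation \eqref{eq:commutation_P_and_Q} applies, and it yields
\begin{equation*}
B_\gamma A_t \;=\; Q_{\delta_1\gamma}\, P_{\delta^{-1}t} \;=\; (\delta_1\gamma)(\delta^{-1}t)\; P_{\delta^{-1}t}\, Q_{\delta_1\gamma} \;=\; (\delta_1\gamma)(\delta^{-1}t)\; A_t B_\gamma.
\end{equation*}
Hence $[A_t,B_\gamma]=0$ is equivalent to the scalar identity $(\delta_1\gamma)(\delta^{-1}t) = 1 \in \n{U}(1)$.

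Next I would invoke Lemma \ref{lem:duality}, applied not to $\gamma$ but to the dual $0$-map $\delta_1\gamma$, against the configuration $t$ of degree $p=-1$. This gives
\begin{equation*}
(\delta_1\gamma)(\delta^{-1}t) \;=\; \bigl(\delta_0(\delta_1\gamma)\bigr)(t).
\end{equation*}
Since $({\rm Hom}(\f{C},\n{G})_\bullet,\delta_\bullet)$ is a chain complex (Definition \ref{def:delta_p map}), we have $\delta_0 \circ \delta_1 = 0$, so $\delta_0(\delta_1\gamma)$ is the trivial dual $(-1)$-map. In additive notation the zero character in any $\widehat{\n{G}}_j$ is the constant character with value $1\in\n{U}(1)$, so its evaluation \eqref{eq:g_f} on $t$ equals $1$. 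This forces $(\delta_1\gamma)(\delta^{-1}t)=1$ and concludes the argument.

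I do not expect serious obstacles; the only care needed is degree bookkeeping, since the index $p$ on the configuration side and the index on the dual side shift in opposite directions under $\delta^\bullet$ and $\delta_\bullet$. One also has to observe that the compact-support hypothesis in Lemma \ref{lem:duality} is satisfied: both $t$ and $\gamma$ lie in the ${\rm Hom}_0$ subcomplexes, and the local formula for $\delta^{-1}t$ and $\delta_1\gamma$ preserves finite support, so the pairing is a genuinely finite product and all manipulations are legitimate.
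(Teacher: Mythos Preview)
Your proof is correct and follows essentially the same approach as the paper: both reduce $B_\gamma A_t$ to the Weyl phase $(\delta_1\gamma)(\delta^{-1}t)$ via \eqref{eq:commutation_P_and_Q}, then use Lemma~\ref{lem:duality} and a (co)chain-complex identity to show that phase is $1$. The only cosmetic difference is that the paper shifts the boundary to the configuration side, obtaining $\gamma((\delta^0\circ\delta^{-1})t)=\gamma(0)=1$ from $\delta^0\circ\delta^{-1}=0$, whereas you shift it to the dual side and use $\delta_0\circ\delta_1=0$; the two are equivalent instances of the same duality.
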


\medskip

\begin{proof}
    Using Eq.\eqref{eq:commutation_P_and_Q} and Lemma \ref{lem:duality} one has
    \begin{align*}
        B_\gamma A_t&=Q_{\delta_1\gamma}P_{\delta^{-1}t}\\
        &=\delta_1\gamma(\delta^{-1}t)P_{\delta^{-1}t}Q_{\delta_1\gamma}\\
        &=\gamma((\delta^0\circ \delta^{-1}) t) A_t B_\gamma
    \end{align*}
    Since $\delta^0\circ\delta^{-1}=0$, as expressed by  the cochain complex diagram \eqref{finite cochain}, one obtains $B_\gamma A_t=A_tB_\gamma$.
\end{proof}

\begin{definition}[Generalized star and plaquette operators]
\label{def:local_operators}	Let $x\in \f{K}_n$. Define
	\begin{equation*}
		A_x^0\;:=\;\frac{1}{\ab{\n{G}_{n+1}}} \sum_{g\in \n{G}_{n+1}} A_{x^*_g}
	\end{equation*}
	\begin{equation*}
		B_x^0\;:=\;\frac{1}{\ab{\n{G}_{n-1}}}\sum_{\gamma\in \widehat{\n{G}}_{n-1}} B_{ x_*^\gamma}
	\end{equation*}
	where $|\n{G}_{n}|$ is the cardinality of $\n{G}_{n}$.
\end{definition}

\begin{lemma}\label{lemma:projections}
    For every $x\in \f{K}$, the operators $A_x^0$ and $B_x^0$ form a set of commuting orthogonal projections.
\end{lemma}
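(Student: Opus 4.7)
The plan is to check the three defining properties of a commuting pair of orthogonal projections -- commutation, self-adjointness, and idempotency -- separately, using only Lemma \ref{lem:commut_AB}, the Weyl-type group relations \eqref{relatP-Q}, and the standard averaging-over-a-finite-group trick that converts translation-invariant sums into projections.

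Commutation follows termwise from Lemma \ref{lem:commut_AB}: each summand $A_{x^*_g}$ appearing in $A_x^0$ commutes with each summand $B_{x_*^\gamma}$ appearing in $B_x^0$, so bilinearity delivers $[A_x^0,B_x^0]=0$. For self-adjointness, the definitions of $P_g$ and $Q_\alpha$ give $P_g^*=P_{-g}$ and $Q_\alpha^*=Q_{-\alpha}$ (using $\overline{\alpha(h)}=\alpha(h)^{-1}=(-\alpha)(h)$), and by the linearity of $\delta^{-1}$ and $\delta_1$ this propagates to $A_t^*=A_{-t}$ and $B_\gamma^*=B_{-\gamma}$. Since $x^*_g+x^*_{-g}=0$ in ${\rm Hom}_0(\f{C},\n{G})^{-1}$ one obtains $A_{x^*_g}^*=A_{x^*_{-g}}$ and similarly $B_{x_*^\gamma}^*=B_{x_*^{-\gamma}}$. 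Taking adjoints of the defining sums and reindexing by $g\mapsto -g$ (a bijection of the finite group $\n{G}_{n+1}$) leaves them unchanged, so $(A_x^0)^*=A_x^0$ and $(B_x^0)^*=B_x^0$.

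For idempotency, the key identity is $A_{x^*_g} A_{x^*_{g'}}=A_{x^*_{g+g'}}$, which follows from the linearity of $\delta^{-1}$ together with the first relation in \eqref{relatP-Q}. Squaring and expanding gives
\[
(A_x^0)^2 \;=\; \frac{1}{|\n{G}_{n+1}|^2}\sum_{g,g'\in\n{G}_{n+1}} A_{x^*_{g+g'}}.
\]
For each fixed $g$ the substitution $h=g+g'$ turns the inner sum into $\sum_h A_{x^*_h}$, so that summing over $g$ produces a factor $|\n{G}_{n+1}|$ and collapses the right-hand side back to $A_x^0$. The statement for $B_x^0$ follows from the analogous identity $B_{x_*^\gamma}B_{x_*^{\gamma'}}=B_{x_*^{\gamma+\gamma'}}$, obtained from the linearity of $\delta_1$ and the second relation in \eqref{relatP-Q}.

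No step is conceptually delicate; the main bookkeeping obstacle is to confirm that the localized $p$-maps $x^*_g$ and $x_*^\gamma$ land in the precise spaces required by the operators $A_{\cdot}$, $B_{\cdot}$ and by Lemma \ref{lem:commut_AB}, so that the finite-group averaging argument applies verbatim in both halves.
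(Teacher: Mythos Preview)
Your proof is correct and follows essentially the same route as the paper's: termwise commutation via Lemma \ref{lem:commut_AB}, self-adjointness from $A_{x^*_g}^*=A_{x^*_{-g}}$ with reindexing $g\mapsto -g$, and idempotency via the group-averaging substitution $h=g+g'$. The only point the paper makes slightly more explicit is that the group relations $A_{t_1}A_{t_2}=A_{t_1+t_2}$ and $B_{\gamma_1}B_{\gamma_2}=B_{\gamma_1+\gamma_2}$ also yield mutual commutation of the $A_x^0$ among themselves and of the $B_y^0$ among themselves (so the full family $\{A_x^0,B_y^0\}_{x,y\in\f{K}}$ is commuting), but you already invoke those relations in the idempotency step, so this is a one-line addition.
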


\begin{proof}
The commutation follows from Lemma \ref{lem:commut_AB} along with the group properties 
$A_{t_1}A_{t_2}=A_{t_1+t_2}$ and $B_{\gamma_1}B_{\gamma_2}=B_{\gamma_1+\gamma_2}$ that can be deduced directly by \eqref{relatP-Q}.
A direct computation shows that
    \begin{align*}
        A_x^0A_x^0&=\frac{1}{\ab{\n{G}_{n+1}}^2}\sum_{g\in \n{G}_{n+1}}\sum_{h\in \n{G}_{n+1}}A_{x^*_g}A_{x^*_h}\\
        &=\frac{1}{\ab{\n{G}_{n+1}}^2}\sum_{g\in \n{G}_{n+1}}\sum_{h\in \n{G}_{n+1}}A_{x^*_{g+h}}\\
        &=\frac{1}{\ab{\n{G}_{n+1}}}\sum_{g\in\n{G}_{n+1}}A_x^0\\
        &=A_x^0
    \end{align*}
    Analogously, $\pa{B_x^0}^2=B_x^0$. It remains to show that these operators
 are self-adjoint. First, notice that $$(A_{x^*_g})^*\;=\;(P_{\delta^{-1}x^*_g})^*\;=\;P_{\delta^{-1}(-x^*_g)}\;.$$
 By the Definition \ref{def:localized maps} one has that $-x^*_g=x^*_{-g}$, and in turn
 $$(A_{x^*_g})^*\;=\;A_{x^*_{-g}}\;.$$ Since one is summing over all the elements of the group, one gets
 \begin{align*} 
     \pa{A_x^0}^*\;&=\;\frac{1}{\ab{\n{G}_{n+1}}}\sum_{g\in \n{G}_{n+1}}\pa{A_{x^*_{g}}}^*\\
     &=\;\frac{1}{\ab{\n{G}_{n+1}}}\sum_{g\in \n{G}_{n+1}}A_{x^*_{-g}}\\
     &=\;A_x^0\;.
 \end{align*}
 The same reasoning also shows that $\pa{B_x^0}^*=B_x^0$.
 \end{proof}

\medskip

The local operators defined above are the higher dimensional generalization of the \emph{star} and \emph{plaquette} operators defined for planar models \cite{Naaijkens-2011}. Observe, using Figures \ref{fig:configurations}, \ref{fig:dual map} and Definitions \ref{def:delta^p map},\ref{def:delta_p map}, that the \emph{supports} of \( A_x^0 \) and \( B_x^0 \) are given by

\[
\begin{aligned}
{\rm supp}(A_x^0)\:&:=\; \{y\in\f{K}\;|\;(\delta^{-1}x^*_g)(y)\neq 0\}\;=\;\{x\}\cup \{z\in\f{K}\;|\; \epsilon_{z,x}\neq 0\}\\
{\rm supp}(B_x^0)\:&:=\; \{y\in\f{K}\;|\;(\delta_{1}x_*^\gamma)(y)\neq 0\}\;=\;\{x\}\cup \{y\in\f{K}\;|\; \epsilon_{x,y}\neq 0\}
\end{aligned}
\] 
where the number $\epsilon_{z,x}$ are define in Appendix \ref{sec:simplicial_complex}.
In particular, the support of $A_x^0$ is given by $x$ and its \emph{co-boundary}, while the support of $B_x^0$ is given by $x$ and its \emph{boundary}.

\subsection{The HA-QDM Dynamics}\label{sec:dyn}
As we are working with a spin-like system, we need to specify how local operators interact with each other.
An interaction is a map $\Phi$ which associates to each $X\in\f{P}(\f{K})$ a  self-adjoint operator $\Phi_{\rm QDM}(X)\in\mathfrak{A}$. The relevant interaction for our model is defined as 
\begin{equation*}
	\Phi_{\rm QDM}(X)\;:=\;\left\{ \begin{array}{clr}
		-A_x^0&\; \text{if} & X={\rm supp}(A_x^0)\;\;
		\text{for some $x\in \f{K}$}\;, \\
		-B_x^0&\; \text{if} & X={\rm {\rm supp}(B_x^0)}\;\; \text{for some $x\in \f{K}$}\;,\\
		0&\;& \text{otherwise}\;,
	\end{array} \right.
\end{equation*}
and is called \emph{higher Abelian quantum double model} (HA-QDM) \emph{interaction}.
The total energy due to all the interactions in a finite set $\Lambda$ is given by the local \emph{HA-QDM Hamiltonian} 
\begin{equation}\label{eq:ham}
	H^\Lambda_{\rm QDM}\;:=\;\sum_{X\subset \Lambda} \Phi_{\rm QDM}(X)\;=\;-\left(\sum_{\substack{x\in \f{K}\\
			{\rm supp}(A_x^0)\subseteq \Lambda }}A_x^0+\sum_{\substack{y\in \f{K}\\
			{\rm supp}(B_y^0)\subseteq \Lambda }}B_y^0\right)\;.
\end{equation}
With this, one can define a derivation $\delta_{\rm QDM}$ with dense domain
$\f{D}(\delta_{\rm QDM})=\mathfrak{A}_{\rm loc}$.
For a local operator $A\in \mathfrak{A}_{\rm loc}$ one has that
\begin{equation}\label{eq:deriv}
	\delta_{\rm QDM}(A)\;:=\;{\rm i}\big[H_{\rm QDM}^\Lambda,A\big]\;.
\end{equation}
Under some specific conditions on the simplicial complex, this derivation turns out to be closable, and its closure generates a dynamics. 
For instance if it happens that
\begin{equation}\label{ec:condiciones}
	\sup_{x\in \f{K}} \ab{ {\rm supp}(A_x^0)}\;<\;\infty\;,\qquad \text{and}\qquad \sup_{x\in \f{K}} \ab{{\rm supp}(B_x^0)}\;<\;\infty\;,
\end{equation}
\emph{i.e.} when all the $A_x^0$ and $B_x^0$ have finite support, 
then the \cite[Theorem 6.2.4]{bratelli2} guaranties that $\delta_{\rm QDM}$ is closable and its closure (still denoted with the same symbol) generates a strongly continuous one-parameter group of $*$-automorphisms $\{\tau_t\}_{t\in\n{R}}$ of $\mathfrak{A}$. 
Conditions \eqref{ec:condiciones} are automatically satisfied in the case of 
$N$-dimensional $L$-regular
simplicial complexes with finite  $N,L\in\n{N}$ (see Section \ref{sec:basic_F}).
 The dynamics generated by the derivation \eqref{eq:deriv} will be called the \emph{HA-QDM dynamics}.

\subsection{Frustration-free ground state}\label{sec:frustration-free ground state}

Once the dynamics have been defined, the next relevant notion is that of \emph{ground state}. A state is a positive linear functional $\omega:\mathfrak{A} \to \mathbb{C}$ normalized according to  $\omega(\n{1}) = 1$. The state space of $\mathfrak{A}$ is usually denoted with ${\mathtt{S}}(\mathfrak{A})$. It is a convex space
whose extreme points are called \emph{pure states}. The set of pure states is denoted with $\mathtt{P}(\mathfrak{A})$. Both $\mathtt{S}(\mathfrak{A})$ and $\mathtt{P}(\mathfrak{A})$ are usually  topologized with the $\ast$-weak topology \cite[Section 2.3.2]{Bratelli}

\medskip

A \emph{ground state} $\omega$ (for the dynamics defined by $\delta$) is a state that satisfies the following condition 
\begin{equation*}
	-{\rm i}\;\omega(A^*\delta(A))\; \geqslant \;0\;,\qquad \forall \; A\in\mathfrak{A}_{\rm loc}\;.
\end{equation*}

\medskip

From now on we will use the symbol
 $\mathtt{G}_{\rm QDM}(\mathfrak{A})$ to denote the space of ground states for  
the {HA-QDM dynamics} $\delta_{\rm QDM}$ described in Section \ref{sec:dyn}. The subset of pure ground states will be denoted by $\mathtt{G}_{\rm QDM}^{\circ}(\mathfrak{A}):=\mathtt{G}_{\rm QDM}(\mathfrak{A})\cap\mathtt{P}(\mathfrak{A})$.

\medskip

An important role is played by states that are stabilized by all the fundamental operators $A_x^0$   and $B_x^0$.

\begin{definition}[Frustration-free ground state]\label{def:ffgs}
	We say that a ground state $\omega_0$ (for a given dynamics) is frustration-free if $\omega_0(A_x^0)=\omega_0(B_y^0)=1$ for all $x,y\in\f{K}$.
\end{definition}

\begin{remark}[GNS representation]
Let $(\pi_0,\mathcal{H}_0,\psi_0)$ be the corresponding GNS-representation associated to a frustration-free ground state
$\omega_0$.
Then 
$$\inner{\psi_0}{\pi_0(A_x^0)\psi_0}\;=\;1\;=\;\inner{\psi_0}{\pi_0(B_x^0)\psi_0}\;.$$ 
By the Lemma \ref{lemma:projections}, $\pi_0(A_x^0)$ and $\pi_0(B_x^0)$ are projectors. Therefore, 
\[
\|\psi_0-\pi_0(A_x^0)\psi_0\|^2\;=\;\inner{\psi_0}{(\n{1}-\pi_0(A_x^0))\psi_0}\;=\;0
\]
showing that $\pi_0(A_x^0)\psi_0=\psi_0$. A similar argument also shows that $\pi_0(B_x^0)\psi_0=\psi_0$. In other words, $\psi_0$ is an common eigenvector (with eigenvalue 1) for all $\pi_0(A_x^0)$ and  $\pi_0(B_x^0)$.
\hfill $\blacktriangleleft$
\end{remark}

\medskip

The set of frustration-free elements of $\mathtt{G}_{\rm QDM}(\mathfrak{A})$ will be denoted with $\mathtt{FG}_{\rm QDM}(\mathfrak{A})$ and the related pure states with $\mathtt{FG}_{\rm QDM}^{{\circ}}(\mathfrak{A}):=\mathtt{FG}_{\rm QDM}(\mathfrak{A})\cap\mathtt{P}(\mathfrak{A})$.
 In Section \ref{sec:analysis} we will prove that $\mathtt{FG}_{\rm QDM}(\mathfrak{A})\neq \emptyset$, and provide a characterization of them.

\section{\label{sec:analysis}Analysis of the space of Frustration-free ground states}

With all the machinery developed previously, we are ready to determine and characterize the frustration-free ground states of the model introduced in Section \ref{sec:frustration-free ground state}. First of all, we will prove in Theorem \ref{theo:one_on_one_bijection} that these states are in one-to-one correspondence with states of a $C^*$-algebra $\mathfrak{A}_{\rm log}$. In this algebra, the states are completely determined by cohomology and homology groups. In this section, we will follow the work \cite{Vrana} since the results it contains are still valid in our context. However, unlike their setting, we cannot decompose the Hamiltonian described in Section \ref{sec:dyn} into a composition of Hamiltonians supported on disjoint dimensions. This is because our local operators $A_x^0$ and $B_x^0$ act across more than one dimension; see, for example, Figure \ref{fig:A_v^0 operator}. Consequently, we must work with the entire Hamiltonian at once. Nevertheless, in a certain sense, the ``dimension multiplication'' is encoded in the homology and cohomology
 through the Theorem \ref{mainthr}. In fact, when $\partial^\n{G}_n \equiv 0$, one recovers the result of  \cite{Vrana}.

\medskip

The fundamental principle underlying our approach is closely related to that used in stabilizer-based quantum error correction techniques \cite{Fujii2015}. In these methods, a designated set of \emph{stabilizers}, typically commuting observables that define the code space, is first identified. In our case, the stabilizers will be the local operators $A_x^0$ and $B_x^0$.
 Then, one considers the algebra of operators that commute with all elements of this stabilizer set, as these represent logical operations that preserve the code subspace. To capture the operational indistinguishability of such operators on a particular quantum state, we introduce an equivalence relation: two operators are considered equivalent if they produce the same expectation value when evaluated on a relevant state (frustration-free ground state). This construction not only reflects the symmetries of the system but also provides a natural framework for defining observables in scenarios where redundancy or gauge freedom is present.

\subsection{Local characterization of frustration-free ground states}
In this section, we will introduce a ``small'' abelian $C^*$-sublagebra
$\mathfrak{K}\subset \mathfrak{A}$ endowed with a relevant pure state $\vartheta\in \mathtt{P}(\mathfrak{K})$ which has the crucial property of determining the elements of $\mathtt{FG}_{\rm QDM}(\mathfrak{A})$ by extensions. For these reasons we will refer to $\mathfrak{K}$ as the \emph{core} $C^*$-algebra and to $\vartheta$ as the \emph{seed} state of the
{HA-QDM dynamics}.

\medskip

To start, let us recall a crucial result that has become foundational in the algebraic formulation of quantum double models. See \cite[Section 2.1.1]{Fannes} and \cite[Lemma 6.1]{Vrana} for the details of the proof.

\begin{lemma}\label{lema estado}
	Let $\omega$ be a state in a $C^*$-algebra $\mathfrak{A}$. Let $U\in\mathfrak{A}$ be a unitary element such that $|\omega(U)|=1$. Then,  one has
	\begin{equation*}
		\omega(U)\omega(Y)\;=\;\omega(UY)\;=\;\omega(YU)
	\end{equation*}
	for any $Y\in\mathfrak{A}$.
\end{lemma}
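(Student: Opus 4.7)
The plan is to pass to the GNS representation of $\omega$ and show that the unitarity of $U$ together with $|\omega(U)| = 1$ forces the cyclic vector to be a genuine eigenvector of $\pi_\omega(U)$. Once this eigenvector relation is in hand, both identities follow by moving $\pi_\omega(U)$ across the inner product.

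First, I would invoke the GNS construction to obtain a triple $(\pi_\omega, \mathcal{H}_\omega, \xi_\omega)$ with $\omega(A) = \inner{\xi_\omega}{\pi_\omega(A)\xi_\omega}$ for every $A \in \mathfrak{A}$ and $\|\xi_\omega\| = 1$. Since $U$ is unitary in $\mathfrak{A}$, the operator $\pi_\omega(U)$ is unitary on $\mathcal{H}_\omega$, so $\|\pi_\omega(U)\xi_\omega\| = \|\xi_\omega\| = 1$. The hypothesis $|\omega(U)| = 1$ then reads $|\inner{\xi_\omega}{\pi_\omega(U)\xi_\omega}| = \|\xi_\omega\|\cdot \|\pi_\omega(U)\xi_\omega\|$, which is precisely the equality case of the Cauchy--Schwarz inequality. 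This forces $\pi_\omega(U)\xi_\omega$ to be a scalar multiple of $\xi_\omega$, and taking the inner product with $\xi_\omega$ identifies the scalar as $\omega(U)$. Hence $\pi_\omega(U)\xi_\omega = \omega(U)\,\xi_\omega$, and by taking adjoints $\pi_\omega(U)^*\xi_\omega = \overline{\omega(U)}\,\xi_\omega$.

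With the eigenvector relation established, the two identities are immediate. For $\omega(YU)$, I would write $\omega(YU) = \inner{\xi_\omega}{\pi_\omega(Y)\pi_\omega(U)\xi_\omega} = \omega(U)\inner{\xi_\omega}{\pi_\omega(Y)\xi_\omega} = \omega(U)\omega(Y)$. For $\omega(UY)$, I would instead move $\pi_\omega(U)$ onto the left vector, using $\pi_\omega(U)^*\xi_\omega = \overline{\omega(U)}\xi_\omega$, to obtain $\omega(UY) = \inner{\pi_\omega(U)^*\xi_\omega}{\pi_\omega(Y)\xi_\omega} = \omega(U)\,\omega(Y)$.

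No step is technically delicate; the heart of the argument is the observation that saturation of Cauchy--Schwarz upgrades the numerical relation $|\omega(U)|=1$ to the vector identity $\pi_\omega(U)\xi_\omega = \omega(U)\xi_\omega$. The only point requiring a small amount of care is verifying that the scalar in the equality case of Cauchy--Schwarz is exactly $\omega(U)$ rather than a phase thereof, which follows from computing the inner product directly with $\xi_\omega$.
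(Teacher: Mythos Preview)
Your proof is correct. The paper does not actually prove this lemma; it merely cites \cite[Section 2.1.1]{Fannes} and \cite[Lemma 6.1]{Vrana} for the details. The argument in those references is the purely algebraic version of what you wrote: one computes $\omega\big((U-\omega(U)\n{1})^*(U-\omega(U)\n{1})\big)=0$ directly from unitarity and $|\omega(U)|=1$, and then applies the Cauchy--Schwarz inequality for states, $|\omega(A^*B)|^2\leqslant \omega(A^*A)\,\omega(B^*B)$, with $A=U-\omega(U)\n{1}$ to conclude $\omega((U-\omega(U)\n{1})Y)=0$ and $\omega(Y(U-\omega(U)\n{1}))=0$.

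Your route through the GNS representation is entirely equivalent: the eigenvector identity $\pi_\omega(U)\xi_\omega=\omega(U)\xi_\omega$ is exactly the Hilbert-space translation of $\omega\big((U-\omega(U)\n{1})^*(U-\omega(U)\n{1})\big)=0$, and the Cauchy--Schwarz equality case in $\mathcal{H}_\omega$ plays the same role as the state-level Cauchy--Schwarz inequality. Both approaches are short and standard; yours has the minor expository advantage of making the mechanism (an eigenvector relation) geometrically transparent, while the algebraic version avoids invoking GNS at all.
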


\medskip

This result is very useful  especially when $\omega(U)=1$ because it  allows us to ``introduce" $U$ inside the state, \emph{i.e.} $\omega(Y)=\omega(UY)=\omega(YU)$ for all $Y\in\mathfrak{A}$.

	\begin{prop}\label{prop:frist_cond}
		Let $\omega_0\in $ be a frustration-free ground state according to Definition \ref{def:ffgs}. Then $\omega_0(A_tB_\gamma)=1$ for all $t\in {\rm Hom}_0(\f{C},\n{G})^{-1}$ and $\gamma\in {\rm Hom}_0(\f{C},\n{G})_1$.
	\end{prop}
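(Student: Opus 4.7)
The plan is to reduce the statement, in three stages, to repeated applications of Lemma \ref{lema estado}. First, since by Lemma \ref{lemma:projections} the operator $A_x^0$ is an orthogonal projection with $\omega_0(A_x^0)=1$, the GNS picture from the remark after Definition \ref{def:ffgs} (equivalently, the Cauchy--Schwarz inequality $|\omega_0(Y(\n{1}-A_x^0))|^2\leqslant\omega_0(YY^*)\omega_0(\n{1}-A_x^0)=0$) forces the absorbing identity
\begin{equation*}
\omega_0(Y A_x^0)\;=\;\omega_0(A_x^0 Y)\;=\;\omega_0(Y)\qquad \forall\,Y\in\mathfrak{A},
\end{equation*}
together with its analogue for $B_x^0$. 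Note that Lemma \ref{lema estado} itself cannot be applied directly at this stage, since $A_x^0$ is not unitary.

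The crucial intermediate step is to promote $\omega_0(A_x^0)=1$ to the pointwise identity $\omega_0(A_{x^*_g})=1$ for every $x\in\f{K}_n$ and $g\in\n{G}_{n+1}$. Using $A_{t_1}A_{t_2}=A_{t_1+t_2}$ from \eqref{relatP-Q} together with the translation invariance of the sum over $\n{G}_{n+1}$, one obtains the algebraic collapse
\begin{equation*}
A_{x^*_g}\,A_x^0\;=\;\frac{1}{\ab{\n{G}_{n+1}}}\sum_{h\in\n{G}_{n+1}}A_{x^*_{g+h}}\;=\;A_x^0,
\end{equation*}
so that combining with the absorbing identity yields $\omega_0(A_{x^*_g})=\omega_0(A_{x^*_g}A_x^0)=\omega_0(A_x^0)=1$. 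The parallel calculation, using $B_{\gamma_1}B_{\gamma_2}=B_{\gamma_1+\gamma_2}$ and translation invariance of the sum over $\widehat{\n{G}}_{n-1}$, produces $\omega_0(B_{x_*^\alpha})=1$ for every $x$ and $\alpha$.

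Finally, finite support lets me decompose any $t\in{\rm Hom}_0(\f{C},\n{G})^{-1}$ as a finite sum $t=\sum_{y\in\supp(t)} y^*_{t(y)}$, hence
\begin{equation*}
A_t\;=\;\prod_{y\in\supp(t)} A_{y^*_{t(y)}},
\end{equation*}
where the factors commute because the addition of $p$-maps is symmetric. Each factor is unitary with $\omega_0$-expectation equal to $1$, so Lemma \ref{lema estado} applied iteratively --- peeling off one unitary at a time --- gives $\omega_0(A_t)=1$, and the parallel reasoning on the $B$-side yields $\omega_0(B_\gamma)=1$. A final application of Lemma \ref{lema estado} with $U=A_t$ then closes the argument:
\begin{equation*}
\omega_0(A_t B_\gamma)\;=\;\omega_0(A_t)\,\omega_0(B_\gamma)\;=\;1.
\end{equation*}
The only delicate point in the whole chain is the passage from the projection-level identity $\omega_0(A_x^0)=1$ to the unitary-level identity $\omega_0(A_{x^*_g})=1$, where the absorbing property combined with the algebraic collapse $A_{x^*_g}A_x^0=A_x^0$ does the real work; everything after that propagates mechanically through Lemma \ref{lema estado}.
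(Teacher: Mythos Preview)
Your proof is correct and follows the same overall architecture as the paper's: reduce to $\omega_0(A_{x^*_g})=1$ and $\omega_0(B_{x_*^\alpha})=1$ for the elementary operators, then factor $A_t$ and $B_\gamma$ into finite products and peel off one unitary at a time via Lemma~\ref{lema estado}. The only genuine difference is in how you perform the key reduction. The paper argues by convexity: since $A_x^0$ is the average of the unitaries $A_{x^*_g}$ and $\omega_0(A_x^0)=1$, while each $|\omega_0(A_{x^*_g})|\leqslant 1$, the average of complex numbers in the closed unit disk can equal $1$ only if every term equals $1$. You instead prove the absorbing identity $\omega_0(YA_x^0)=\omega_0(Y)$ from Cauchy--Schwarz and combine it with the algebraic collapse $A_{x^*_g}A_x^0=A_x^0$. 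Both routes are short and self-contained; the paper's is marginally quicker, while yours has the virtue of making explicit the mechanism (absorption by a projection of full expectation) that is reused later in the paper, e.g.\ in the proof of Proposition~\ref{prop:Exixst}.
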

\begin{proof}
By definition $\omega_0(A_x^0)=\omega_0(B_y^0)=1$. Combining this with the inequalities  $|\omega_0(A_{x_g^*})|\leqslant1$ and $|\omega_0(B_{x_*^\gamma})|\leqslant1$ due to the fact that the $A_{x_g^*}$ and $B_{x_*^\gamma}$
 are unitary,
one gets that $\omega_0(A_{x_g^*})=1=\omega(B_{x_*^\gamma})$. Using that any operator $A_t$ can be written as a multiplication of operators of the form $A_{x_g^*}$ and any operator $B_\gamma$  can be written as a multiplication of operators of the form $B_{x_*^\gamma}$, and combining with  Lemma \ref{lema estado}, one conclude the proof.
\end{proof}

\medskip

The next step is to prove the existence of frustration-free ground states. In particular, we will
show that the condition  $\omega_0(A_tB_\gamma)=1$
is sufficient to guarantee that  $\mathtt{FG}_{\rm QDM}(\mathfrak{A})\neq\emptyset$. To see this, first we need to introduce the following $C^*$-subalgebra.

\begin{definition}[Core $C^*$-algebra]\label{def:K}
	Let $\mathfrak{K}:=C^*(A_t,B_\gamma)$ be the abelian $C^*$-subalgebra of $\mathfrak{A}$ generated by the operators $A_t$ and $B_\gamma$ for every $t\in {\rm Hom}_0(\f{C},\n{G})^{-1}$ and $\gamma\in {\rm Hom}_0(\f{C},\n{G})_1$.
\end{definition}

\medskip

The $C^*$-algebra $\mathfrak{K}$ contains by construction the generalized star and plaquette operators $A_x^0$ and $B_x^0$.
Let $\mathfrak{K}_0$ be the dense subalgebra given by the finite linear combinations of the monomials $A_tB_\gamma$. One has that 
$\mathfrak{K}_0\subset \mathfrak{A}_{\rm loc}$ by construction.

\medskip

On  $\mathfrak{K}$ let us introduce a relevant linear functional $\vartheta:\mathfrak{K}\to\n{C}$ by imposing the conditions
\begin{equation}\label{eq:def_eta}
\vartheta(A_tB_\gamma)\;=\;1\;, \qquad \forall \;(t,\gamma)\in {\rm Hom}_0(\f{C},\n{G})^{-1}\times {\rm Hom}_0(\f{C},\n{G})_1
\end{equation}
and extending  $\vartheta$ by linearity on the dense subalgebra $\mathfrak{K}_0$, and then by continuity on the 
full algebra 
$\mathfrak{K}$.

\begin{lemma}[Seed state]\label{lemm:pur_st_K}
    Conditions \eqref{eq:def_eta} define a unique  pure state
 $\vartheta:\mathfrak{K}\to\CC$.
\end{lemma}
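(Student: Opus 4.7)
The strategy is to realize $\vartheta$ as a character of the commutative unital $C^*$-algebra $\mathfrak{K}$, since any character of such an algebra is automatically a pure state, and continuity then forces uniqueness of the state specified on the dense subalgebra $\mathfrak{K}_0$.

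First I would check that the prescription $\vartheta\bigl(\sum_i \lambda_i A_{t_i}B_{\gamma_i}\bigr):=\sum_i\lambda_i$ gives a well-defined linear functional on $\mathfrak{K}_0$. Since $A_tB_\gamma=P_{\delta^{-1}t}Q_{\delta_1\gamma}$, Lemma \ref{lemm:lin_indip_g} says that any relation $\sum_i\lambda_i A_{t_i}B_{\gamma_i}=0$ forces, after grouping monomials that share the same $(\delta^{-1}t_i,\delta_1\gamma_i)$, each group of coefficients to sum to zero; summing over the groups gives $\sum_i\lambda_i=0$. I would then verify that $\vartheta$ is a unital $\ast$-homomorphism on $\mathfrak{K}_0$: multiplicativity follows from $A_{t_1}A_{t_2}=A_{t_1+t_2}$, $B_{\gamma_1}B_{\gamma_2}=B_{\gamma_1+\gamma_2}$ and Lemma \ref{lem:commut_AB}, while $\vartheta(R^*)=\overline{\vartheta(R)}$ uses $A_t^*=A_{-t}$ and $B_\gamma^*=B_{-\gamma}$.

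The central step is to show $|\vartheta(R)|\leqslant\|R\|$ for every $R\in\mathfrak{K}_0$, so that $\vartheta$ extends by continuity to $\mathfrak{K}$. Given $R=\sum_{i=1}^n\lambda_iA_{t_i}B_{\gamma_i}$, pick a finite $\Lambda\in\f{P}(\f{K})$ containing the supports of all the monomials. Each $A_{t_i}B_{\gamma_i}$ is a unitary of finite order on $\mathfrak{H}_\Lambda$, and since they commute they generate a finite abelian subgroup $H$ of the unitary group of $\mathfrak{H}_\Lambda$; its elements are the operators $P_sQ_\alpha=A_tB_\gamma$ indexed by $(t,\gamma)$ in the $\n{Z}$-span of $(t_i,\gamma_i)$. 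The averaging $\Pi:=\tfrac{1}{|H|}\sum_{h\in H}h$ is the orthogonal projector onto the common $+1$-eigenspace of $H$. A site-by-site computation with $P_gQ_\beta\ket{k}=\beta(k)\ket{g+k}$ and orthogonality of characters of the finite groups $\n{G}_n$ yields $\operatorname{tr}_{\mathfrak{H}_\Lambda}(P_sQ_\alpha)=0$ whenever $(s,\alpha)\neq(0,0)$, while $\operatorname{tr}_{\mathfrak{H}_\Lambda}(\n{1})=\dim(\mathfrak{H}_\Lambda)$. By Lemma \ref{lemm:lin_indip_g} the map $(s,\alpha)\mapsto P_sQ_\alpha$ is injective, so only the identity of $H$ contributes to $\operatorname{tr}(\Pi)$, giving $\operatorname{tr}(\Pi)=\dim(\mathfrak{H}_\Lambda)/|H|>0$. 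Any unit vector $\psi\in\operatorname{range}(\Pi)$ then satisfies $A_{t_i}B_{\gamma_i}\psi=\psi$ for each $i$, whence $\vartheta(R)=\sum_i\lambda_i=\inner{\psi}{R\psi}$ and $|\vartheta(R)|\leqslant\|R\|$.

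With boundedness in hand, $\vartheta$ extends uniquely to a continuous linear functional on $\mathfrak{K}$, and the $\ast$-homomorphism property passes to the closure; hence $\vartheta$ is a character of $\mathfrak{K}$ and therefore a pure state. Any other state satisfying \eqref{eq:def_eta} agrees with $\vartheta$ on $\mathfrak{K}_0$ by linearity and on all of $\mathfrak{K}$ by continuity, giving uniqueness. The main obstacle is precisely this boundedness estimate; the trace-averaging argument above is the cleanest route and hinges on the concrete Weyl realization of the generators inside the local Hilbert spaces $\mathfrak{H}_\Lambda$.
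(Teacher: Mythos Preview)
Your proof is correct, and it follows a genuinely different route from the paper's. Both arguments converge on the same endpoint---showing that $\vartheta$ is a character of the abelian $C^*$-algebra $\mathfrak{K}$, hence a pure state---but they diverge on the key existence/boundedness step. The paper proceeds by a compactness argument: it restricts $\vartheta$ to the finite-dimensional subalgebras $\mathfrak{K}_n$ supported in finite regions $\Lambda_n$, asserts that each restriction is a state, extends each one to $\mathfrak{K}$ via Hahn--Banach, and then extracts a weak-$\ast$ convergent subsequence to produce the global state. You instead establish the bound $|\vartheta(R)|\leqslant\|R\|$ directly, by averaging over the finite abelian group of unitaries generated by the $A_{t_i}B_{\gamma_i}$ inside $\mathfrak{H}_\Lambda$ and using the trace formula $\operatorname{tr}(P_sQ_\alpha)=\dim(\mathfrak{H}_\Lambda)\,\delta_{s,0}\,\delta_{\alpha,0}$ to see that the averaging projector is nonzero; any unit vector in its range then realizes $\vartheta(R)$ as a vector expectation. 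Your approach is more elementary (no compactness) and more constructive, and it actually supplies the detail the paper skips when it writes ``it follows that $\vartheta_n$ defines a state'' without justification. The paper's route, on the other hand, is a template that would transplant more readily to situations where an explicit invariant vector is harder to exhibit.
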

\begin{proof}

Consider an increasing sequence $\Lambda_1\subset \Lambda_2\dots$ of finite subsets of $\f{K}$ such that their union is equal to $\f{K}$. For $\Lambda_n$ define the $C^*$-algebra $\mathfrak{K}_{n}$ as the $C^*$-algebra generated by the operators $A_t$ and $B_\gamma$ that have their support inside $\Lambda_n$ and consider $\vartheta_n:=\vartheta|_{\mathfrak{K}_n}$ the linear functional generated by the conditions 
\eqref{eq:def_eta} restricted to $\mathfrak{K}_n$.
Since $\mathfrak{K}_n$ is a finite-dimensional matrix algebra generated by the linearly independent elements  $A_tB_\gamma$ with support inside $\Lambda_n$, it follows that $\vartheta_n$ defines a state, and in particular it is continuous.
By the Hahn-Banach theorem \cite[Proposition 2.3.24]{Bratelli}, we can extend this state to a state in the full $C^*$-algebra $\mathfrak{K}$, and we will denote this state by $\vartheta_n^\sim$. This gives us a sequence of states $\{\vartheta_n^\sim\}_{n\in\NN}$ and since the state space of $\mathfrak{K}$ is $*$-weak compact, there is a subsequence $\{\vartheta_{n_l}^\sim\}_{l\in\NN}$ converging in the $\ast$-weak topology to a state on $\mathfrak{K}$. For $(t,\gamma)\in{\rm Hom}_0(\f{C},\n{G})^{-1}\times {\rm Hom}_0(\f{C},\n{G})_1$ this state maps $A_tB_\gamma$ to 1, because the states $\vartheta_n$ satisfy $\vartheta_n(A_tB_\gamma)=1$ for $n$ big enough. Therefore, we will denote this state by $\vartheta$. 
We have proved that there exists (at least) one state on $\mathfrak{K}$ 
that meets the conditions \eqref{eq:def_eta}. Now, let us prove the uniqueness. To do this, suppose that there exists another state $\vartheta'$ with this property. Consider $F\in\mathfrak{K}_0$ and write it as
$$F\;=\;\sum_{i=1}^n \lambda_i A_{t_i}B_{\gamma_i}$$
then,
\begin{equation*}
    \vartheta(F)=\sum_{i=1}^n\lambda_i=\vartheta'(F)
\end{equation*}
Here, we used Lemma \ref{lemm:lin_indip_g} to guarantee that $F$ can be written in a unique way as a sum, since $A_{t_i}B_{\gamma_i}=P_{\delta^{-1}t_i}Q_{\delta_1\gamma_i}$ are linearly independent generators.
So, $\vartheta$ and $\vartheta'$ coincide in the dense set $\mathfrak{K}_0$. Since there are states, in particular, they are continuous. Therefore, $\vartheta$ and $\vartheta'$ are the same state. 
Let us show now that $\vartheta:\mathfrak{K}_0\to\CC$ is multiplicative. To see this, take $F,G\in \mathfrak{K}_0$ given by
    $$F\;=\;\sum_{i=1}^n \lambda_i A_{t_i}B_{\gamma_i}\;,\qquad G\;=\;\sum_{j=1}^m \lambda'_j A_{t_j'}B_{{\gamma_j'}}\;$$
  with $\lambda_i,\lambda'_j\in\n{C}$.
    Then,
    \begin{align*}
        \vartheta(FG)\;&=\;\vartheta\pa{\pa{\sum_{i=1}^n \lambda_i A_{t_i}B_{\gamma_i}}\pa{\sum_{j=1}^m \lambda'_j A_{{t_j'}}B_{{\gamma'_j}}}}\\
        &=\;\vartheta\pa{\sum_{i=1}^n\sum_{j=1}^m \lambda_i\lambda'_j A_{t_i+{t_j'}}B_{\gamma_j+{\gamma_j'}}}\;=\;
\vartheta\pa{\sum_{i=1}^n\sum_{j=1}^m \lambda_i\lambda'_j }\\
        &=\;\pa{\sum_{i=1}^n \lambda_i}\pa{\sum_{i=j}^m \lambda_j'}\;=\;\vartheta(F)\vartheta(G)\;.
    \end{align*}
where, in the second equality, one uses crucially the commutativity of the $A_t$'s and $B_\gamma$'s.
Now, for arbitrary $F,G\in\mathfrak{K}$, consider $(F_n)_{n\in\NN}$ and $(G_n)_{n\in\NN}$ with $F_n,G_n\in\mathfrak{K}_0$. Then,
$$\vartheta(FG)\;=\;\lim_{n\to\infty}\lim_{m\to\infty} \vartheta(F_nG_m)\;=\;\lim_{n\to\infty}\lim_{m\to\infty} \vartheta(G_mF_n)\;=\;\vartheta(GF)\;.$$
Finally, since it is multiplicative and $\mathfrak{K}$ is abelian, it follows that $\vartheta$ defines a pure state \cite[Corollary 2.3.21]{Bratelli}.
\end{proof}

\medskip

Since $\mathfrak{K}$ is a $C^*$-subalgebra of $\mathfrak{A}$ one can invoke
 \cite[Proposition 2.3.24]{Bratelli} to define extensions of the state $\vartheta$ defined by \eqref{eq:def_eta} to the full algebra $\mathfrak{A}$. Interestingly, 
 any state resulting from such extension is a frustration-free ground state for the {HA-QDM dynamics} described in Section \ref{sec:dyn}.

	\begin{prop}\label{prop:Exixst}
		Let $\omega_0$ be any extension of the seed state $\vartheta$ defined by \eqref{eq:def_eta}. Then $\omega_0\in\mathtt{FG}_{\rm QDM}(\mathfrak{A})$ is a frustration-free ground state for  the {HA-QDM dynamics}.
	\end{prop}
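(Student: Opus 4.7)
The strategy is to verify the two requirements for $\omega_0$ to lie in $\mathtt{FG}_{\rm QDM}(\mathfrak{A})$ in sequence: first the frustration-free identities $\omega_0(A_x^0)=\omega_0(B_y^0)=1$, and then the ground state inequality $-\mathrm{i}\,\omega_0(A^*\delta_{\rm QDM}(A))\geqslant 0$ for every $A\in\mathfrak{A}_{\rm loc}$. Both steps ultimately reduce to an algebraic manipulation inside the GNS representation of $\omega_0$, once it is known that the stabilizer projections fix the cyclic vector.

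For frustration-freeness, I would specialize the defining condition \eqref{eq:def_eta} by taking $\gamma=0$ (so that $B_\gamma=\mathbb{1}$) to obtain $\vartheta(A_t)=1$ for every $t\in{\rm Hom}_0(\f{C},\n{G})^{-1}$; symmetrically, $t=0$ yields $\vartheta(B_\gamma)=1$ for every $\gamma\in{\rm Hom}_0(\f{C},\n{G})_1$. Since $\omega_0|_{\mathfrak{K}}=\vartheta$ by hypothesis and the generalized star operator $A_x^0$ is, by Definition \ref{def:local_operators}, a finite convex combination of generators $A_{x^*_g}\in\mathfrak{K}$, linearity yields $\omega_0(A_x^0)=1$; the argument for $B_y^0$ is identical. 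Once this holds, the remark after Definition \ref{def:ffgs} applies and provides the eigenvector identities $\pi_0(A_x^0)\psi_0=\psi_0=\pi_0(B_y^0)\psi_0$ in the GNS space.

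For the ground state inequality, I would fix $A\in\mathfrak{A}_{\rm loc}$ and pick a finite $\Lambda$ large enough to contain the supports of every stabilizer whose commutator with $A$ is nonzero, which is possible thanks to the uniform locality condition \eqref{ec:condiciones}. Writing $H^\Lambda_{\rm QDM}=-\sum_k P_k$ as a finite sum over the commuting orthogonal projections $P_k\in\{A_x^0,B_y^0\}$ supplied by Lemma \ref{lemma:projections}, one computes
\begin{equation*}
-\mathrm{i}\,\omega_0\bigl(A^*\delta_{\rm QDM}(A)\bigr)\;=\;\omega_0\bigl(A^*[H^\Lambda_{\rm QDM},A]\bigr)\;=\;\sum_k\bigl[\omega_0(A^*AP_k)-\omega_0(A^*P_kA)\bigr]\;.
\end{equation*}
Self-adjointness of $\pi_0(P_k)$ combined with the eigenvector identity gives $\omega_0(A^*AP_k)=\omega_0(A^*A)$, while the orthogonal decomposition $A^*A=A^*P_kA+A^*(\mathbb{1}-P_k)A$ expresses $\omega_0(A^*A)$ as a sum of two nonnegative expectations (each of the form $\omega_0(X^*X)$, with $X=P_kA$ and $X=(\mathbb{1}-P_k)A$), whence $\omega_0(A^*P_kA)\leqslant\omega_0(A^*A)$. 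Every summand is therefore nonnegative.

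The only subtle point I would flag is the logical ordering of the two parts: the ground state computation crucially uses the GNS-level identity $\omega_0(YP_k)=\omega_0(Y)$, which requires $\omega_0(P_k)=1$, so frustration-freeness must be verified first and the remark after Definition \ref{def:ffgs} invoked explicitly before the commutator manipulation can proceed. Beyond this, the entire argument is a routine positivity calculation.
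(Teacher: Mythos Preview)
Your proposal is correct and follows essentially the same route as the paper. The only cosmetic difference is that where you pass to the GNS representation and use the eigenvector identity $\pi_0(P_k)\psi_0=\psi_0$ to obtain $\omega_0(A^*AP_k)=\omega_0(A^*A)$, the paper stays at the state level and invokes Lemma~\ref{lema estado} (applied to the unitary generators $A_{x^*_g}$, $B_{x_*^\gamma}$, each of which has expectation~$1$) to reach the same identity; both arguments then conclude via $\omega_0(A^*(\mathbb{1}-P_k)A)\geqslant 0$ from the fact that $P_k$ is a projection.
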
%
\begin{proof}
The conditions $\omega_0(A_tB_\gamma)=\vartheta(A_tB_\gamma)=1$  for every $t\in {\rm Hom}_0(\f{C},\n{G})^{-1}$ and $\gamma\in {\rm Hom}_0(\f{C},\n{G})_1$ imply after a simple calculation that the frustration-free condition. Therefore, it only remains to prove that  $\omega_0$ is a ground state. For that let $A\in \mathfrak{A}_\Lambda$. Then
\begin{align*}
		-{\rm i}\omega_0(A^*\delta_{\rm QDM}(A))\;&=\;-{\rm i}\omega_0(A^*{\rm i}[H_{\rm QDM}^\Lambda,A])\\
		&=\;\omega_0(A^*H_{\rm QDM}^\Lambda A-A^*AH_{\rm QDM}^\Lambda)\\
		&=\;\sum_{\substack{x\in \f{K}\\
				{\rm supp} (A_x^0)\subseteq \Lambda }}\pa{\omega_0\pa{A^*AA_x^0}-\omega_0\pa{A^*A_x^0 A}}\\
		&\qquad+ \sum_{\substack{x\in \f{K}\\
				{\rm supp} (B_x^0)\subseteq \Lambda }} \pa{\omega_0\pa{A^*AB_x^0}-\omega_0\pa{A^*B_x^0 A}}\;.
	\end{align*}
By using 	Lemma \eqref{lema estado}, one gets
\begin{align*}
		-{\rm i}\omega_0(A^*\delta_{\rm QDM}(A))\;&=\;
		\sum_{\substack{x\in \f{K}\\
				{\rm supp} (A_x^0)\subseteq \Lambda }}\pa{\omega_0\pa{A^*A}-\omega_0\pa{A^*A_x^0 A}}\\
		&\qquad +\sum_{\substack{x\in \f{K}\\
				{\rm supp} (B_x^0)\subseteq \Lambda }}\pa{\omega_0\pa{A^*A}-\omega_0\pa{A^*B_x^0 A}}\\
		&=\sum_{\substack{x\in \f{K}\\
				{\rm supp} (A_x^0)\subseteq \Lambda }}\omega_0\pa{A^*\pa{\n{1}-A_x^0}A}\\
		&\qquad +\sum_{\substack{x\in \f{K}\\
				{\rm supp} (B_x^0)\subseteq \Lambda }}\omega_0\pa{A^*\pa{\n{1}-B_x^0}A}\;.
	\end{align*}
	
Since $A_0^x$ is an  orthogonal projection, one has that
	\begin{equation*}
		\omega_0(A^*(\n{1}-A_0^x)A)\;=\;\omega_0(A^*(\n{1}-A_x^0)^*(\n{1}-A_x^0)A)\;\geqslant\; 0
	\end{equation*}
	Analogously, one has that $\omega_0(A^*({\n{1}-B_x^0})A)\geqslant 0$.
In conclusion, one obtains the inequality $-{\rm i}\omega_0(A^*\delta_{\rm QDM}(A))\geqslant 0$ as a sum of positive elements. This shows that $\omega_0$ meets the condition for the ground state of the  {HA-QDM dynamics}.
\end{proof}

\medskip

One can summarize the main achievement of this section in the following result
which justifies the name of seed state of the {HA-QDM dynamics} for $\vartheta$.

\begin{cor}\label{cor_summ}
One has that $\mathtt{FG}_{\rm QDM}(\mathfrak{A})\neq \emptyset$ and $\mathtt{FG}_{\rm QDM}^{{\circ}}(\mathfrak{A})\neq \emptyset$. Moreover 
$\omega_0\in \mathtt{FG}_{\rm QDM}(\mathfrak{A})$  if and only if 
 $\omega_0|_{\mathfrak{K}}=\vartheta$, or equivalently if and only if 
\begin{equation*}
\omega_0(A_tB_\gamma)\;=\;1\;, \qquad \forall \;(t,\gamma)\in {\rm Hom}_0(\f{C},\n{G})^{-1}\times {\rm Hom}_0(\f{C},\n{G})_1\;.
\end{equation*}
\end{cor}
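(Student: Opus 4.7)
The plan is to assemble the corollary from the three ingredients already in hand: the uniqueness and purity of the seed state $\vartheta$ (Lemma \ref{lemm:pur_st_K}), the sufficiency of being an extension of $\vartheta$ (Proposition \ref{prop:Exixst}), and the necessity of the condition $\omega_0(A_tB_\gamma)=1$ on every frustration-free ground state (Proposition \ref{prop:frist_cond}). The corollary essentially packages these results into a clean characterization.

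For the non-emptiness statements, first I would appeal to the Hahn-Banach theorem in the form \cite[Proposition 2.3.24]{Bratelli} to produce at least one state $\omega_0$ on $\mathfrak{A}$ whose restriction to $\mathfrak{K}$ is $\vartheta$; Proposition \ref{prop:Exixst} then immediately gives $\omega_0\in\mathtt{FG}_{\rm QDM}(\mathfrak{A})$. For the pure case, I would consider the set
\begin{equation*}
\mathtt{E}_\vartheta\;:=\;\{\omega\in\mathtt{S}(\mathfrak{A})\;:\;\omega|_{\mathfrak{K}}=\vartheta\}\;,
\end{equation*}
which is non-empty, convex, and $\ast$-weakly compact. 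Because $\vartheta$ is pure on $\mathfrak{K}$, $\mathtt{E}_\vartheta$ is a face of $\mathtt{S}(\mathfrak{A})$, so the Krein-Milman theorem yields extreme points of $\mathtt{E}_\vartheta$, and any such extreme point is automatically extreme in $\mathtt{S}(\mathfrak{A})$, i.e.\ a pure state. Combined with Proposition \ref{prop:Exixst}, this proves $\mathtt{FG}_{\rm QDM}^{\circ}(\mathfrak{A})\neq\emptyset$.

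For the equivalence, the implication $\omega_0|_{\mathfrak{K}}=\vartheta\Rightarrow\omega_0\in\mathtt{FG}_{\rm QDM}(\mathfrak{A})$ is exactly Proposition \ref{prop:Exixst}, while the converse uses Proposition \ref{prop:frist_cond}: any $\omega_0\in\mathtt{FG}_{\rm QDM}(\mathfrak{A})$ satisfies $\omega_0(A_tB_\gamma)=1$, hence agrees with $\vartheta$ on every generator $A_tB_\gamma$, therefore on the dense $\ast$-subalgebra $\mathfrak{K}_0$ of finite linear combinations of such monomials, and then on all of $\mathfrak{K}$ by continuity of states. The third formulation (the explicit condition on $A_tB_\gamma$) is equivalent to $\omega_0|_{\mathfrak{K}}=\vartheta$ by the uniqueness part of Lemma \ref{lemm:pur_st_K}.

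I do not expect a serious obstacle here; the corollary is essentially a bookkeeping statement. The only mildly delicate point is justifying the existence of a \emph{pure} extension, which requires the face-plus-Krein-Milman argument rather than a direct Hahn-Banach application. Everything else reduces to citing the three propositions and invoking continuity on the dense subalgebra $\mathfrak{K}_0$.
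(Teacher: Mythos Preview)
Your proposal is correct and follows essentially the same route as the paper: invoke Proposition \ref{prop:Exixst} for $\mathtt{FG}_{\rm QDM}(\mathfrak{A})\neq\emptyset$, Proposition \ref{prop:frist_cond} for the necessity direction, and the purity of $\vartheta$ for the existence of a pure extension. The only difference is cosmetic: your face-plus-Krein--Milman argument for the pure extension is exactly the content of \cite[Proposition 2.3.24]{Bratelli} (which already asserts that a pure state on a $C^*$-subalgebra admits a pure extension), so the paper simply cites that proposition a second time rather than unpacking it.
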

\proof
Proposition \ref{prop:Exixst} shows that $\mathtt{FG}_{\rm QDM}(\mathfrak{A})\neq \emptyset$ and in view of Proposition \ref{prop:frist_cond} any element $\omega_0\in \mathtt{FG}_{\rm QDM}(\mathfrak{A})$ is an extension of the pure state $\vartheta\in\mathtt{P}(\mathfrak{K})$ constructed in Lemma \ref{lemm:pur_st_K}. Finally, since $\vartheta$ is pure, it has at least one pure extension \cite[Proposition 2.3.24]{Bratelli} showing that 
 $\mathtt{FG}_{\rm QDM}^{{\circ}}(\mathfrak{A})\neq \emptyset$.
\qed

\subsection{Characterization of the space of frustration-free ground states}
In this section, we will introduce a middle abelian $C^*$-algebra 
$\mathfrak{K}\subset \mathfrak{K}'\subset \mathfrak{A}$
which has the role of adding the necessary information 
for differentiating between different extensions of the seed state $\vartheta$. More precisely, will we see that any $\tau\in \mathtt{S}(\mathfrak{K}')$ such that $\tau|_{\mathfrak{K}}=\vartheta$
fix a unique $\omega_{0,\tau}\in \mathtt{FG}_{\rm QDM}(\mathfrak{A})$ by extension. For this reason, $\mathfrak{K}'$ will be called the \emph{fixing} $C^*$-algebra.

\medskip

Let $\omega_0\in \mathtt{FG}_{\rm QDM}(\mathfrak{A})$ be a given frustration-free ground state of the  {HA-QDM dynamics}. We are interested here in analyzing how $\omega_0$ acts on the monomial
$P_tQ_\gamma$
  with $t\in {\rm Hom}_0(\f{C},\n{G})^0$ and $\gamma\in {\rm Hom}_0(\f{C},\n{G})_0$.
  
\medskip  
  
First of all, it is worth observing that in view of Corollary \ref{cor_summ} on has that $
\omega_0(P_{\delta^{-1}t'}Q_{\delta_1\gamma'})=\omega_0(A_{t'}B_{\gamma'})=1$ 
for every  $t'\in {\rm Hom}_0(\f{C},\n{G})^{-1}$ and $\gamma'\in {\rm Hom}_0(\f{C},\n{G})_1$. The one gets
\begin{equation}
\omega_0(P_{t}Q_{\gamma})\;=\;1 \qquad \forall \;(t,\gamma)\in {\rm Im}{(\delta^{-1})}\times  {\rm Im}{(\delta_{1})}\;.
\end{equation}
Now, let $s\in {\rm Hom}_0(\f{C},\n{G})^{-1}$
Using Lemma \eqref{lema estado} and the commutation relations \eqref{relatP-Q} and \eqref{relatP-Q}, one  gets
\begin{align*}
	\omega_0(P_tQ_\gamma)\;&=\;\omega_0(A_s)\omega_0(P_tQ_\gamma)\omega_0(A_{-s})    \;=\;\omega_0(A_s P_t Q_\gamma A_{-s})\\
	&=\;\omega_0(P_{\delta^{-1}s} P_t Q_\gamma P_{-\delta^{-1}s})\;=\;\gamma(-\delta^{-1}s)\omega_0(P_tQ_\gamma)\\
	&=(\delta_0\gamma)(-s)\omega_0(P_tQ_\gamma)
\end{align*}
where the last equality is implied by the duality in Lemma \ref{lem:duality}. If $\gamma\notin {\rm Ker}(\delta_0)$
then $(\delta_0\gamma)(s)\neq 1$ for some $s\in {\rm Hom}_0(\f{C},\n{G})^{-1}$. Since the equality above must be valid for every $s\in {\rm Hom}_0(\f{C},\n{G})^{-1}$ one infers that $\omega_0(P_{t}Q_{\gamma})=0$ if $\gamma\notin {\rm Ker}(\delta_0)$.
By replacing in the computation above $A_s$ with $B_\nu$, for some $\nu\in {\rm Hom}_0(\f{C},\n{G})_{1}$,  one ends with the equality
$\omega_0(P_tQ_\gamma)=\nu(\delta^0t)\omega_0(P_tQ_\gamma)$.
With the same argument above one gets $\omega_0(P_{t}Q_{\gamma})=0$ if $t\notin {\rm Ker}(\delta^0)$. To sum up one has that
\begin{equation}
\omega_0(P_{t}Q_{\gamma})\;=\;0 \qquad {\rm if}\; \; t\in{\rm supp}(\delta^0)\;\;\text{or}\;\;
\gamma\in{\rm supp}(\delta_0)\;,
\end{equation}
where ${\rm supp}(\delta^0):={\rm Hom}_0(\f{C},\n{G})^0\setminus{\rm Ker}(\delta^0)$
and ${\rm supp}(\delta_0):={\rm Hom}_0(\f{C},\n{G})_0\setminus{\rm Ker}(\delta_0)$. 
Furthermore, using again Lemma \eqref{lema estado}, one ends with
\begin{equation}\label{eq:inv_class}
	\omega_0(P_tQ_\gamma)\;=\;\omega_0(A_sP_tQ_\gamma B_\nu)\;=\;\omega_0(P_{t+\delta^{-1}s}Q_{\gamma+\delta_1 \nu})\;.
\end{equation}
This shows that the values of $\omega_0(P_tQ_\gamma)$ depends only on 
$t\in {\rm Hom}_0(\f{C},\n{G})^0$ modulo ${\rm Im}{(\delta^{-1})}$,
and on $\gamma\in {\rm Hom}_0(\f{C},\n{G})_0$ modulo ${\rm Im}{(\delta_{1})}$.

\medskip

The analysis above fixes many of the values of  $\omega_0(P_tQ_\gamma)$ but one still has  a certain degree of freedom to completely specify $\omega_0$. More precisely to fix $\omega_0$ one has to know the numbers
\begin{equation}\label{eq:values}
    c_{t,\gamma}\;:=\;\omega_0(P_tQ_\gamma)\;,\quad \forall\; (t,\gamma)\in\big( {\rm Ker}(\delta^0)\times {\rm Ker}(\delta_0)\big)\setminus \big({\rm Im}{(\delta^{-1})}\times  {\rm Im}{(\delta_{1})}\big)\;.
\end{equation}

In the following, we will describe how to parametrize the freedom provided by the latter conditions.

\medskip

It has emerged that a special role is played by the set of operators
\[
\mathfrak{Z}\;:=\;\left\{P_t Q_\gamma \;\middle|\; (t,\gamma)\in {\rm Ker}(\delta^0)\times {\rm Ker}(\delta_0)\right\}\;\subset\;\mathfrak{A}_{\rm loc}\;.
\]
We will refer to $\mathfrak{Z}$ as the set of \emph{units} for the 
frustration-free ground states of the {HA-QDM dynamics}. 

There is a deep relationship between the $C^*$-algebra $\mathfrak{K}$ in Definition \ref{def:K} and set
$\mathfrak{Z}$. Let $\mathfrak{K}'$ be the \emph{commutant} of $\mathfrak{K}$, \emph{i.e.} the set of all operators in $\mathfrak{A}$ that commute with every element of $\mathfrak{K}$. 
Since $\mathfrak{K}$ is abelian one has $\mathfrak{K}\subseteq\mathfrak{K}'$. 
As before, it is only necessary to consider the elements that have finite support.

\begin{lemma}\label{lemma:commutant_loc}
    It holds true that
    \[
\mathfrak{K}' 
 \;=\;  \overline{\mathfrak{A}_{\rm loc}\cap\mathfrak{K}'}^{\;\|\;\|}
\;.
\]
\end{lemma}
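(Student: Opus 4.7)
The inclusion $\overline{\mathfrak{A}_{\rm loc}\cap\mathfrak{K}'}^{\,\|\cdot\|}\subseteq\mathfrak{K}'$ is immediate, since $\mathfrak{K}'$ is norm-closed (being the commutant of a self-adjoint set in a $C^*$-algebra) and contains $\mathfrak{A}_{\rm loc}\cap\mathfrak{K}'$. For the non-trivial inclusion I would exploit the UHF structure of $\mathfrak{A}$. Fix an increasing exhaustion $\Lambda_1\subset\Lambda_2\subset\cdots$ of $\f{K}$ with $\bigcup_n \Lambda_n=\f{K}$, and let $E_n:\mathfrak{A}\to\mathfrak{A}_{\Lambda_n}$ denote the canonical trace-preserving conditional expectation. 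It is standard in the UHF setting that $E_n$ is an $\mathfrak{A}_{\Lambda_n}$-bimodule map satisfying $E_n(R)\to R$ in norm for every $R\in\mathfrak{A}$. It therefore suffices to show that if $R\in\mathfrak{K}'$ then $E_n(R)\in\mathfrak{K}'$ for all $n$, since then $E_n(R)\in \mathfrak{A}_{\Lambda_n}\cap\mathfrak{K}'\subset \mathfrak{A}_{\rm loc}\cap\mathfrak{K}'$ provides the required approximating sequence.

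The key observation is that each generator $V\in\{A_t,B_\gamma\}$ of $\mathfrak{K}$ is a pure tensor product of single-site unitaries. Indeed $A_t=P_{\delta^{-1}t}=\bigotimes_{x\in \f{K}} P_{(\delta^{-1}t)(x)}$, so $A_t=V_{\rm in}\otimes V_{\rm out}$ with respect to the factorization $\mathfrak{H}\simeq \mathfrak{H}_{\Lambda_n}\otimes \mathfrak{H}_{\Lambda_n^c}$, where $V_{\rm in}\in\mathfrak{A}_{\Lambda_n}$ and $V_{\rm out}\in\mathfrak{A}_{\Lambda_n^c}$ are each themselves unitary; an identical decomposition holds for $B_\gamma$. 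Writing $E_n={\rm id}_{\mathfrak{A}_{\Lambda_n}}\otimes \tau$ with $\tau$ the normalized trace on $\mathfrak{A}_{\Lambda_n^c}$, the bimodule property of $E_n$ over $\mathfrak{A}_{\Lambda_n}$ together with the unitary invariance of $\tau$ yields, first on local $R\in\mathfrak{A}_{\rm loc}$ and then by continuity on all of $\mathfrak{A}$, the identity
\begin{equation*}
E_n(V R V^*)\;=\;V_{\rm in}\,E_n(R)\,V_{\rm in}^*\;,\qquad \forall\,R\in\mathfrak{A}\;.
\end{equation*}
When $R\in\mathfrak{K}'$ one has $VRV^*=R$, so the identity forces $[E_n(R),V_{\rm in}]=0$. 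The pure tensor decomposition of $V$ then upgrades this to commutation in $\mathfrak{A}$: viewing $E_n(R)$ as $E_n(R)\otimes \n{1}_{\Lambda_n^c}$, both $(E_n(R)\otimes \n{1})(V_{\rm in}\otimes V_{\rm out})$ and $(V_{\rm in}\otimes V_{\rm out})(E_n(R)\otimes\n{1})$ equal $V_{\rm in}E_n(R)\otimes V_{\rm out}$. Running this for every $A_t$ and $B_\gamma$, and using Lemma \ref{lemm:lin_indip_g} to see that these generate $\mathfrak{K}$, one concludes $E_n(R)\in\mathfrak{K}'$, so that $E_n(R)\to R$ in $\mathfrak{A}_{\rm loc}\cap\mathfrak{K}'$ and the reverse inclusion is established.

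The main obstacle is to make the identity $E_n(VRV^*)=V_{\rm in}E_n(R)V_{\rm in}^*$ rigorous for arbitrary $R\in\mathfrak{A}$; it is clean on $R\in\mathfrak{A}_{\Lambda_m}$ with $m\geqslant n$, where the explicit tensor expansion and the cyclicity of $\tau$ give it directly, and then extends by continuity since both sides depend continuously on $R$. Crucially, this step uses that $V$ itself factorizes as a tensor product aligned with the splitting $\Lambda_n\sqcup\Lambda_n^c$; this is the rigid property built into $\mathfrak{K}$ via the single-site building blocks $P_g, Q_\alpha$, and without it one could only deduce commutation of $E_n(R)$ with $\mathfrak{K}\cap\mathfrak{A}_{\Lambda_n}$, which would not suffice.
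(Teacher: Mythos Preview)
Your argument is correct and takes a genuinely different route from the paper. The paper approximates a given $A\in\mathfrak{K}'$ by first picking a local $A_n$ with $\|A-A_n\|<1/n$ and then \emph{averaging} $A_n$ over conjugation by a finite set of representatives $A_sB_\alpha$ (indexed by equivalence classes determined by how $\delta^{-1}s$ and $\delta_1\alpha$ restrict to $\supp(A_n)$); the group law $A_sB_\alpha A_rB_\beta=A_{s+r}B_{\alpha+\beta}$ forces the averaged element $A_n'$ into $\mathfrak{K}'$, and since $A$ itself is fixed under each conjugate the average stays $1/n$-close to $A$. You instead apply the canonical conditional expectation $E_n$ directly to $A$ and show that $E_n$ preserves $\mathfrak{K}'$, exploiting that each generator $V\in\{A_t,B_\gamma\}$ is a pure tensor of single-site unitaries, so that trace-invariance yields $E_n(VRV^*)=V_{\rm in}E_n(R)V_{\rm in}^*$. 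Your approach is more structural and would work for any $\mathfrak{K}$ generated by on-site tensor-product unitaries, without using the group structure of the generators; the paper's is self-contained and avoids invoking the conditional-expectation machinery. One minor point: that the $A_t,B_\gamma$ generate $\mathfrak{K}$ is simply Definition~\ref{def:K}, not Lemma~\ref{lemm:lin_indip_g}.
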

\begin{proof}
    Given an element $A\in\mathfrak{K}'$, we will construct a sequence of operators in $\mathfrak{A}_{\rm loc} \cap \mathfrak{K}'$ that converges to $A$. Let $n\in\NN$. By density of the local operators, there exists a $A_n\in\mathfrak{A}_{\rm loc}$ such that $$\norm{A-A_n}\;<\; \frac{1}{n}\;.$$

    Denote by $\Lambda$ the support of $A_n$ and consider the equivalent relation $\sim$ on $ {\rm Hom}_0(\f{C},\n{G})^{-1}$ by $r\sim s$ if $(\delta^{-1}s)|_\Lambda=(\delta^{-1}r)|_\Lambda$. Also consider the equivalent relation $\sim$ on ${\rm Hom}_0(\f{C},\n{G})_1$ by $\alpha\sim \beta$ if $(\delta_{1}\alpha)|_\Lambda=(\delta_{1}\beta)|_\Lambda$. With this, we define 

    $$
 A_n'\;:=\;\frac{1}{\mathcal{N}_n}\sum_{([s],[\alpha])\in \mathcal{I}_n }A_sB_\alpha A_n B_{-\alpha}A_{-s}$$
  
where $\mathcal{I}_n={\rm Hom}_0(\f{C},\n{G})^{-1}/\sim\times{\rm Hom}_0(\f{C},\n{G})_1/\sim$ and $\mathcal{N}_n:=|\mathcal{I}_n|$ its cardinality. 

Note that $A_n'$ is independent of the representative of the different classes. To see this, notice that $A_n'|_{\Lambda^c}=\n{1}$ and on $\Lambda$ we have that $(A_sB_\alpha A_n B_{-\alpha}A_{-s})|_\Lambda=(A_rB_\beta A_n B_{-\beta}A_{-r})_\Lambda$ by the construction of the equivalent relations.

One also has that  $A_n'\in\mathfrak{A}_{\rm loc}$ since it is a finite sum of local elements. Also, this operator commutes with all the monomials $A_rB_\beta$. 
In fact, a direct computation shows that
\begin{align*}
     A_rB_{\beta}A_n'B_{-\beta}A_{-r}\;&=\;\frac{1}{\mathcal{N}_n}\sum_{([s],[\alpha])\in \mathcal{I}_n  }A_rB_\beta A_sB_\alpha A_n B_{-\alpha}A_{-s} B_{-\beta}A_{-r}\\
     &=\;\frac{1}{\mathcal{N}_n}\sum_{([s],[\alpha])\in \mathcal{I}_n  }A_{r+s}B_{\alpha+\beta}A_nB_{-\alpha-\beta}A_{-r-s}\\
     \end{align*}
where one uses the commutativity between $A_{\pm r}$ and $B_{\pm\alpha}$. Introducing the new variables      
 $s':=s+r$ and $\alpha':=\alpha+\beta$ one gets 
\begin{align*}
     A_rB_{\beta}A_n'B_{-\beta}A_{-r}\;&=\;\frac{1}{\mathcal{N}_n}\sum_{([s'-r],[\alpha'-\beta])\in \mathcal{I}_n  }
     A_{s'}B_{\alpha'}A_nB_{-\alpha'}A_{-s'}\\
     &=\:\frac{1}{\mathcal{N}_n}\sum_{([s'],[\alpha'])\in \mathcal{I}_n  }A_{s'}B_{\alpha'}A_nB_{-\alpha'}A_{-s'}\\
     &=\;A_n'\\
     \end{align*} 
 where the second equality is justified by the fact that we are adding all the possible equivalent classes. 
 By a density argument, one infers that 
$A_n'\in \mathfrak{A}_{\rm loc}\cap \mathfrak{K}'$. Now, using the assumption $A\in\mathfrak{K}'$ one can write
\begin{align*}
    \norm{A_n'-A}\;&=\;\norm{A_n'-\frac{1}{\mathcal{N}_n}\sum_{([s],[\alpha])\in \mathcal{I}_n  }A_sB_\alpha A B_{-\alpha}A_{-s}}\\
    &=\;\frac{1}{\mathcal{N}_n}\norm{\sum_{([s],[\alpha])\in \mathcal{I}_n  }A_sB_\alpha (A_n-A) B_{-\alpha}A_{-s}}\\
    &\leqslant\;\norm{A_n-A}\;<\;\frac{1}{n}
\end{align*}
where the first inequality is obtained by the triangle inequality. Therefore, $A_n'$ provides the desired sequence converging to $A$.
\end{proof}

\medskip

We will refer to $\mathfrak{K}'$ as the fixing $C^*$-algebra
for the  {HA-QDM dynamics} for reasons that will be clarified below.
Let $\mathrm{span}(\mathfrak{Z})$ be the set of finite linear combinations of elements of $\mathfrak{Z}$. 
\begin{lemma}\label{Lem:span-K'}
    It holds true that
    \[
    \mathfrak{K}'\;=\;\overline{\mathrm{span}(\mathfrak{Z})}^{\;\|\;\|}\;.
    \]
    \end{lemma}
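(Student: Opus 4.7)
The plan is to prove the two inclusions separately. For $\overline{\mathrm{span}(\mathfrak{Z})}\subseteq \mathfrak{K}'$ I would check that every generator $P_t Q_\gamma$ with $(t,\gamma)\in\mathrm{Ker}(\delta^0)\times\mathrm{Ker}(\delta_0)$ commutes with every monomial $A_s B_\nu = P_{\delta^{-1}s}\,Q_{\delta_1\nu}$. Two applications of the Weyl-type relation \eqref{eq:commutation_P_and_Q}, combined with the group laws \eqref{relatP-Q} and Lemma \ref{lem:commut_AB}, yield
\[
(A_s B_\nu)\,(P_t Q_\gamma)\,(A_s B_\nu)^{-1} \;=\; (\delta_1\nu)(t)\,\gamma(\delta^{-1}s)^{-1}\,P_t Q_\gamma.
\]
By the duality Lemma \ref{lem:duality} this scalar equals $\nu(\delta^0 t)\,(\delta_0\gamma)(s)^{-1}$, which is $1$ precisely because $\delta^0 t=0$ and $\delta_0\gamma=0$. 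Since the monomials $A_sB_\nu$ linearly span a dense $\ast$-subalgebra of $\mathfrak{K}$, this places $P_t Q_\gamma$ in $\mathfrak{K}'$; taking finite linear combinations and norm-closures then proves the inclusion.

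For the reverse inclusion $\mathfrak{K}'\subseteq \overline{\mathrm{span}(\mathfrak{Z})}$ I would first use Lemma \ref{lemma:commutant_loc} to reduce to showing $\mathfrak{A}_{\mathrm{loc}}\cap\mathfrak{K}' \subseteq \mathrm{span}(\mathfrak{Z})$. Given such an $A$, Lemma \ref{lemm:lin_indip_g} provides the unique expansion $A=\sum_{i=1}^n \lambda_i P_{t_i}Q_{\gamma_i}$ with $\lambda_i\neq 0$ and pairwise distinct $(t_i,\gamma_i)$. Applying the identity above term-by-term and using $A\in\mathfrak{K}'$, for every $s\in {\rm Hom}_0(\f{C},\n{G})^{-1}$ and $\nu\in {\rm Hom}_0(\f{C},\n{G})_1$ one gets
\[
A \;=\; (A_s B_\nu)\,A\,(A_s B_\nu)^{-1} \;=\; \sum_{i=1}^n \lambda_i\,\nu(\delta^0 t_i)\,(\delta_0\gamma_i)(s)^{-1}\,P_{t_i}Q_{\gamma_i}.
\]
Uniqueness of the expansion forces $\nu(\delta^0 t_i)(\delta_0\gamma_i)(s)^{-1}=1$ for every admissible $s,\nu$ and every $i$. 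Setting $s=0$ gives $\nu(\delta^0 t_i)=1$ for all $\nu$; specializing $\nu$ to the localized dual maps $x_*^\alpha$ of Definition \ref{def:localized maps} and using that the irreducible characters separate the points of each finite abelian group $\n{G}_{n-1}$ forces $\delta^0 t_i=0$. The symmetric choice $\nu=0$, localizing $s$ via the maps $x^*_g$, yields $\delta_0\gamma_i=0$. Hence each $P_{t_i}Q_{\gamma_i}\in\mathfrak{Z}$, so $A\in\mathrm{span}(\mathfrak{Z})$.

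The main technical hurdle is the scalar bookkeeping in the first displayed equation: correctly combining the two Weyl commutations and then transforming the character evaluations $\gamma(\delta^{-1}s)$ and $(\delta_1\nu)(t)$ via Lemma \ref{lem:duality} into the $\delta^0$- and $\delta_0$-level quantities that directly detect membership in $\mathrm{Ker}(\delta^0)\times\mathrm{Ker}(\delta_0)$. Once this computation is in place, the forward inclusion is immediate and the backward one reduces to a standard Pontryagin-duality separation argument; the passage from $\mathfrak{K}'$ to its local part has already been secured in Lemma \ref{lemma:commutant_loc}, so no further density work is needed.
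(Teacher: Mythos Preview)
Your proof is correct and follows essentially the same strategy as the paper: prove both inclusions, use the Weyl relation \eqref{eq:commutation_P_and_Q} together with the duality Lemma \ref{lem:duality} to identify the relevant scalar, and invoke Lemma \ref{lemma:commutant_loc} plus the uniqueness in Lemma \ref{lemm:lin_indip_g} for the reverse inclusion. The only cosmetic differences are that the paper treats $A_s$ and $B_\nu$ separately and works with commutators $[R,A_s]$, $[R,B_\nu]$ (so the monomials get shifted to $P_{t_i+\delta^{-1}s}Q_{\gamma_i}$, etc.), whereas you conjugate by $A_sB_\nu$ in one step and then specialize $s=0$ or $\nu=0$; your explicit Pontryagin-separation argument via the localized maps $x_*^\alpha$, $x^*_g$ spells out a step the paper leaves implicit.
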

\begin{proof}
Let $P_tQ_\gamma\in \mathfrak{Z}$ and $A_s\in\mathfrak{K}$ with  $s\in {\rm Hom}_0(\f{C},\n{G})^{-1}$. Then
\begin{equation}\label{eq:aux_1}
\begin{aligned}
(P_tQ_\gamma)A_s\;&=\;P_tQ_\gamma P_{\delta^{-1}s}\;=\;\gamma(\delta^{-1}s)P_tP_{\delta^{-1}s}Q_\gamma\\
&=\;(\delta_0\gamma)(s)P_{\delta^{-1}s}(P_tQ_\gamma)\;=\;A_s(P_tQ_\gamma)\;
\end{aligned}
\end{equation}
where $(\delta_0\gamma)(s)=1$ for every $s$ in view of  the condition $\gamma\in{\rm Ker}(\delta_0)$. Therefore, $P_tQ_\gamma$ commutes with any $A_s$. In a similar way let  $B_\nu\in\mathfrak{K}$ with $\nu\in {\rm Hom}_0(\f{C},\n{G})_{1}$. Then
\begin{equation}\label{eq:aux_2}
\begin{aligned}
(P_tQ_\gamma)B_\nu\;&=\;P_tQ_\gamma Q_{\delta_{1}\nu}\;=\;P_tQ_{\delta_{1}\nu}Q_\gamma
\;=\;(\delta_{1}\nu)(t)Q_{\delta_{1}\nu}(P_tQ_\gamma)\\
&=\;\nu(\delta^0 t)B_\nu(P_tQ_\gamma)\;=\;B_\nu(P_tQ_\gamma)
\end{aligned}
\end{equation}
where again $\nu(\delta^0 t)=1$ for every $\nu$ since 
$t\in{\rm Ker}(\delta^0)$. Then, $P_tQ_\gamma$ also commutes with any $B_\nu$. Then $P_tQ_\gamma$ commutes with any element in $\mathfrak{K}_0$. Let $T\in \mathfrak{K}$ and $\{T_n\}\subset \mathfrak{K}_0$ such that $T_n\to T$ in norm. Then $[T,P_tQ_\gamma]=[T-T_n,P_tQ_\gamma]$
which implies that $\|[T,P_tQ_\gamma]\|\leqslant 2 \|T-T_n\|$. Since the right-hand side can be chosen arbitrarily small, one concludes that 
$[T,P_tQ_\gamma]=0$.
Therefore $P_tQ_\gamma\in \mathfrak{K}'$ and for the arbitrariness 
of $t$ and $\gamma$ one concludes that $\mathfrak{Z}\subseteq \mathfrak{K}'$. Since $\mathfrak{K}'$ is closed under linear combinations and convergence in norm, one ends with  
\[
\overline{\mathrm{span}(\mathfrak{Z})}^{\;\|\;\|}\;\subseteq \;\mathfrak{K}'\;.
\]
For the opposite inclusion, consider a finite sum
\[
R\;=\;\sum_{i=1}^n\lambda_{i}P_{t_i}Q_{\gamma_i}\;\in\;\mathfrak{A}_{\rm loc}\cap \mathfrak{K}'
\]
with $t_i\in {\rm Hom}_0(\f{C},\n{G})^0$ and $\gamma_i\in {\rm Hom}_0(\f{C},\n{G})_0$ for every $i=1,\ldots,n$. The condition $R\in \mathfrak{K}'$ implies the commutation relations $[R,A_s]=0=[R,B_\nu]$
for every $s\in {\rm Hom}_0(\f{C},\n{G})^{-1}$ and $\nu\in {\rm Hom}_0(\f{C},\n{G})_{1}$. The first of these conditions provides
\[
0\;=\;[R,A_s]\;=\;\sum_{i=1}^n\lambda_{i}[P_{t_i}Q_{\gamma_i},A_s]\;=\;
\sum_{i=1}^n\lambda_{i}((\delta_0\gamma_i)(s)-1)P_{t_i+\delta^{-1}s}Q_{\gamma_i}
\]
after a computation similar to \eqref{eq:aux_1}. Since the monomials 
$P_{t_i+\delta^{-1}s}Q_{\gamma_i}$ are linearly independents generators of $\mathfrak{A}_{\rm loc}$ (see Lemma \ref{lemm:lin_indip_g}) and the $\lambda_i\neq 0$ one concludes that $(\delta_0\gamma_i)(s)=1$. Since this should be valid for every $s$ one ends with $\{\gamma_i\}_{i=1,\ldots,n}\subset {\rm Ker}(\delta_0)$. Similarly one has that  
\[
0\;=\;[R,B_\nu]\;=\;\sum_{i=1}^n\lambda_{i}[P_{t_i}Q_{\gamma_i},B_\nu]\;=\;
\sum_{i=1}^n\lambda_{i}(\nu(\delta^0t_i)-1)P_{t_i}Q_{\gamma_i+\delta_1\nu}
\]
following the same computation as in \eqref{eq:aux_2}. Again one obtains $\nu(\delta^0t_i)=1$ for every $\nu$ which implies that 
$\{t_i\}_{i=1,\ldots,n}\subset {\rm Ker}(\delta^0)$. Then, one concludes that $
\mathfrak{A}_{\rm loc}\cap\mathfrak{K}'\subseteq\mathrm{span}(\mathfrak{Z})$ which implies
\[
\overline{\mathrm{span}(\mathfrak{Z})}^{\;\|\;\|}\;\supseteq\;\overline{\mathfrak{A}_{\rm loc}\cap\mathfrak{K}'}^{\;\|\;\|}
\;=\;\mathfrak{K}'\;.
\] 
For the last equality, one uses Lemma \ref{lemma:commutant_loc}.
\end{proof}

\medskip

The next result describes the main role of the $C^*$-algebra 
$\mathfrak{K}'$ in the study of $\mathtt{FG}_{\rm QDM}(\mathfrak{A})$,
and justify the name of {fixing} $C^*$-algebra for it.
\begin{cor}\label{cor_uni_ext}
Let $\mathtt{S}_{\vartheta}(\mathfrak{K}')\subset \mathtt{S}(\mathfrak{K}')$ be the subset of states $\tau$ in $\mathfrak{K}'$ such that $\tau|_{\mathfrak{K}}=\vartheta$. Then, there is a bijection
$ \mathtt{FG}_{\rm QDM}(\mathfrak{A})\simeq\mathtt{S}_{\vartheta}(\mathfrak{K}')$.
\end{cor}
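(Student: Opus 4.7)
The plan is to show that the restriction map $\omega_0 \mapsto \omega_0|_{\mathfrak{K}'}$ gives the desired bijection. First I would check that this map is well-defined into $\mathtt{S}_{\vartheta}(\mathfrak{K}')$: if $\omega_0 \in \mathtt{FG}_{\rm QDM}(\mathfrak{A})$ then Corollary \ref{cor_summ} forces $\omega_0|_{\mathfrak{K}} = \vartheta$, and since $\mathfrak{K} \subseteq \mathfrak{K}'$ (abelianness) the further restriction of $\omega_0|_{\mathfrak{K}'}$ to $\mathfrak{K}$ is $\vartheta$.

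For injectivity I would exploit the explicit characterization of the values $\omega_0(P_t Q_\gamma)$ developed in the paragraphs preceding the corollary. Suppose $\omega_0, \omega_0' \in \mathtt{FG}_{\rm QDM}(\mathfrak{A})$ satisfy $\omega_0|_{\mathfrak{K}'} = \omega_0'|_{\mathfrak{K}'}$. By Lemma \ref{lemm:lin_indip_g} it suffices to show $\omega_0(P_t Q_\gamma) = \omega_0'(P_t Q_\gamma)$ for every generator. If $(t,\gamma) \notin \ker(\delta^0) \times \ker(\delta_0)$, the computation leading to \eqref{eq:inv_class} gives $\omega_0(P_t Q_\gamma) = 0 = \omega_0'(P_t Q_\gamma)$. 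Otherwise $P_t Q_\gamma \in \mathfrak{Z} \subset \mathfrak{K}'$ by Lemma \ref{Lem:span-K'}, and the two states agree there by assumption. Density of $\mathfrak{A}_{\rm loc}$ in $\mathfrak{A}$ then yields $\omega_0 = \omega_0'$.

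For surjectivity, given $\tau \in \mathtt{S}_{\vartheta}(\mathfrak{K}')$, I would invoke \cite[Proposition 2.3.24]{Bratelli} to extend $\tau$ to a state $\omega_0 \in \mathtt{S}(\mathfrak{A})$. By construction, $\omega_0|_{\mathfrak{K}} = \tau|_{\mathfrak{K}} = \vartheta$, so Corollary \ref{cor_summ} places $\omega_0$ in $\mathtt{FG}_{\rm QDM}(\mathfrak{A})$. Since $\omega_0$ extends $\tau$, its restriction to $\mathfrak{K}'$ recovers $\tau$, producing the required preimage.

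The main subtlety lies in the compatibility between these two steps: Hahn--Banach extension from $\mathfrak{K}'$ to $\mathfrak{A}$ is highly non-unique at the level of positive functionals, so the bijection is possible only because the frustration-free condition rigidly splits every generator $P_t Q_\gamma$ into one of two mutually exclusive regimes, those in $\mathfrak{Z} \subseteq \mathfrak{K}'$ (where the value is fixed by $\tau$) and those outside (where the value is forced to be zero). This dichotomy, already established in the preceding discussion, is the real content of the statement, and once invoked the bijection is essentially a bookkeeping argument.
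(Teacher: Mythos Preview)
Your proposal is correct and takes essentially the same approach as the paper: both rely on Corollary~\ref{cor_summ}, the dichotomy on generators $P_tQ_\gamma$ established in the discussion preceding the statement, and Lemma~\ref{Lem:span-K'}. The only cosmetic difference is that you organize the argument around the restriction map $\omega_0\mapsto\omega_0|_{\mathfrak{K}'}$ and prove injectivity/surjectivity, whereas the paper works with its inverse, showing each $\tau$ has a \emph{unique} extension to a frustration-free ground state.
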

\proof
First, notice that a state $\tau$ in $\mathtt{S}_{\vartheta}(\mathfrak{K}')$ that satisfies $\tau|_{\mathfrak{K}}=\vartheta$ can be extended to a frustration free ground state $\omega_0$ in $\mathtt{FG}_{\rm QDM}(\mathfrak{A})$ by the Proposition \ref{prop:Exixst}. Using Eq.\eqref{eq:values} we see that this extension is completely determined by the values it takes on operators of the form $P_tQ_\gamma$ with $(t,\gamma)\in {\rm Ker}(\delta^0)\times {\rm Ker}(\delta_0)$. But Lemma \ref{Lem:span-K'} tells us that these operators generate $\mathfrak{K}'$. Therefore, the values on operators $P_tQ_\gamma$ are already known, and we conclude that
\begin{equation}\label{eq:extension}
\omega_0(P_tQ_\gamma)\;=\;
\left\{
\begin{aligned}
&\tau(P_tQ_\gamma)&&{\rm if}\;\;P_tQ_\gamma\in \mathfrak{K}'\;,\\
&0&&{\rm if}\;\;P_tQ_\gamma\in \mathfrak{A}\setminus\mathfrak{K}'\;.
\end{aligned}
\right.
\end{equation}

So, one has that $\tau$ can be extended to a unique frustration-free ground state on $\mathfrak{A}$.
On the other hand, let $\omega_0$ be a frustration-free ground state on $\mathfrak{A}$. Taking $\tau:=\omega_0|_{\mathfrak{K}'}$, this provides a state in $\mathfrak{K}'$ such that $\tau|_{\mathfrak{K}}=\vartheta$. Moreover, its extension to $\mathfrak{A}$ is equal to $\omega_0$.
Therefore, there is a bijection between $\mathtt{FG}_{\rm QDM}(\mathfrak{A})$ and $\mathtt{S}_{\vartheta}(\mathfrak{K}')$.
\qed

\subsection{The $C^*$-algebra of logical operators}
Since the $C^*$-subalgebra $\mathfrak{K}'$ contains sufficient information for the specification  of the  frustration-free ground states of the 
{HA-QDM dynamics}. However $\mathtt{S}(\mathfrak{K}')$
also contains \emph{excited states}  which do not respect the ground state constraints. To `` eliminate'' these excited states, one has to 
trivialize the information that can define excited states. This can be done by selecting an  ideal 
$\mathfrak{J}\subset \mathfrak{K}'$ (referred as the \emph{redundancy} ideal) with the crucial property that $\omega_0:\mathfrak{J}\to \{0\}$ 
for every $\omega_0\in \mathtt{FG}_{\rm QDM}(\mathfrak{A})$. As a consequence, the quotient $\mathfrak{K}'/\mathfrak{J}$ contain all the 
information necessary to specify the frustration-free ground states without the finer information to distinguish excited states.

\medskip

The \emph{redundancy}  space is defined by
 \begin{equation}\label{eq:relacion de equivalencia}
    \mathfrak{J}\;:=\;\overline{{\rm span}\{P_tQ_\gamma-P_{t'}Q_{\gamma'}\;|\; t\sim t'\;,\;\gamma\sim \gamma'\}}^{\;\|\;\|}
\end{equation}
where $t\sim t'$ means that $t,t'\in {\rm Hom}_0(\f{C},\n{G})^0$ are representatives of the same cohomology class, {\emph i.e} $[t]=[t']\in\mathring{H}^0(\f{C},\n{G})$. Similarly, $\gamma\sim \gamma'$ means that
$[\gamma]=[\gamma']\in\mathring{H}_0(\f{C},\n{G})$.
More explicitly $ \mathfrak{J}$ is generated by linear combinations of elements of the form $P_tQ_\gamma-P_{t+\delta^{-1}s}Q_{\gamma+\delta_1\nu}$ for $(t,\gamma)\in {\rm Ker}(\delta^0)\times {\rm Ker}(\delta_0)$, $s\in {\rm Hom}_0(\f{C},\n{G})^{-1}$ and $\nu\in {\rm Hom}_0(\f{C},\n{G})_{1}$.

\begin{lemma}
	The space $\mathfrak{J}$ is a self-adjoint bilateral ideal in $\mathfrak{K}'$.
\end{lemma}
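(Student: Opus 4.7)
The plan is to verify the three structural properties in sequence: containment in $\mathfrak{K}'$, closure under $*$, and closure under bilateral multiplication by $\mathfrak{K}'$. The arithmetic backbone is the Weyl relation \eqref{eq:commutation_P_and_Q}, the duality of Lemma \ref{lem:duality}, and the identities $\delta^0\!\circ\!\delta^{-1}=0$ and $\delta_0\!\circ\!\delta_1=0$.

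First I would verify that $\mathfrak{J}\subseteq\mathfrak{K}'$. Each term $P_tQ_\gamma$ appearing in a generator of $\mathfrak{J}$ satisfies $(t,\gamma)\in{\rm Ker}(\delta^0)\times{\rm Ker}(\delta_0)$; the shifted term $P_{t+\delta^{-1}s}Q_{\gamma+\delta_1\nu}$ also satisfies this because $\delta^0\delta^{-1}s=0$ and $\delta_0\delta_1\nu=0$. Hence both terms belong to $\mathfrak{Z}$, so the difference belongs to ${\rm span}(\mathfrak{Z})\subseteq\mathfrak{K}'$ by Lemma \ref{Lem:span-K'}. Since $\mathfrak{K}'$ is norm-closed, $\mathfrak{J}\subseteq\mathfrak{K}'$.

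Next I would show self-adjointness. A direct computation using $P_t^*=P_{-t}$, $Q_\gamma^*=Q_{-\gamma}$ and \eqref{eq:commutation_P_and_Q} gives $(P_tQ_\gamma)^*=\gamma(t)\,P_{-t}Q_{-\gamma}$. Applying this to a generator yields
\[
(P_tQ_\gamma-P_{t+\delta^{-1}s}Q_{\gamma+\delta_1\nu})^{*}=\gamma(t)\,P_{-t}Q_{-\gamma}-(\gamma+\delta_1\nu)(t+\delta^{-1}s)\,P_{-(t+\delta^{-1}s)}Q_{-(\gamma+\delta_1\nu)}.
\]
Expanding the coefficient bilinearly and applying Lemma \ref{lem:duality} together with the kernel hypotheses produces
\[
(\gamma+\delta_1\nu)(t+\delta^{-1}s)=\gamma(t)\cdot(\delta_0\gamma)(s)\cdot\nu(\delta^0 t)\cdot\nu(\delta^0\delta^{-1}s)=\gamma(t)\cdot 1\cdot 1\cdot 1,
\]
so the adjoint equals $\gamma(t)$ times a generator of $\mathfrak{J}$. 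Taking spans and closures gives $\mathfrak{J}^{*}=\mathfrak{J}$.

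For the ideal property, by Lemma \ref{Lem:span-K'} and norm continuity it suffices to check that left and right multiplication by an element $P_aQ_\beta\in\mathfrak{Z}$ maps a generator of $\mathfrak{J}$ into $\mathfrak{J}$. Using \eqref{eq:commutation_P_and_Q} and \eqref{relatP-Q},
\[
(P_aQ_\beta)(P_tQ_\gamma-P_{t+\delta^{-1}s}Q_{\gamma+\delta_1\nu})=\beta(t)\,P_{a+t}Q_{\beta+\gamma}-\beta(t+\delta^{-1}s)\,P_{a+t+\delta^{-1}s}Q_{\beta+\gamma+\delta_1\nu},
\]
and the duality $\beta(\delta^{-1}s)=(\delta_0\beta)(s)=1$ (since $\beta\in{\rm Ker}(\delta_0)$) collapses this to $\beta(t)$ times another generator of $\mathfrak{J}$, because $a+t\in{\rm Ker}(\delta^0)$ and $\beta+\gamma\in{\rm Ker}(\delta_0)$. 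Right multiplication is analogous, the compensating identity this time being $(\delta_1\nu)(a)=\nu(\delta^0 a)=1$ for $a\in{\rm Ker}(\delta^0)$. Extending by linearity and then by norm-density of ${\rm span}(\mathfrak{Z})$ inside $\mathfrak{K}'$ yields $\mathfrak{K}'\mathfrak{J}\subseteq\mathfrak{J}$ and $\mathfrak{J}\mathfrak{K}'\subseteq\mathfrak{J}$. The only subtlety here is the phase-factor bookkeeping; the main point is that the kernel constraints force every potentially obstructing factor of the form $\gamma(\delta^{-1}s)$ or $\nu(\delta^0 t)$ to equal $1$, which is where the duality lemma together with $\delta\circ\delta=0$ does the work.
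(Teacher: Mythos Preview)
Your proof is correct and follows essentially the same route as the paper: reduce to generators, use the Weyl relation together with the duality lemma and the kernel constraints to make the extraneous phase factors collapse to $1$, then close up by linearity and norm density. The only minor differences are that you spell out the containment $\mathfrak{J}\subseteq\mathfrak{K}'$ and the self-adjointness computation in detail (the paper simply asserts these), and you verify left and right absorption separately, whereas the paper checks only the left side and then invokes self-adjointness to get the right.
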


\begin{proof}
First of all $\mathfrak{J}$ is by definition a linear space 
closed under the operation of taking the adjoint. Let $P_{r}Q_{\sigma}\in \mathfrak{K}'$ (which means $(r,\sigma)\in{\rm Ker}(\delta^0)\times {\rm Ker}(\delta_0)$) 
and observe that 
\begin{equation*}
		P_rQ_{\sigma}\pa{P_tQ_\gamma-P_{t'}Q_{\gamma'}}\;=\;\sigma(t)P_{r+t}Q_{\sigma+\gamma}-\sigma(t')P_{r+t'}Q_{\sigma+\gamma'}\;.
	\end{equation*}
If $t\sim t'$ then $\sigma(t')=\sigma(t+\delta^{-1}s)=\sigma(t)(\delta_0\sigma)(s)=\sigma(t)$ and in turn 
\begin{equation*}
		P_rQ_{\sigma}\pa{P_tQ_\gamma-P_{t'}Q_{\gamma'}}\;=\;\sigma(t)(P_{r+t}Q_{\sigma+\gamma}-P_{r+t'}Q_{\sigma+\gamma'})\;.
\end{equation*}
If also $\gamma\sim \gamma'$ one gets that $r+t\sim r+t'$ and $\sigma+\gamma\sim
\sigma+\gamma'$ and one concludes that
\begin{equation*}
		\pa{P_tQ_\gamma-P_{t'}Q_{\gamma'}}\in \mathfrak{J}\quad \Rightarrow\quad
		 P_rQ_{\sigma}\pa{P_tQ_\gamma-P_{t'}Q_{\gamma'}}\in \mathfrak{J}\;.
\end{equation*}
Let $\mathfrak{J}_0:={{\rm span}\{P_tQ_\gamma-P_{t'}Q_{\gamma'}\;|\; t\sim t'\;,\;\gamma\sim \gamma'\}}$ and $J\in \mathfrak{J}_0$.
Then by linearity the argument above shows that $P_rQ_{\sigma}J\in \mathfrak{J}_0$. Now let $J\in \mathfrak{J}$ and $\{J_n\}\subset\mathfrak{J}_0$ such that $J_n\to J$. Since $P_rQ_{\sigma}J_n\in \mathfrak{J}_0$ and $P_rQ_{\sigma}J_n\to P_rQ_{\sigma}J$ one gets that 
$P_rQ_{\sigma}J\in \mathfrak{J}$. Let $K\in\mathfrak{K}'$ and $\{K_m\}\subset \mathrm{span}(\mathfrak{Z})$ such that $K_m\to K$. If $J\in \mathfrak{J}$, then by linearity one gets that $K_m J\in \mathfrak{J}$ for every $m$. Moreover $K_m J\to KJ$ showing that also $KJ \in \mathfrak{J}$. This shows that $\mathfrak{J}$ is a closed left-ideal of $\mathfrak{K}'$. However, since $\mathfrak{J}$ is self-adjoint, it is also a right-ideal.
\end{proof}

\medskip

The main property of the redundancy ideal is described in the following result.

\begin{lemma}\label{lemma:ker_J}
Let $\omega_0\in \mathtt{FG}_{\rm QDM}(\mathfrak{A})$ be a 
frustration-free ground state for  the {HA-QDM dynamics}. Then 
$\omega_0(J)=0$ for every $J\in \mathfrak{J}$.
\end{lemma}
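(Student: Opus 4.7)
The plan is to exploit the invariance relation \eqref{eq:inv_class} that has already been established for every frustration-free ground state, namely
\[
\omega_0(P_tQ_\gamma)\;=\;\omega_0(P_{t+\delta^{-1}s}Q_{\gamma+\delta_1\nu})
\]
for any $s\in {\rm Hom}_0(\f{C},\n{G})^{-1}$ and $\nu\in {\rm Hom}_0(\f{C},\n{G})_{1}$. The key observation is that the cohomology equivalence $t\sim t'$ and the homology equivalence $\gamma\sim \gamma'$ are precisely the statements $t'=t+\delta^{-1}s$ and $\gamma'=\gamma+\delta_1\nu$ for some such $s,\nu$, so the invariance relation instantly yields $\omega_0(P_tQ_\gamma)=\omega_0(P_{t'}Q_{\gamma'})$ on the generators of $\mathfrak{J}$.

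More concretely, I would structure the argument as follows. First, since $\omega_0$ is a state on a $C^*$-algebra it is norm-continuous, so it is enough to prove $\omega_0(J)=0$ for every $J$ in the dense subspace
\[
\mathfrak{J}_0\;:=\;{\rm span}\{P_tQ_\gamma-P_{t'}Q_{\gamma'}\;|\; t\sim t'\;,\;\gamma\sim \gamma'\}\;.
\]
By linearity it then suffices to treat a single generator $P_tQ_\gamma-P_{t'}Q_{\gamma'}$ with $t\sim t'$ and $\gamma\sim \gamma'$. Unpacking the equivalence relations, I pick representatives $s\in {\rm Hom}_0(\f{C},\n{G})^{-1}$ and $\nu\in {\rm Hom}_0(\f{C},\n{G})_{1}$ with $t'=t+\delta^{-1}s$ and $\gamma'=\gamma+\delta_1\nu$. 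Then \eqref{eq:inv_class} gives $\omega_0(P_tQ_\gamma)=\omega_0(P_{t'}Q_{\gamma'})$, hence $\omega_0(P_tQ_\gamma-P_{t'}Q_{\gamma'})=0$.

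Finally, I extend to the whole ideal by continuity: if $J\in \mathfrak{J}$ and $\{J_n\}\subset \mathfrak{J}_0$ with $J_n\to J$ in norm, then $|\omega_0(J)|=\lim_n |\omega_0(J_n)|=0$ since $\omega_0$ is bounded of norm one.

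I do not expect any real obstacle: the invariance relation \eqref{eq:inv_class}, which is itself a direct consequence of Lemma \ref{lema estado} applied to the unitaries $A_s$ and $B_\nu$ (both of which have expectation value one on a frustration-free ground state by Proposition \ref{prop:frist_cond}), already does all the work. The only minor care point is to confirm that the equivalence classes in $\mathring{H}^0(\f{C},\n{G})$ and $\mathring{H}_0(\f{C},\n{G})$ are generated exactly by translations in ${\rm Im}(\delta^{-1})$ and ${\rm Im}(\delta_1)$ respectively, which is immediate from Definitions \ref{def:delta^p map} and \ref{def:delta_p map}.
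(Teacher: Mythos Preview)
Your proposal is correct and follows exactly the same approach as the paper's own proof: invoke the invariance relation \eqref{eq:inv_class} to get $\omega_0(P_tQ_\gamma-P_{t'}Q_{\gamma'})=0$ on generators, then extend by linearity and continuity. Your write-up is simply a more detailed version of the paper's two-line argument.
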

\proof
The covariance condition \eqref{eq:inv_class} can be restated as
 $\omega_{0}(P_tQ_\gamma-P_{t'}Q_{\gamma'})=0$
if $t\sim t'$ and $\gamma\sim\gamma'$. The prove is completed by linearity and continuity of $\omega_0$.
\qed

\medskip

We are now in a position to introduce the $C^*$-algebra of \emph{logical (ports) operators} $\mathfrak{A}_{\rm log}$ which will play a key role in the characterization of the space of frustration-free ground states. For that, we need to recall that the quotient of a $C^*$-algebra by a closed two-sided ideal is again a   $C^*$-algebra \cite[Theorem 3.1.4]{Murphy}. 
\begin{definition}[Logical operators]
	Let  $\mathfrak{A}_{\rm log}:=\mathfrak{K}'/\mathfrak{J}$.
\end{definition}

\medskip

The relevance of $\mathfrak{A}_{\rm log}$ for the description of the frustration-free ground states for the {HA-QDM dynamics} is described in the following result. Let us recall that the natural topology for spaces of states is the $\ast$-weak topology.

	\begin{teo}\label{theo:one_on_one_bijection}
		There is a homeomorphism $\Phi:\mathtt{S}(\mathfrak{A}_{\rm log})\to\mathtt{FG}_{\rm QDM}(\mathfrak{A})$ between the state space of $\mathfrak{A}_{\rm log}$ and the
		space of frustration-free ground states for the {HA-QDM dynamics}.
		\end{teo}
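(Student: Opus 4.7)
The plan is to combine the bijective correspondence $\mathtt{FG}_{\rm QDM}(\mathfrak{A})\simeq \mathtt{S}_{\vartheta}(\mathfrak{K}')$ of Corollary \ref{cor_uni_ext} with the passage to the quotient by $\mathfrak{J}$, and then upgrade the resulting bijection to a homeomorphism for the $\ast$-weak topologies.

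First I would construct $\Phi$. Given $\tilde\tau\in \mathtt{S}(\mathfrak{A}_{\rm log})$, let $\pi\colon\mathfrak{K}'\to\mathfrak{K}'/\mathfrak{J}$ denote the quotient map and set $\tau:=\tilde\tau\circ\pi\in\mathtt{S}(\mathfrak{K}')$. The key observation is that for every $t\in{\rm Hom}_0(\f{C},\n{G})^{-1}$ and $\gamma\in{\rm Hom}_0(\f{C},\n{G})_1$ one has $A_tB_\gamma=P_{\delta^{-1}t}Q_{\delta_1\gamma}\in\mathfrak{Z}$ with $\delta^{-1}t\sim 0$ and $\delta_1\gamma\sim 0$ (both are coboundaries), so $A_tB_\gamma-\n{1}\in\mathfrak{J}$. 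Hence $\tau(A_tB_\gamma)=\tilde\tau([\n{1}])=1$, and by the uniqueness part of Lemma \ref{lemm:pur_st_K} it follows that $\tau|_\mathfrak{K}=\vartheta$. Corollary \ref{cor_uni_ext} then yields a unique extension $\omega_0\in \mathtt{FG}_{\rm QDM}(\mathfrak{A})$, and I set $\Phi(\tilde\tau):=\omega_0$.

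Next I would build $\Phi^{-1}$. For $\omega_0\in \mathtt{FG}_{\rm QDM}(\mathfrak{A})$, the restriction $\tau:=\omega_0|_{\mathfrak{K}'}$ annihilates $\mathfrak{J}$ by Lemma \ref{lemma:ker_J}, so it factors through $\pi$ as $\tau=\tilde\tau\circ\pi$ for a unique state $\tilde\tau\in \mathtt{S}(\mathfrak{A}_{\rm log})$; this defines $\Phi^{-1}(\omega_0)$. The composition $\Phi^{-1}\circ\Phi$ is the identity by construction, and $\Phi\circ\Phi^{-1}$ is the identity because the unique extension of $\omega_0|_{\mathfrak{K}'}$ provided by Corollary \ref{cor_uni_ext} must be $\omega_0$ itself.

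Finally, I would prove the topological statement. The map $\Phi^{-1}$ is the composition of restriction along $\mathfrak{K}'\hookrightarrow\mathfrak{A}$ with the dual of $\pi$, both $\ast$-weakly continuous. For $\Phi$, suppose a net $\tilde\tau_\alpha\to\tilde\tau$ in the $\ast$-weak topology, and set $\omega_{0,\alpha}:=\Phi(\tilde\tau_\alpha)$, $\omega_0:=\Phi(\tilde\tau)$. By the explicit formula \eqref{eq:extension}, on every generator $P_tQ_\gamma$ the value $\omega_{0,\alpha}(P_tQ_\gamma)$ either equals $\tilde\tau_\alpha([P_tQ_\gamma])$, which converges to $\omega_0(P_tQ_\gamma)$, or vanishes identically. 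By Lemma \ref{lemm:lin_indip_g}, linear combinations of these monomials span $\mathfrak{A}_{\rm loc}$, which is norm-dense in $\mathfrak{A}$; since $\|\omega_{0,\alpha}\|=1$, a standard $3\varepsilon$ approximation argument promotes pointwise convergence on $\mathfrak{A}_{\rm loc}$ to $\ast$-weak convergence on all of $\mathfrak{A}$. The main obstacle I expect is precisely this continuity step for $\Phi$, because the extension is defined in piecewise fashion on the generators; the uniform norm bound on states and the density of $\mathfrak{A}_{\rm loc}$ make the argument go through, and all remaining verifications are routine bookkeeping on top of the previously established results.
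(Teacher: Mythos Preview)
Your proposal is correct and follows essentially the same route as the paper: both define $\Phi$ by pulling back along $\pi$ and invoking Corollary \ref{cor_uni_ext}, build the inverse via Lemma \ref{lemma:ker_J}, and prove continuity of $\Phi$ using the explicit formula \eqref{eq:extension} on generators together with density of $\mathfrak{A}_{\rm loc}$. The only minor difference is that you prove continuity of $\Phi^{-1}$ directly, whereas the paper proves only continuity of $\Phi$ and then invokes compactness of $\mathtt{S}(\mathfrak{A}_{\rm log})$ and the Hausdorff property of $\mathtt{FG}_{\rm QDM}(\mathfrak{A})$ to conclude that the continuous bijection is automatically a homeomorphism.
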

\begin{proof}
Let $\pi:\mathfrak{K}'\to \mathfrak{A}_{\rm log}$ be the canonical 
projection which is a surjective $\ast$-homomorphism of $C^*$-algebras.
Let $s\in {\rm Hom}_0(\f{C},\n{G})^{-1}$ and $\nu\in {\rm Hom}_0(\f{C},\n{G})_{1}$ and consider $A_s B_\nu=P_{\delta^{-1}s}Q_{\delta_1\nu}=\n{1}-(\n{1}-P_{\delta^{-1}s}Q_{\delta_1\nu})$. Since
$(\n{1}-P_{\delta^{-1}s}Q_{\delta_1\nu})\in \mathfrak{J}$ it follows that $\pi(A_s B_\nu)=[\n{1}]\in \mathfrak{A}_{\rm log}$. Since this holds true for all $s$ and $\nu$ one infers that $\pi(\mathfrak{K})=\n{C}[\n{1}]\subset\mathfrak{A}_{\rm log}$.  
For ${\rho}\in \mathtt{S}(\mathfrak{A}_{\rm log})$ let  
$\tau_\rho:=\rho\circ \pi$. It turns out that  $\tau_\rho\in\mathtt{S}(\mathfrak{K}')$. Moreover
\[
\tau_\rho(A_s B_\nu)\;=\;\rho(\pi(A_s B_\nu))\;=\;\rho([\n{1}])\;=\; 1
\]
for every $s\in {\rm Hom}_0(\f{C},\n{G})^{-1}$ and $\nu\in {\rm Hom}_0(\f{C},\n{G})_{1}$. Therefore, $\tau_\rho|_{\mathfrak{K}}=\vartheta$
in view of Lemma \ref{lemm:pur_st_K} and in turn
$\tau_\rho$ ha a unique extension to a $\omega_{0,\rho}\in \mathtt{FG}_{\rm QDM}(\mathfrak{A})$ in view of Corollary \ref{cor_uni_ext}.
Then, one has a map $\Phi:\mathtt{S}(\mathfrak{A}_{\rm log})\to \mathtt{FG}_{\rm QDM}(\mathfrak{A})$ defined by $\Phi(\rho):=\omega_{0,\rho}$. If $\Phi(\rho)=\Phi(\rho')$ then by definition
$\rho(\pi(R))=\rho'(\pi(R))$ for every $R\in \mathfrak{K}'$, or equivalently $\rho([R])=\rho'([R])$ for every $[R]\in \mathfrak{A}_{\rm log}$, implying $\rho=\rho'$. This shows that $\Phi$ is injective. 
On the other hand, if $\omega_0\in \mathtt{FG}_{\rm QDM}(\mathfrak{A})$
then $\omega_0(R+J)=\omega_0(R)$ for every $R\in \mathfrak{K}'$ and 
$J\in \mathfrak{J}$ in view of Lemma \ref{lemma:ker_J}.
Then, one can define a linear functional $\rho_{\omega_0}:\mathfrak{A}_{\rm log}\to\n{C}$ by $\rho_{\omega_0}([R]):=\omega_0(R)$ where $R\in\mathfrak{K}'$ is any representative of the class $[R]\in \mathfrak{A}_{\rm log}$. A straightforward computation shows that 
$\rho_{\omega_0}$ is positive and normalized, \emph{i.e.} $\rho_{\omega_0}\in \mathtt{S}(\mathfrak{A}_{\rm log})$.
Observing that $\Phi(\rho_{\omega_0})(R)=\rho_{\omega_0}(\pi(R))=\omega_0(R)$ for every $R\in \mathfrak{K}'$, one has that
$\Phi(\rho_{\omega_0})|_{\mathfrak{K}'}= {\omega_0}|_{\mathfrak{K}'}$.
Since ${\omega_0}|_{\mathfrak{K}'}$ identify uniquely the extension $\omega_0$ in view of Corollary \ref{cor_uni_ext}, one ends with $\Phi(\rho_{\omega_0})=\omega_0$. Then $\Phi$ is also surjective and the bijection is proved.
Finally, let us prove that this bijection is a homeomorphism. To do this, consider a sequence of states $\{\rho_n\}_{n\in\NN}\subset \mathtt{S}(\mathfrak{A}_{\rm log})$ that converge to a state $\rho\in \mathtt{S}(\mathfrak{A}_{\rm log})$ in the $*$-weak topology. First, for an operator $A\in \mathfrak{K}'$ one has by equation \eqref{eq:extension} that
$$(\Phi(\rho_n))(A)\;=\;\rho_n(\pi(A))\;\rightarrow\; \rho(\pi(A))\;=\;(\Phi(\rho))(A)\;.
$$
Now, for $A\in\mathfrak{A}_0$ write it as 
$$A\;=\;\sum^m_{\substack{j=1 \\ (t_j,\gamma_j) \in {\rm Ker}(\delta^0) \times {\rm Ker}(\delta_0)}}P_{t_j}Q_{\gamma_j} + \sum^l_{\substack{j=1 \\ (t'_j,\gamma'_j) \not\in {\rm Ker}(\delta^0) \times {\rm Ker}(\delta_0)}}P_{t'_j}Q_{\gamma'_j}\;.$$
Then,
\begin{align*}
    (\Phi(\rho_n))(A)\;&=\;(\Phi(\rho_n))\pa{\sum^m_{\substack{j=1 \\ (t_j,\gamma_j) \in {\rm Ker}(\delta^0) \times {\rm Ker}(\delta_0)}} P_{t_j}Q_{\gamma_j}}\\
    &=\;\rho_n\pa{\pi\pa{\sum^m_{\substack{j=1 \\ (t_j,\gamma_j) \in {\rm Ker}(\delta^0) \times {\rm Ker}(\delta_0)}} P_{t_j}Q_{\gamma_j}}}
\end{align*}
Taking the limit,
\begin{align*}
    \lim_{n\to\infty}(\Phi(\rho_n))(A)\;&=\;\rho\pa{\pi\pa{\sum^m_{\substack{j=1 \\ (t_j,\gamma_j) \in {\rm Ker}(\delta^0) \times {\rm Ker}(\delta_0)}} P_{t_j}Q_{\gamma_j}}}\;=\;(\Phi(\rho))(A)\;.
\end{align*}
Finally, take $A\in\mathfrak{A}$ and $\{A_n\}_{n\in\NN}\subset\mathfrak{A}_0$ with $A_n\to A$ in norm to conclude that  
$\Phi(\rho_n)\to\Phi(\rho)$ in the $\ast$-weak topology. Since $\mathtt{S}(\mathfrak{A}_{\rm log})$ is compact and $\mathtt{FG}_{\rm QDM}(\mathfrak{A})$ is Hausdorff it follows that the continuous bijection $\Phi$ is automatically an homeomorphism.
	\end{proof}

\subsection{Topological properties of the space of frustration-free ground states}
The main message of Theorem \ref{theo:one_on_one_bijection} is that to distinguish between two different frustration-free ground states
of the {HA-QDM dynamics}
 is enough to compare their images in $\mathfrak{A}_{\rm log}$. For this reason, it is relevant to study the state space $\mathtt{S}(\mathfrak{A}_{\rm log})$.

\medskip

If $t\sim t'$ and $\gamma\sim \gamma'$ then the monomial $P_tQ_\gamma$ and  $P_{t'}Q_{\gamma'}$ in $\mathfrak{K}'$ are send by the canonical projection $\pi$ to the same class in $\mathfrak{A}_{\rm log}$, \emph{i.e.} $\pi(P_tQ_\gamma)=\pi(P_{t'}Q_{\gamma'})$. By denoting 
with $[P_tQ_\gamma]$ the common class one deduce that the class only depends on $[t]\in \mathring{H}^0(\f{C},\n{G})$ and $[\gamma]\in\mathring{H}_{0}(\f{C},\n{G})$. This justify the introduction of the notation $P_{[t]}Q_{[\gamma]}:=[P_tQ_\gamma]$. Let us also introduce the symbol
\[
\n{H}(\f{C},\n{G})\;:=\;\mathring{H}^0(\f{C},\n{G})\times\mathring{H}_0(\f{C},\n{G})\;.
\]
Since the projection $\pi$ is $\ast$-homomorphism, the commutation relations \eqref{relatP-Q} and \eqref{eq:commutation_P_and_Q} factor thought the quotient and one obtains that
\begin{equation}\label{eq:Weyl1}
\pa{P_{[t]}Q_{[\gamma]}}\pa{P_{[t']}Q_{[\gamma']}}\;=\;\gamma(t')P_{[t+t']}Q_{[\gamma+\gamma']}
\end{equation}
for all $([t],[\gamma]),([t'],[\gamma'])\in \n{H}(\f{C},\n{G})$.
Moreover, from  Lemma \ref{lemm:lin_indip_g} and Lemma \ref{Lem:span-K'} one also infers that the set
\[
\mathfrak{Z}_\pi\;:=\;\pi(\mathfrak{Z})\;=\;\left\{P_{[t]}Q_{[\gamma]}\;|\;([t],[\gamma])\in\n{H}(\f{C},\n{G}) \right\}
\]
is made by linearly independent generators that span the full algebra  
$\mathfrak{A}_{\rm log}$.

\medskip

Let us introduce the notation ${\bf h}:=([t],[\gamma])$ for a generic element of 
$\n{H}(\f{C},\n{G})$ and
the form
\[
\langle\;\cdot\;,\;\cdot\;\rangle\;:\;\n{H}(\f{C},\n{G})\times \n{H}(\f{C},\n{G})\;\longrightarrow\;\n{U}(1)
\]
defined by 
\[
\langle {\bf h} ,{\bf h}'\rangle\;:=\;\gamma(t')
\]
for all ${\bf h}=([t],[\gamma])$ and ${\bf h}'=([t'],[\gamma'])$ in 
$\n{H}(\f{C},\n{G})$. One can check that 
\[
\begin{aligned}
\langle {\bf h} +{\bf h}',{\bf h}''\rangle\;&=\;(\gamma+\gamma')(t'')\;=\;\gamma(t'')\gamma'(t'')\;=\;\langle {\bf h} ,{\bf h}''\rangle\langle{\bf h}',{\bf h}''\rangle\\
\langle {\bf h} ,{\bf h}'+{\bf h}''\rangle\;&=\;\gamma(t'+t'')\;=\;\gamma(t')\gamma(t'')\;=\;\langle {\bf h} ,{\bf h}'\rangle\langle{\bf h},{\bf h}''\rangle
\end{aligned}
\]
Introducing the notation $W_{{\bf h}}:=\pa{P_{[t]}Q_{[\gamma]}}$ one can rephrase the relations \eqref{eq:Weyl1} as
\begin{equation}\label{eq:Weyl2}
W_{{\bf h}}W_{{\bf h}'}\;=\;\langle {\bf h} ,{\bf h}'\rangle\;W_{{\bf h}+{\bf h}'}\;.
\end{equation}
These are (generalized) CCR relations in the  Weyl form 
\cite[Section 5.2.2.2]{bratelli2}
and in turn $\mathfrak{A}_{\rm log}$ is (isomorphic to) a (generalized) CCR algebra.

\medskip

If
one assumes that  $\n{H}(\f{C},\n{G})$ is finite then one ends in the same setting of  \cite[Lemma 6.4]{Vrana}.
For that let us introduce the commutator $[[{\bf h},{\bf h}']]:=\langle {\bf h} ,{\bf h}'\rangle\langle {\bf h}' ,{\bf h}\rangle$ and the commutator subgroup
\[
\n{I}(\f{C},\n{G})\;:=\; \left\{{\bf h}\in \n{H}(\f{C},\n{G})\;\big|\;[[{\bf h},{\bf h}']]=1\;,\;\;\forall\; {\bf h}'\in  \n{H}(\f{C},\n{G})\right\}\;.
\]
Let us define the numbers
\[
c\;:=\; \log_2(|\n{I}(\f{C},\n{G})|)\;,\qquad q\;:=\;\frac{1}{2}\log_2\left(\frac{|\n{H}(\f{C},\n{G})|}{|\n{I}(\f{C},\n{G})|}\right)
\]
the finite set $X_c:=\{1,2,\ldots,2^c\}$ and the finite dimensional Hilbert space $\mathfrak{h}_q:=\n{C}^{2^q}$. With this notation, the content of \cite[Lemma 6.4]{Vrana} can be adapted to our situation providing the isomorphism
\begin{equation}\label{eq:trivial bundle}
    \mathfrak{A}_{\rm log}\;\simeq\; C(X_c)\otimes \mathcal{B}(\mathfrak{h}_q)\;.
\end{equation}

Interestingly,  $\mathfrak{A}_{\rm log}$ is isomorphic to the tensor product of the commutative algebra $C(X_c)$ of \emph{classical} degrees of freedom and a non-commutative algebra $\mathcal{B}(\mathfrak{h}_q)$
of \emph{quantum} degrees of freedom. 
The relevance of this result is that it provides a way to determine the number of $c$ classical information (bits)  and the number $q$ of quantum information (qubits)  that one can encode in the frustration-free ground space.

\medskip

\begin{remark}
It is useful to provide a more explicit characterization for the element of  $\n{I}(\f{C},\n{G})$.
Notice that  ${\bf h} =([t],[\gamma]) \in \n{I}(\f{C},\n{G})$
if and only if 
\begin{equation*}
    1\;=\;\langle {\bf h} ,{\bf h}'\rangle\langle {\bf h}' ,{\bf h}\rangle=\gamma(t')\gamma'(t)
\end{equation*}
for all ${\bf h'} =([t'],[\gamma'])$.
As this happens for all ${\bf h'}$, taking $[t']$ as the trivial class on $\mathring{H}^0(\f{C},\n{G})$, one obtains that $\gamma'(t)=1$ for all $[\gamma']$. Analogously, taking $[\gamma']$ as the trivial class on $\mathring{H}_0(\f{C},\n{G})$, one has $\gamma(t')=1$ for all $[t']$.
In conclusion, ${\bf h} =([t],[\gamma]) \in \n{I}(\f{C},\n{G})$
if and only if  $\alpha(t)=1=\gamma(s)$
for all $[s]\in \mathring{H}^0(\f{C},\n{G})$ and $[\alpha]\in \mathring{H}_0(\f{C},\n{G})$.
\hfill $\blacktriangleleft$
\end{remark}

\medskip

\begin{remark}[Compact manifold]
For models arising from triangulations of compact manifolds \cite{Lorca2025, dealmeida2017topological}, it is proved that the frustration-free ground states depend on the Cohomology group $H^0(\f{C},\n{G})$ and the Homology group $H_0(\f{C},\n{G})$. 
Notice that in those cases, $\mathring{H}^0(\f{C},\n{G}) = H^0(\f{C},\n{G})$ and $\mathring{H}_0(\f{C},\n{G}) = H_0(\f{C},\n{G})$. Therefore, our model recovers the ones associated to compact manifolds.
\hfill $\blacktriangleleft$
\end{remark}

\subsection{Description of the pure states space}
A relevant payoff of the result obtained in the previous section is a complete description of the space  $\mathtt{FG}_{\rm QDM}^{\circ}(\mathfrak{A})$ of the pure frustration-free ground states of the HA-QDM dynamics. This is a relevant fact since the full space $\mathtt{FG}_{\rm QDM}(\mathfrak{A})$ can be recovered as the closure of the convex hull of $\mathtt{FG}_{\rm QDM}^{\circ}(\mathfrak{A})$.

\medskip

\begin{prop}\label{prop:bijection_pure_states}
There is a homeomorphism $\Phi:\mathtt{P}(\mathfrak{A}_{\rm log})\to\mathtt{FG}^\circ_{\rm QDM}(\mathfrak{A})$ between the  space of pure states of $\mathfrak{A}_{\rm log}$ and the
		space of pure frustration-free ground states for the {HA-QDM dynamics}.
\end{prop}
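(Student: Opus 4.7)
The plan is to deduce the proposition from Theorem \ref{theo:one_on_one_bijection} by observing that the homeomorphism $\Phi:\mathtt{S}(\mathfrak{A}_{\rm log})\to\mathtt{FG}_{\rm QDM}(\mathfrak{A})$ is \emph{affine}, and then showing that its restriction to pure states is a bijection (the homeomorphism statement is then automatic, since the $\ast$-weak topologies on the two pure-state spaces are induced from the respective ambient state spaces).

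First I would verify affinity. For $\rho_1,\rho_2\in\mathtt{S}(\mathfrak{A}_{\rm log})$ and $\lambda\in[0,1]$, setting $\rho:=\lambda\rho_1+(1-\lambda)\rho_2$ one has $\tau_\rho=\rho\circ\pi=\lambda\tau_{\rho_1}+(1-\lambda)\tau_{\rho_2}$ on the fixing algebra $\mathfrak{K}'$. The piecewise-linear extension formula \eqref{eq:extension}, combined with the uniqueness granted by Corollary \ref{cor_uni_ext}, yields $\Phi(\rho)=\lambda\Phi(\rho_1)+(1-\lambda)\Phi(\rho_2)$ on the generating monomials $P_tQ_\gamma$, and hence on all of $\mathfrak{A}$ by linearity on $\mathfrak{A}_{\rm loc}$ and norm continuity.

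Next I would show that $\Phi$ sends pure states to pure states. Let $\rho\in\mathtt{P}(\mathfrak{A}_{\rm log})$, write $\omega_0:=\Phi(\rho)\in\mathtt{FG}_{\rm QDM}(\mathfrak{A})$, and suppose $\omega_0=\tfrac{1}{2}(\mu_1+\mu_2)$ with $\mu_1,\mu_2\in\mathtt{S}(\mathfrak{A})$. The crucial observation is that such a decomposition forces each $\mu_i$ to be frustration-free: since $A_x^0$ and $B_x^0$ are orthogonal projections (Lemma \ref{lemma:projections}), one has $0\leqslant\mu_i(A_x^0),\mu_i(B_x^0)\leqslant 1$, and the identity $1=\omega_0(A_x^0)=\tfrac{1}{2}(\mu_1(A_x^0)+\mu_2(A_x^0))$ forces $\mu_1(A_x^0)=\mu_2(A_x^0)=1$, and analogously for $B_x^0$. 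Therefore $\mu_i\in\mathtt{FG}_{\rm QDM}(\mathfrak{A})$ and, by Theorem \ref{theo:one_on_one_bijection}, $\mu_i=\Phi(\rho_i)$ for unique $\rho_i\in\mathtt{S}(\mathfrak{A}_{\rm log})$. Affinity together with the injectivity of $\Phi$ gives $\rho=\tfrac{1}{2}(\rho_1+\rho_2)$, and purity of $\rho$ collapses this to $\rho_1=\rho_2=\rho$, whence $\mu_1=\mu_2=\omega_0$.

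For the reverse inclusion I would argue by pullback: if $\omega_0\in\mathtt{FG}^\circ_{\rm QDM}(\mathfrak{A})$, $\rho:=\Phi^{-1}(\omega_0)$, and $\rho=\tfrac{1}{2}(\eta_1+\eta_2)$ with $\eta_i\in\mathtt{S}(\mathfrak{A}_{\rm log})$, then affinity gives $\omega_0=\tfrac{1}{2}(\Phi(\eta_1)+\Phi(\eta_2))$ in $\mathtt{S}(\mathfrak{A})$, and purity of $\omega_0$ forces $\Phi(\eta_1)=\Phi(\eta_2)=\omega_0$, so injectivity yields $\eta_1=\eta_2=\rho$. The main obstacle is the projection step that rules out decompositions of $\omega_0=\Phi(\rho)$ by arbitrary states of $\mathfrak{A}$ (not a priori frustration-free); once this is established, everything else is a formal consequence of affinity, injectivity, and the fact that the $\ast$-weak topology on pure states is the induced one.
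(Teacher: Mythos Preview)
Your proof is correct and follows essentially the same approach as the paper: both arguments hinge on the observation that any convex decomposition of a frustration-free ground state must itself consist of frustration-free ground states (you use the projections $A_x^0,B_x^0$, the paper uses the unitaries $A_sB_\nu$, but the mechanism is identical), after which the bijection of Theorem \ref{theo:one_on_one_bijection} transfers purity back and forth. Your explicit verification of affinity is a welcome clarification that the paper leaves implicit in its phrase ``by the bijection, one concludes that $\omega_{0,\rho}$ is mixed''; otherwise the two proofs differ only in presentation (direct versus contrapositive, and your use of $\lambda=\tfrac{1}{2}$ in place of general $\lambda\in(0,1)$, which is harmless).
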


\begin{proof}
    First, consider the case where $\rho$ is mixed. Then, there exist distinct states $\rho_1,\rho_2$ in $\mathtt{S}(\mathfrak{A}_{\rm log})$ and $\lambda\in (0,1)$ with
    $$\rho\;=\;\lambda\rho_1+(1-\lambda)\rho_2$$
    By the bijection, one concludes that $\omega_{0,\rho}$ is mixed.
On the other hand, if $\omega_{0,\rho}$ is mixed, it can be written as 
$$\omega_{0,\rho}=\lambda \omega_1+(1-\lambda)\omega_2$$
where $\omega_1$ and $\omega_2$ are different states and $\lambda\in (0,1)$. If we prove that $\omega_1$ and $\omega_2$ are frustration free ground states, then one concludes the argument.
Take $A_sB_\nu$ with $s\in {\rm Hom}_0(\f{C},\n{G})^{-1}$ and $\nu\in {\rm Hom}_0(\f{C},\n{G})_{1}$. This operator is unitary and this implies that $|\omega_1(A_sB_\nu)|\leq1$ and $|\omega_2(A_sB_\nu)|\leq 1$. But,
$$
1\;=\;\omega_{0,\rho}(A_sB_\nu)\;=\;\lambda \omega_1(A_sB_\nu)+(1-\lambda)\omega_2(A_sB_\nu)$$
and this is only possible if $\omega_1(A_s,B_\nu)=\omega_2(A_sB_\nu)=1$. Then, $\omega_1$ and $\omega_2$ are frustration free ground states. Using the bijection, we conclude that $\rho$ is also mixed. Finally, $\Phi$ 
is the restriction of a homeomorphism, and is again a 
homeomorphism with respect to the induced subspace topology
in the initial and final space.
\end{proof}

\medskip

By using the tensor product structure of $\mathfrak{A}_{\rm log}$ described by \eqref{eq:trivial bundle} and \cite[Proposition 2.9]{denittis2025} one ends with an homeomorphism of topological spaces
$$
 \mathtt{FG}_{\rm QDM}^{\circ}(\mathfrak{A})\;\simeq\; X_c\times \mathtt{P}(\mathcal{B}(\mathfrak{h}_q))\;.
$$
Finally, since $\mathcal{B}(\mathfrak{h}_q)$ is a full matrix-algebra 
on the finite dimensional Hilbert space $\mathfrak{h}_q$, its pure states are described by one-dimensional projections, or equivalently by one-dimensional subspaces. Therefore, one has the  homeomorphism of topological spaces
$$
 \mathtt{FG}_{\rm QDM}^{\circ}(\mathfrak{A})\;\simeq\; X_c\times \n{P}(\mathfrak{h}_q)
$$
where $\n{P}(\mathfrak{h}_q)\simeq \n{CP}^{2^q}$ is the complex projective space in $\mathfrak{h}_q=\n{C}^{2^q}$.

\section{Appendix A: Quantum Double Models}\label{sec:quantum_double_models}
It is worth examining how the quantum double models \cite{Ferreira:2015uia} can be understood as a special case within the context of our model. This will help to clarify the relationship between the setting developed in this work and the celebrated quantum double construction.  To do this, consider the following diagram:

\begin{figure}[H]
	\centering
	\begin{tikzcd}[row sep=1cm, column sep=2cm]
		0\arrow{r} & \f{C}_2\arrow{r}{\partial_2^\f{C}}\arrow{d}{f_2}& \f{C}_1\arrow{r}{\partial_1^\f{C}}\arrow{d}{f_1}&  \f{C}_0\arrow{r}{\partial_0^\f{C}}\arrow{d}{f_0}&0\\
		0\arrow{r}& \n{G}_2\arrow{r}[swap]{\partial_2^\n{G}}& \n{G}_1\arrow{r}[swap]{\partial_1^\n{G}}&  \n{G}_0\arrow{r}[swap]{\partial_0^\n{G}}&0
	\end{tikzcd}
	\caption{Quantum Double Models}
\end{figure}

\medskip

First, let us see how the operator $A_{v^*_g}$ works. Recall that by Definition \ref{def:localized maps}, $v^*_g\in{\rm Hom}_0(\f{C},\n{G})^{-1}$ sends the vertex $v\in\f{K}_0$ to the group element $g\in \n{G}_1$, and is is zero on 
all the other elements of the simplicial complex:\\
\begin{figure}[!h]
	$$
\begin{tikzcd}[row sep=1cm, column sep=2cm]
    0\arrow{r} & C_2\arrow{r}{\partial_2^C}\arrow{dl}{(gv)^*_2}& C_1\arrow{r}{\partial_1^C}\arrow{dl}{(gv)^*_1}&  C_0\arrow{r}{\partial_0^C}\arrow{dl}{(gv)^*_0}&0\\
    0\arrow{r}& G_2\arrow{r}[swap]{\partial_2^G}& G_1\arrow{r}[swap]{\partial_1^G}&  G_0\arrow{r}[swap]{\partial_0^G}&0
\end{tikzcd}
$$
\end{figure}

\medskip
\noindent
So 
\begin{equation*}
	(\delta^{-1}v^*_g)_n\;=\;(v^*_g)_{n-1}\circ \partial_n^\f{C}+\partial_{n+1}^\n{G}(v^*_g)_n
\end{equation*}
is trivial for $n\geqslant 2$. For $n=1$, one has
\begin{equation*}
	(\delta^{-1}v^*_g)_1\;=\;(v^*_g)_{0}\circ \partial_n^\f{C}\end{equation*}
using the fact that also $(v^*_g)_1$ is trivial.
The operator $(\delta^{-1}v^*_g)_1$ sends all the edges to zero, except those who have $v$ on their border. Those edges are send to $\pm g$ (the sign depending on the orientation of the edges). 
For $n=0$, one has 
\begin{equation*}
	(\delta^{-1}v^*_g)_0\;=\;\partial_{1}^\n{G}(v^*_g)_0
\end{equation*}
which sends all the vertices to zero, except $v$. The latter is sent to $\delta_1^\n{G}(g)$.

\medskip

So, for $\ket{f}\in \mathfrak{H}_\Lambda$, $A_{v^*_g}\ket{f}:=\ket{f+\delta^{-1}v^*_g}$ would acts trivially except in $v\in \Lambda$ and in the edges contained in $\Lambda$ that have $v$ in their boundary. In that case (assuming $\Lambda$ large enough), $A_{v^*_g}$ acts like in the diagram \ref{fig:A_gv}.

\begin{figure}[H]
	\centering
	\includegraphics[scale=0.5]{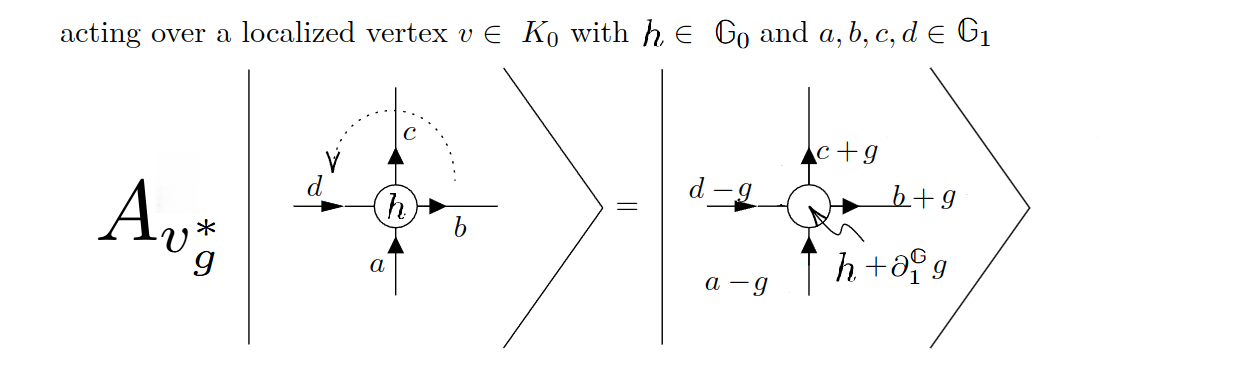}
	\caption{\label{fig:A_gv} Local compactly supported \textit{vertex shift operator}.}
\end{figure}
\begin{figure}[H]
	\centering
	\includegraphics[scale=0.5]{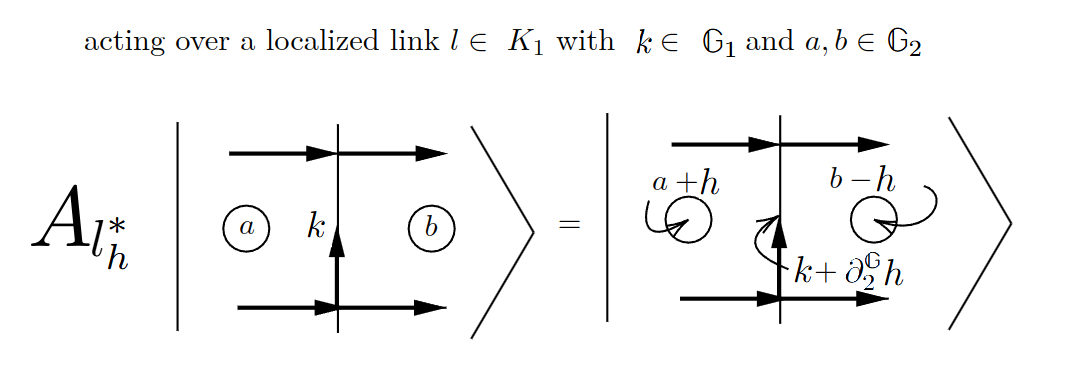}
	\caption{\label{fig:A_hl} Locally supported \textit{ link shift operator}.}
\end{figure}


Now, Let us see how $A_{l_h^*}$ works for the edge $l\in \f{K}_1$ and the group element $h\in \n{G}_2$. In this case, $l_h^*$
send $l$ to $h$ and all the other elements are send to zero. So for $n=2$ one have
\begin{equation*}
	(\delta^{-1}l_h^*)_2\;=\;(l_h^*)_{1}\circ\partial_2^\f{C}\;.
\end{equation*}
So the faces are sent to 0, unless they have $l$ in their boundary. If this happens, they are sent to $\pm h$ according to the orientation. For $n=1$, one has 
\begin{equation*}
	(\delta^{-1}l_h^*)_1\;=\:\partial_{2}^\n{G}(l_h^*)_1
\end{equation*}
so $l$ is sent to $\partial_2^\n{G}(h)$. All the other edges are sent to 0. And for $n=0$ and $n\geqslant 3$, the map is trivial.
Thus, $A_{l_h^*}$ acts as the identity, except in $l$ and its coboundary. Here, the operator acts as shown in the diagram \ref{fig:A_hl}.

\medskip

In conclusion, the operators $A_v^0$ and $A_l^0$, introduced in Definition \ref{def:local_operators} look like:

\begin{figure}[H]
	\centering
	\includegraphics[scale=0.5]{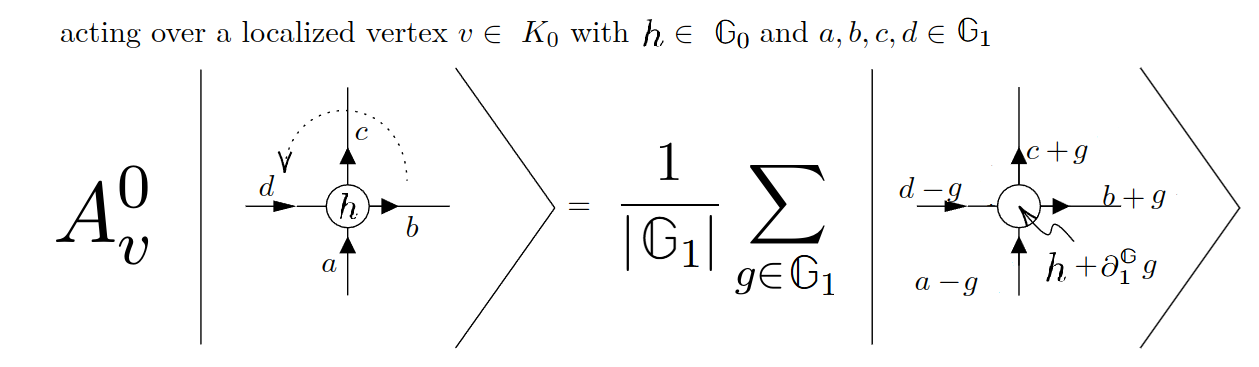}
	\caption{\label{fig:A_v^0 operator} Local compactly supported \textit{vertex star operator}.}
\end{figure}
\begin{figure}[H]
	\centering
	\includegraphics[scale=0.5]{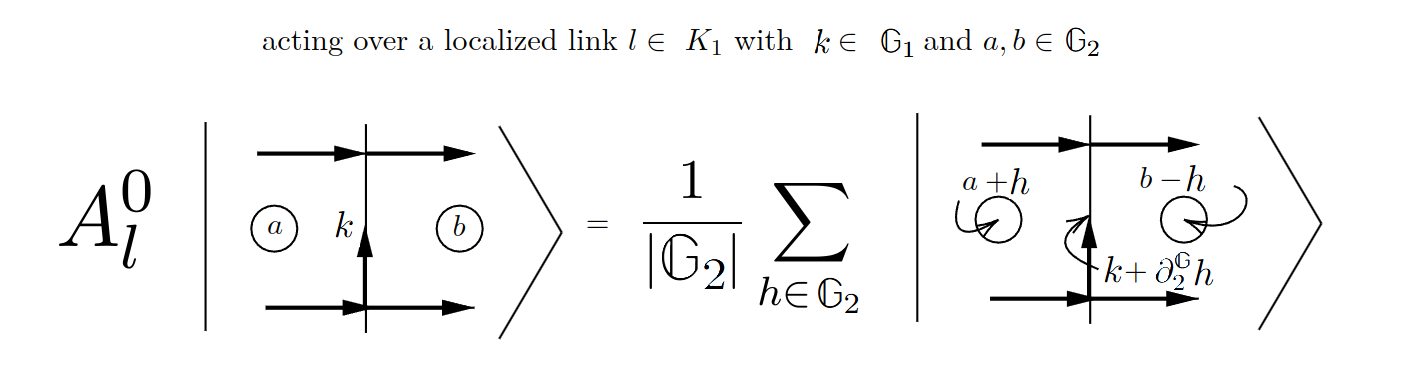}
	\caption{\label{fig:A_l^0 operator} Locally supported \textit{link star operator}.}
\end{figure}

Note that in the case $\n{G}_0=\n{G}_2=0$, one recovers the usual star operators of the \emph{toric code model} \cite{Kitaev_2003}.


\medskip




\medskip

The same can be done for the dual assignments in ${\rm Hom}_0(\f{C},\n{G})_p$. According to Definition \ref{def:localized maps}, consider the map $p_*^\gamma\in {\rm Hom}_0(\f{C},\n{G})_1$ 
which sends the face $p\in \f{K}_2$ in the character $\gamma\in\widehat{\n{G}}_1$ and is zero on 
all the other elements of the simplicial complex:\\

$$
\begin{tikzcd}[row sep=1cm, column sep=2cm]
    0\arrow{dr}{0}&\arrow{l}  C_2\arrow{dr}{\hat{\rho}_2}&\arrow{l}[swap]{\hat{\partial}_2^C} C_1\arrow{dr}{\hat{\rho}_1} &\arrow{l}[swap]{\hat{\partial}_1^C}C_0 \arrow{dr}{\hat{\rho}_0}&\arrow{l}[swap]{\hat{\partial}_0^C}0\\
    0&\arrow{l} \hat{\n{G}}_2&\arrow{l}{\hat{\partial}_2^{\n{G}}} \hat{\n{G}}_1&\arrow{l}{\hat{\partial}_1^{\n{G}}} \hat{\n{G}}_0&\arrow{l}{\hat{\partial}_0^{\n{G}}} 0
\end{tikzcd}
$$

\noindent
 Let us compute $(\delta_1p_*^\gamma)_n$. For $n=2$
\begin{equation*}
	(\delta_1p_*^\gamma)_2\;=\;\widehat{\partial}^{\n{G}}_2\circ (p_*^\gamma)_2
\end{equation*}
since $(p_*^\gamma)_3=0$.
So $(\delta_1p_*^\gamma)_2$ send $p$  to $\widehat{\partial}^{\n{G}}_2(\gamma)$ and the other faces are sent to zero. For $n=1$,
\begin{equation*}
	(\delta_1(p_*^\gamma))_1\;=\;( p_*^\gamma)_2\circ \widehat{\partial}^\f{C}_2
\end{equation*}
since $(p_*^\gamma)_1=0$. Therefore, $(\delta_1( p_*^\gamma))_1$ sends the boundary of $p$  to $\pm \gamma$, where the sign depends on the orientation. Finally $(\delta_1p_*^\gamma)_n=0$ if $n=0$ and $n\geqslant3$.
 So $B_{p_*^\gamma}$ only depends on $p$ and on its boundary. We can see this in Fig. \ref{fig:B_gammap}.
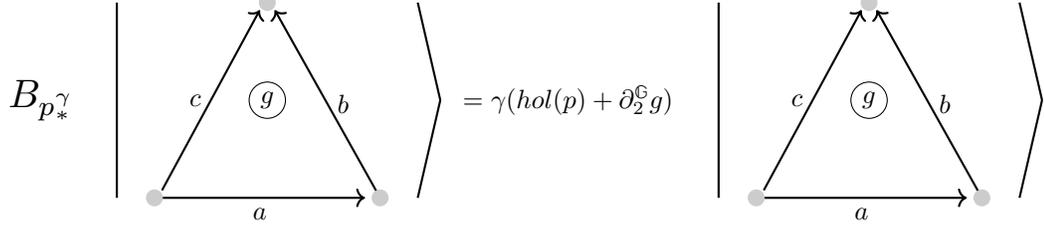
\begin{figure}[H]
	\centering
	\begin{tikzpicture}
		\node[anchor=south] at (5, 3.2) {acting over a localized plaquette $p \in K_2$ with $g \in \n{G}_2$ and $a, b, c \in \n{G}_1$}; 
		\draw[->, thick] (0,0) -- (2.8,0) node[midway, below] {$a$}; 
		\draw[->, thick] (3,0) -- (1.6,2.498) node[midway, right] {$b$}; 
		\draw[->, thick]   (0.1,0.1) --(1.4,2.498) node[midway, left] {$c$}; 
		
		\filldraw[gray!40] (0,0) circle (3pt);
		\filldraw[gray!40] (3,0) circle (3pt);
		\filldraw[gray!40] (1.5,2.598) circle (3pt);
		
		\node[draw, circle, inner sep=2pt] at (1.5,1.299) {$g$}; 
		\draw[thick] (-0.5, 0) -- (-0.5, 2.598); 
		
		\draw[thick] (3.5,0) -- (3.8, 1.299); 
		
		\draw[thick] (3.8,1.299) -- (3.5, 2.598); 
		\node[scale=1.5] at (-1.5, 1.299) { $B_{ p_*^\gamma}$}; 
		\node[scale=1] at (5.5, 1.299) { $=\gamma(hol(p)+\partial_2^{\n{G}} g)$}; 
		
		\draw[->, thick] (8,0) -- (10.8,0) node[midway, below] {$a$}; 
		\draw[->, thick] (11,0) -- (9.6,2.498) node[midway, right] {$b$}; 
		\draw[->, thick] (8.1,0.1) -- (9.4,2.498) node[midway, left] {$c$}; 
		
		\filldraw[gray!40] (8,0) circle (3pt);
		\filldraw[gray!40] (11,0) circle (3pt);
		\filldraw[gray!40] (9.5,2.598) circle (3pt);
		
		\node[draw, circle, inner sep=2pt] at (9.5,1.299) {$g$}; 
		\draw[thick] (7.5, 0) -- (7.5, 2.598); 
		
		\draw[thick] (11.5,0) -- (11.8, 1.299); 
		
		\draw[thick] (11.8,1.299) -- (11.5, 2.598); 
	\end{tikzpicture}
	\caption{\label{fig:B_gammap} Locally supported \textit{plaquette clock operator}, where $hol \left( p \right) = a+b-c$ . }
\end{figure}

Finally, consider $l^\alpha_*\in {\rm Hom}_0(\f{C},\n{G})_1$
which sends the edge $l\in \f{K}_1$ in the character $\alpha\in\widehat{\n{G}}_0$ and is zero on 
all the other elements of the simplicial complex.
Here, one has that
\begin{equation*}
	(\delta_1l^\alpha_*)_1\;=\;\widehat{\partial}^\n{G}_1\circ (l^\alpha_*)_1
\end{equation*}
since $(l^\alpha_*)_2=0$.
So $(\delta_1l^\alpha_*)_1$ sends all the edges to zero, except  $l$. The latter   is sent to $\widehat{\partial}^\n{G}_1(\alpha)$. For
$n=0$
\begin{equation*}
	(\delta_1l^\alpha_*)_1\;=\;(l^\alpha_*)_1\circ \widehat{\partial}^\f{C}_1
\end{equation*}
since $(l^\alpha_*)_0=0$. Then, $(\delta_1l^\alpha_*)_1$
acts trivially on all the vertex, except the one in the boundary of $l$. The latter are sent to $\pm \alpha$ according to the orientation.
Finally $(\delta_1l^\alpha_*)_n=0$ for all $n\geqslant2$.
 This is shown in Fig. \ref{fig:B_alphal}.

\begin{figure}[H]
	\centering
	\begin{tikzpicture}
		\node[anchor=south] at (5, 3.2) {acting over a localized link $l \in K_1$ with $g \in \n{G}_1$ and $a, b \in \n{G}_0$}; 
		\draw[->, thick] (0.2,1.299) -- (2.7,1.299)node[midway, below] {$g$}; 
		
		\node[draw, circle, inner sep=2pt] at (0,1.299) {$a$}; 
		\node[draw, circle, inner sep=2pt] at (3,1.299) {$b$}; 
		
		\draw[thick] (-0.5, 0) -- (-0.5, 2.598); 
		
		\draw[thick] (3.5,0) -- (3.8, 1.299); 
		
		\draw[thick] (3.8,1.299) -- (3.5, 2.598); 
		\node[scale=1.5] at (-1.5, 1.299) { $B_{ l_*^\alpha}$}; 
		\node[scale=1] at (5.5, 1.299) { $=\alpha(b-a+\partial_1^{\n{G}} g)$}; 
		
		\draw[->, thick] (8.2,1.299) -- (10.7,1.299)node[midway, below] {$g$}; 
		
		\node[draw, circle, inner sep=2pt] at (8,1.299) {$a$}; 
		\node[draw, circle, inner sep=2pt] at (11,1.299) {$b$}; 
		\draw[thick] (7.5, 0) -- (7.5, 2.598); 
		
		\draw[thick] (11.5,0) -- (11.8, 1.299); 
		
		\draw[thick] (11.8,1.299) -- (11.5, 2.598); 
	\end{tikzpicture}
	\caption{\label{fig:B_alphal} Locally supported \textit{link clock operator}.}
\end{figure}
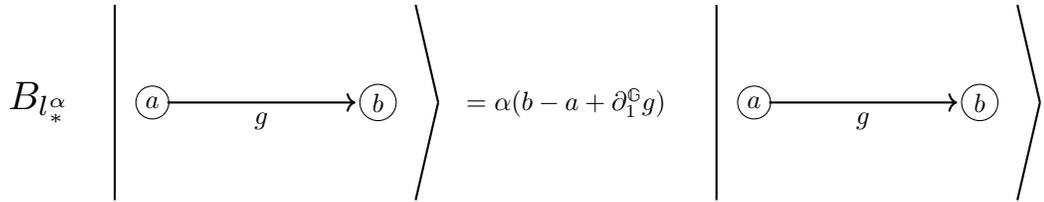

Therefore, using the orthogonal relation of the characters (\cite{issacs}), the operators $B_p^0$ and $B_l^0$ outlined in Definition \ref{def:local_operators} look like:
\begin{figure}[H]
	\centering
	\begin{tikzpicture}
		\node[anchor=south] at (5, 3.2) {acting over a localized plaquette $p \in K_2$ with $g \in \n{G}_2$ and $a, b, c \in \n{G}_1$}; 
		\draw[->, thick] (0,0) -- (2.8,0) node[midway, below] {$a$}; 
		\draw[->, thick] (3,0) -- (1.6,2.498) node[midway, right] {$b$}; 
		\draw[->, thick]  (0.1,0.1)--(1.4,2.498) node[midway, left] {$c$}; 
		
		\filldraw[gray!40] (0,0) circle (3pt);
		\filldraw[gray!40] (3,0) circle (3pt);
		\filldraw[gray!40] (1.5,2.598) circle (3pt);
		
		\node[draw, circle, inner sep=2pt] at (1.5,1.299) {$g$}; 
		\draw[thick] (-0.5, 0) -- (-0.5, 2.598); 
		
		\draw[thick] (3.5,0) -- (3.8, 1.299); 
		
		\draw[thick] (3.8,1.299) -- (3.5, 2.598); 
		\node[scale=1.5] at (-1.5, 1.299) { $B_{p }^0$}; 
		\node[scale=1] at (5.5, 1.299) { $=\delta(-\partial_2^{\n{G}}g,a+b-c)$}; 
		
		\draw[->, thick] (8,0) -- (10.8,0) node[midway, below] {$a$}; 
		\draw[->, thick] (11,0) -- (9.6,2.498) node[midway, right] {$b$}; 
		\draw[->, thick] (8.1,0.1)--(9.4,2.498) node[midway, left] {$c$}; 
		
		\filldraw[gray!40] (8,0) circle (3pt);
		\filldraw[gray!40] (11,0) circle (3pt);
		\filldraw[gray!40] (9.5,2.598) circle (3pt);
		
		\node[draw, circle, inner sep=2pt] at (9.5,1.299) {$g$}; 
		\draw[thick] (7.5, 0) -- (7.5, 2.598); 
		
		\draw[thick] (11.5,0) -- (11.8, 1.299); 
		
		\draw[thick] (11.8,1.299) -- (11.5, 2.598); 
	\end{tikzpicture}
	\caption{\label{fig:B_p^0} Locally supported \textit{plaquette operator}. }
\end{figure}
\begin{figure}[H]
	\centering
	\begin{tikzpicture}
		\node[anchor=south] at (5, 3.2) {acting over a localized link $l \in K_1$ with $g \in \n{G}_1$ and $a, b \in \n{G}_0$}; 
		\draw[->, thick] (0.2,1.299) -- (2.7,1.299)node[midway, below] {$g$}; 
		
		\node[draw, circle, inner sep=2pt] at (0,1.299) {$a$}; 
		\node[draw, circle, inner sep=2pt] at (3,1.299) {$b$}; 
		
		\draw[thick] (-0.5, 0) -- (-0.5, 2.598); 
		
		\draw[thick] (3.5,0) -- (3.8, 1.299); 
		
		\draw[thick] (3.8,1.299) -- (3.5, 2.598); 
		\node[scale=1.5] at (-1.5, 1.299) { $B_{ l}^0$}; 
		\node[scale=1] at (5.5, 1.299) { $=\delta(-\partial_1^{\n{G}} g, b-a)$}; 
		
		\draw[->, thick] (8.2,1.299) -- (10.7,1.299)node[midway, below] {$g$}; 
		
		\node[draw, circle, inner sep=2pt] at (8,1.299) {$a$}; 
		\node[draw, circle, inner sep=2pt] at (11,1.299) {$b$}; 
		\draw[thick] (7.5, 0) -- (7.5, 2.598); 
		
		\draw[thick] (11.5,0) -- (11.8, 1.299); 
		
		\draw[thick] (11.8,1.299) -- (11.5, 2.598); 
	\end{tikzpicture}
	\caption{\label{fig:B_l^0} Locally supported \textit{link operator}.}
\end{figure}

\section{Appendix B: Simplicial Complex}\label{sec:simplicial_complex}

The concept of a \emph{simplicial complex} serves to generalize the notion of triangulation, providing a powerful framework for understanding and analyzing topological spaces. In this discussion, we shall present only a brief and introductory overview of this topic. For a more comprehensive and detailed treatment, we refer the reader to the classical monographs \cite{Hatcher,Rotman} and, in particular, to \cite[Section 3.1]{spanier}.

\medskip

Let ${V}$ be a set. A  {simplicial complex} $\f{K}$ is a family of finite nonempty subsets of ${V}$, called \emph{simplexes}, such that
	\begin{enumerate}
		\item if $v\in V$, then $\{v\}\in \f{K}$;
		\item if $s\in \f{K}$ and $s'\subset s$, then $s'\in \f{K}$.
	\end{enumerate}
	One refers to  $V$ as the \emph{vertex set} of $\f{K}$ and sometimes the notation $V= {\rm Vert}(\f{K})$ is used. A simplex $s\in \f{K}$ containing exactly $n+1$ distinct vertices is called an $n$-simplex. The subset of all $n$-simplices is denoted with $\f{K}_n\subset K$.
	If $s'\subset s$ the $s'$ is called a \emph{face} of $s$.
	
\medskip

We are in a position to introduce the concept of 	\emph{oriented} simplicial complex
as in 	\cite[Section 4.1]{spanier}. An oriented $n$-simplex of $\f{K}$ is an $n$-simplex $s\in \f{K}_n$ together with an equivalence class of total ordering of the vertices of $s$, two orderings being equivalent if they differ by an even permutation of the vertices.
If $v_0,v_1\ldots,v_n$	are the vertices of $s$, then $[v_0,\dots,v_n]$ denotes the oriented 
	$n$-simplex $s$ together with the equivalence class of the ordering 
	$v_0<v_1<\ldots<v_n$. For every vertex $v\in V$ there is a unique oriented 0-simplex $[v]$. For every $n$-simplex with $n\geqslant 1$ there correspond exactly two oriented 
	$n$-simplexes. If  $[v_0,\dots,v_n]$ denotes an oriented $n$-simplex, then $-[v_0,\dots,v_n]$ will denote the same 	simplex with the opposite orientation obtained by an odd permutation  of the ordering 
	$v_0<v_1<\ldots<v_n$. Finally, we will use   $[v_0,\dots,\hat{v}_i,\dots,v_n]$ to denote
	 the oriented $(n-1)$-simplex obtained by removing the vertex $v_i$ from $[v_0,\dots,v_n]$.  From now on, with a little abuse of notation, we will use the same symbol $\f{K}_n$ to indicate the set of oriented $n$-simplex. 
In Fig. \ref{fig:simplices}, one can see the geometric realization of the building blocks for {oriented} simplicial complexes.

\begin{figure}[H]
	\centering
\begin{tikzpicture}
	\fill (-2,0) circle (1.5pt) {};
	\node (0) at (-2,-0.5) {$0$-simplex};
	
	\fill (0,0) circle (1.5pt) {};
	\fill (0,1) circle (1.5pt) {};
	\draw (0,0) -- (0,1) {};
	\draw[->] (0,0) -- (0,0.5) {}; 
	\node (1) at (0,-0.5) {$1$-simplex};
	
	\fill (2,0) circle (1.5pt) {};
	\fill (4,0) circle (1.5pt) {};
	\fill (3,1.732) circle (1.5pt) {}; 
	\path[shade,draw] (2,0) -- (4,0) -- (3,1.732) -- cycle; 
	\draw[thick, ->] (3.4,0.6) arc (0:250:0.3);
	\node (2) at (3,-0.5) {$2$-simplex};
\end{tikzpicture}
\caption{\label{fig:simplices}Basic oriented simplices for the analysis of (two-dimensional) surfaces.}
\end{figure}
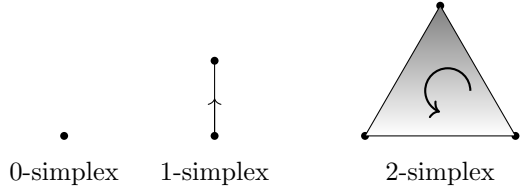

\medskip

Denote by $\f{C}_n$ the free abelian group with basis the oriented $n$-complexes of $\f{K}_n$
with the relations $s_1+s_2=0$ if $s_1$ and $s_2$ are   oriented $n$-complexes determined by the same vertices but with opposite orientation. The \emph{boundary operator} $\partial_n^\f{C} : \f{C}_n \to \f{C}_{n-1}$ captures how an oriented $n$-simplex interacts with its oriented lower-dimensional faces. Specifically, it maps an $n$-simplex to a signed sum of its $(n{-}1)$-dimensional faces, respecting orientation and structure.
More precisely, let $x=[v_0,\dots,v_n]\in \f{K}_n$. Then, by definition
\begin{equation}\label{def:boundary}
    \partial_n^\f{C} x \;:=\; \sum_{y \in \f{K}_{n-1}} \epsilon_{x,y}\;  y
\end{equation}
where
\[
\epsilon_{x,y}\;: =\;
\begin{cases}
+1&\;  \text{if } y =[v_0, \dots, \hat{v}_i, \dots, v_n]\; \text{ for some even } i\;, \\
-1&\;  \text{if } y =  [v_0, \dots, \hat{v}_i, \dots, v_n]\; \text{ for some odd } i\;, \\
\phantom{+}0&\;  \text{if }\; y \text{ is not a face of } x\;.
\end{cases}
\]
A direct computation shows that $\partial_n^\f{C}\circ   \partial_{n+1}^\f{C}=0$ for every $n\in\n{N}_0$.

\medskip

We also need the coboundary operator $\widehat{\partial}^\f{C}_{n+1}:\f{C}_n\to \f{C}_{n+1}$ defined by 
\begin{equation}\label{def:coboundary}
    \widehat{\partial}_{n+1}^\f{C} x\;:=\;\sum_{z\in \f{K}_{n+1}}\epsilon_{z,x}\;z\;.
\end{equation}
Also in this case one gets that $\widehat{\partial}_{n+1}^\f{C}\circ   \widehat{\partial}_{n}^\f{C}=0$ for every $n\in\n{N}_0$.

\medskip

We are now in a position to prove the duality between the maps $\delta^p$ and $\delta_{p+1}$
expressed by Lemma \ref{lem:duality}.
\begin{proof}[{Proof of Lemma \ref{lem:duality}}]
The proof is an explicit computation.
Notice that
\begin{align*}
    \gamma(\delta^p (f))\;&=\;\prod_{n\in\NN_0}\prod_{x\in \f{K}_n}\gamma_n(x)[(\delta^p f)_n(x)]\\
    &=\;\prod_{n\in\NN_0}\prod_{x\in \f{K}_n}\gamma_n(x) [f_{n-1}(\partial^\f{C}_n (x))-(-1)^p\partial_{n-p}^\n{G} (f_n(x))]\\
    &=\;\frac{\prod_{n\in\NN_0}\prod_{x\in \f{K}_n}\gamma_n(x)[f_{n-1}(\partial_n^\f{C} (x))]}{\prod_{n\in\NN_0}\prod_{x\in \f{K}_n}\gamma_n(x)[(-1)^p\partial_{n-p}^\n{G} (f_n(x))]}\\
    &=\;\frac{\prod_{n\in\NN_0}\prod_{x\in \f{K}_n}\gamma_n(x)[f_{n-1}(\partial_n^\f{C} (x))]}{\prod_{n\in\NN_0}\prod_{x\in \f{K}_n}\widehat{\partial}_{n-p}^\n{G}(\gamma_n(x)[(-1)^p f_n(x)]
    )}\;.
\end{align*}    
On the other hand,
\begin{align*}
    \delta_{p+1}(\gamma(f))\;&=\;\prod_{n\in\NN_0}\prod_{x\in \f{K}_n} (\delta_{p+1}\gamma)_n(x)
    [f_n(x)]\\
    &=\;\prod_{n\in\NN_0}\prod_{x\in \f{K}_n}\left(\gamma_{n+1}(\widehat{\partial}_{n+1}^\f{C} (x))-(-1)^{p}\widehat{\partial}_{n-p}^\n{G}( \gamma_n(x))\right)[f_n(x)]\\
    &=\;\frac{\prod_{n\in\NN_0}\prod_{x\in \f{K}_n}\gamma_{n+1}(\widehat{\partial}_{n+1}^\f{C} (x))[f_{n}(x)]}{\prod_{n\in\NN_0}\prod_{x\in \f{K}_n}\widehat{\partial}_{n-p}^\n{G}(\gamma_n(x)[(-1)^p f_n(x)])}\;.
\end{align*}
As the denominators are equal, it is only necessary to show the equality of the numerators.
Notice that, as $f_{-1}$ is trivial, one deduces that
\begin{align*}
    \prod_{n\in\NN_0}\prod_{x\in \f{K}_n}\gamma_n(x)[f_{n-1}(\partial_n^\f{C} (x))]\;&=\;  \prod_{n\in\NN}\prod_{x\in \f{K}_n}\gamma_n(x)[f_{n-1}(\partial_n^\f{C} (x))]\\\
    &=\;\prod_{n\in\NN}\prod_{x\in \f{K}_n}\gamma_n(x)\left[f_{n-1}\left(\sum_{y\in \f{K}_{n-1}}\epsilon_{x,y}\;y\right)\right]\\
    &=\prod_{n\in\NN}\prod_{x\in \f{K}_n}\prod_{y\in \f{K}_{n-1}}\gamma_n(x)[f_{n-1}(\epsilon_{x,y}\;y)]\;.
\end{align*}
On the other hand 
\begin{align*}
    \prod_{n\in\NN_0}\prod_{x\in \f{K}_n}\gamma_{n+1}(\widehat{\partial}_{n+1}^\f{C} (x))[f_{n}(x)]\;&=\;\prod_{n\in\NN_0}\prod_{x\in \f{K}_n}\gamma_{n+1}\left(\sum_{z\in \f{K}_{n+1}} \epsilon_{z,x}\; z\right)[f_{n}(x)]\\
    &=\;\prod_{n\in\NN_0}\prod_{x\in \f{K}_n}\prod_{z\in \f{K}_{n+1}}\gamma_{n+1}( \epsilon_{z,x}\; z)[f_{n}(x)]\\
    &=\;\prod_{n\in\NN_0}\prod_{z\in \f{K}_{n+1}}\prod_{x\in \f{K}_n}\gamma_{n+1}( z)[f_{n}(\epsilon_{z,x}\;x)]\;.
\end{align*}

By a change of variable $n\to n-1$ in the second equality one gets the equality of the numerators.
\end{proof}

\medskip

It is important to note that if $\partial^\n{G}_j$ is trivial for all $j \in \ZZ$, then the denominators of both fractions become equal to 1. Consequently, one recovers the usual boundary and coboundary maps used in homological codes (see   \cite[eq. (4.13)]{Vrana} and the subsequent paragraph).

\begin{bibdiv}
	\begin{biblist}


\bib{Fannes}{article}{
	author={Alicki, R.},
	author={Fannes, M.},
	author={Horodecki, M.},
	title={A statistical mechanics view on Kitaev's proposal for quantum memories},
	journal={Journal of Physics. A. Mathematical and Theoretical},
	volume={40},
	date={2007},
	number={24},
	pages={6451--6467},
	issn={1751-8113,1751-8121}
}


\bib{Bols_2025}{article}{
	author={Bols, Alex},
	author={Vadnerkar, Siddharth},
	title={Classification of the Anyon Sectors of Kitaev’s Quantum Double Model},
	journal={Commun. Math. Phys.},
	volume={406},
	number={8},
	date={2025},
	issn={1432-0916}
}

\bib{Bratelli}{book}{
	author={Bratteli, Ola},
	author={Robinson, Derek W.},
	title={Operator algebras and quantum statistical mechanics. 1},
	edition={2},
	series={Texts and Monographs in Physics},
	note={$C^*$- and $W^*$-algebras, symmetry groups, decomposition of states},
	publisher={Springer-Verlag},
	address={New York},
	date={1987}
}

\bib{bratelli2}{book}{
	author={Bratteli, Ola},
	author={Robinson, Derek W.},
	title={Operator algebras and quantum statistical mechanics. 2},
	edition={2},
	note={Equilibrium states. Models in quantum statistical mechanics},
	series={Texts and Monographs in Physics},
	publisher={Springer-Verlag},
	address={Berlin},
	date={1997}
}

\bib{brown}{article}{
	author={Brown, R.},
	title={Cohomology with chains as coefficients},
	journal={Proc. London Math. Soc. (3)},
	volume={14},
	date={1964},
	pages={545--565}
}


\bib{dealmeida2017topological}{article}{
	author={Costa de Almeida, R.},
	author={Ibieta-Jimenez, J. P.},
	author={Lorca Espiro, J.},
	author={Teotonio-Sobrinho, P.},
	title={Topological Order from a Cohomological and Higher Gauge Theory Perspective},
	eprint={1711.04186},
	date={2017}
}

\bib{Cui_2020}{article}{
	author={Cui, Shawn X.},
	author={Ding, Dawei},
	author={Han, Xizhi},
	author={Penington, Geoffrey},
	author={Ranard, Daniel},
	author={Rayhaun, Brandon C.},
	author={Zhou, Shangnan},
	title={Kitaev’s quantum double model as an error correcting code},
	journal={Quantum},
	volume={4},
	date={2020},
	issn={2521-327X}
}


\bib{denittis2025}{article}{
	author={De~Nittis, Giuseppe},
	title={Topological phases of non-interacting systems: A general approach based on states},
	eprint={2502.03590},
	date={2025},
	label={DeN25}
}



\bib{Ferreira:2015uia}{article}{
	author={Ferreira, Miguel Jorge Bernabé},
	author={Jimenez, Juan Pablo Ibieta},
	author={Padmanabhan, Pramod},
	author={Sobrinho, Paulo Teôtonio},
	title={A Recipe for Constructing Frustration-Free Hamiltonians with Gauge and Matter Fields in One and Two Dimensions},
	journal={J. Phys. A},
	volume={48},
	number={48},
	pages={485206},
	date={2015},
	eprint={1503.07601}
}

\bib{2010PhRvL.104m0502F}{article}{
	author={Fidkowski, Lukasz},
	title={Entanglement Spectrum of Topological Insulators and Superconductors},
	journal={Phys. Rev. Lett.},
	volume={104},
	number={13},
	date={2010},
	pages={130502},
	eprint={0909.2654}
}

\bib{Fujii2015}{book}{
	author={Fujii, Keisuke},
	title={Quantum Computation with Topological Codes: From Qubit to Topological Fault-Tolerance},
	series={SpringerBriefs in Mathematical Physics},
	volume={8},
	publisher={Springer Singapore},
	date={2015}
}



\bib{Hatcher}{book}{
	author={Hatcher, Allen},
	title={Algebraic topology},
	publisher={Cambridge University Press},
	address={Cambridge},
	date={2002},
	pages={xii+544},
	isbn={0-521-79160-X}
}


\bib{issacs}{book}{
	author={Isaacs, I. Martin},
	title={Character theory of finite groups},
	series={Pure and Applied Mathematics},
	volume={69},
	publisher={Academic Press},
	address={New York-London},
	date={1976}
}



\bib{Kitaev_2003}{article}{
	author={Kitaev, A. Yu.},
	title={Fault-tolerant quantum computation by anyons},
	journal={Ann. Phys.},
	volume={303},
	date={2003},
	number={1},
	pages={2--30}
}


\bib{2008PhRvL.101a0504L}{article}{
	author={Li, Hui},
	author={Haldane, F. D. M.},
	title={Entanglement Spectrum as a Generalization of Entanglement Entropy: Identification of Topological Order in Non-Abelian Fractional Quantum Hall Effect States},
	journal={Phys. Rev. Lett.},
	volume={101},
	number={1},
	date={2008},
	pages={010504},
	eprint={0805.0332},
}

\bib{Lorca2025}{article}{
	author={Lorca Espiro, Javier},
	title={A dualization approach to the ground state subspace classification of abelian higher gauge symmetry models},
	journal={J. Math. Phys.},
	volume={66},
	date={2025}
}


\bib{Murphy}{book}{
	author={Murphy, Gerard J.},
	title={$C^*$-Algebras and Operator Theory},
	publisher={Academic Press},
	date={1990},
}


\bib{Naaijkens-2011}{article}{
	author={Naaijkens, Pieter},
	title={Localized endomorphisms in Kitaev's toric code on the plane},
	journal={Rev. Math. Phys.},
	volume={23},
	date={2011},
	number={4},
	pages={347--373}
}



\bib{Pachos}{book}{
	author={Pachos, Jiannis K.},
	title={Introduction to topological quantum computation},
	publisher={Cambridge University Press},
	address={Cambridge},
	date={2012},
	pages={xii+206},
	isbn={978-1-107-00504-4}
}

\bib{weil_operators}{article}{
	author={Prasad, Amritanshu},
	title={On character values and decomposition of the Weil representation associated to a finite abelian group},
	journal={J. Anal.},
	volume={17},
	date={2009},
}



\bib{Rotman}{book}{
	author={Rotman, Joseph J.},
	title={An introduction to algebraic topology},
	series={Graduate Texts in Mathematics},
	volume={119},
	publisher={Springer-Verlag},
	address={New York},
	date={1988}
}


\bib{spanier}{book}{
	author={Spanier, Edwin H.},
	title={Algebraic Topology},
	publisher={Springer New York},
	address={New York, NY},
	date={1966},
	isbn={1-4684-9322-1},
}




\bib{Vrana}{article}{
	author={Vrana, Péter},
	author={Farkas, Máté},
	title={Homological codes and abelian anyons},
	journal={Rev. Math. Phys.},
	volume={31},
	date={2019},
	number={10},
	pages={1950038, 52}
}





\end{biblist}
\end{bibdiv}




\end{document}